\keywords{type theory, equality checking, proof assistant}
\definecolor{rulenameColor}{rgb}{0.5,0.5,0.5}
\newcommand{\addresseshere}{%
  \enddoc@text\let\enddoc@text\relax
}
\let\c@thm\relax
\let\c@prop\relax
\let\c@lem\relax
\let\c@cor\relax
\let\c@defi\relax
\let\c@exa\relax
\theoremstyle{plain}
\newtheorem{thm}{Theorem}[section]
\newtheorem{prop}[thm]{Proposition}
\newtheorem{lem}[thm]{Lemma}
\newtheorem{cor}[thm]{Corollary}
\theoremstyle{definition}
\newtheorem{defi}[thm]{Definition}
\newtheorem{exa}[thm]{Example}
\crefname{thm}{Theorem}{Theorems}
\crefname{cor}{Corollary}{Corollaries}
\crefname{lem}{Lemma}{Lemmas}
\crefname{prop}{Proposition}{Propositions}
\crefname{defi}{Definition}{Definitions}
\crefname{exa}{Example}{Examples}
\newcommand{\defemph}[1]{\emph{#1}} 
\newcommand{\dummy}{{\star}} 
\newcommand{\mto}{{\mapsto}} 
\newcommand{\finmap}[1]{\langle #1 \rangle} 
\newcommand{\Inst}[4]{#1 \in \mathrm{Inst}(#2, #3, #4)} 
\newcommand{\upto}[2]{#1_{(#2)}} 
\newcommand{\act}[1]{#1_*} 
\newcommand{\upact}[2]{#1_{(#2)*}} 
\newcommand{\residue}[2]{#1/#2} 
\newcommand{\residueInst}[1]{#1^r} 
\newcommand{\basic}[3]{\mathbb{I}(#1,#2,#3)} 
\newcommand{\basicCod}[3]{\mathbb{E}(#1,#2,#3)} 
\newcommand{\basicComp}[2][]{\mathbb{J}_{#1}(#2)} 
\newcommand{\basicCompCod}[2][]{\mathbb{F}_{#1}(#2)} 
\newcommand{\bnfis}{\mathrel{\;{:}{:}{=}\ }}
\newcommand{\bnfor}{\mathrel{\;\big|\ \ }}
\newenvironment{ruleframe}{\begin{mdframed}[linecolor=gray,linewidth=0pt]}{\end{mdframed}}
\newcommand{\synclass}[1]{{\scaleobj{0.9}{\mathsf{#1}}}}
\newcommand{\Ty}{\synclass{Ty}} 
\newcommand{\Tm}{\synclass{Tm}} 
\newcommand{\EqTy}{\synclass{EqTy}} 
\newcommand{\EqTm}{\synclass{EqTm}} 
\newcommand{\sym}[1]{\mathsf{#1}} 
\newcommand{\symS}{\sym{S}} 
\newcommand{\symM}{\sym{M}} 
\newcommand{\arity}[1]{\mathsf{ar}(#1)} 
\newcommand{\rawRule}[2]{#1 \Longrightarrow #2} 
\newcommand{\natty}[2]{\tau_{#1}(#2)} 
\newcommand{\set}[1]{\{#1\}} 
\newcommand{\abstr}[1]{\{#1\}} 
\newcommand{\B}{\mathcal{b}} 
\newcommand{\BB}{\mathcal{B}} 
\newcommand{\J}{\mathcal{j}} 
\newcommand{\JJ}{\mathcal{J}} 
\newcommand{\var}[1]{\mathsf{#1}} 
\newcommand{\isType}[1]{#1\;\mathsf{type}}
\newcommand{\isCtx}[1]{#1\;\mathsf{vctx}}
\newcommand{\isExt}[1]{#1\;\mathsf{mctx}}
\newcommand{\natur}{\types^{\natural}} 
\newcommand{\of}{{:}} 
\newcommand{\asset}[1]{\{\kern-0.237em| #1 |\kern-0.237em\}} 
\newcommand{\fv}[1]{\mathsf{fv}(#1)} 
\newcommand{\mv}[1]{\mathsf{mv}(#1)} 
\newcommand{\bv}[1]{\mathsf{bv}(#1)} 
\newcommand{\bye}{\;\mathsf{by}\;} 
\newcommand{\objmv}[1]{|#1|_\mathsf{obj}} 
\newcommand{\types}{\vdash} 
\newcommand{\plug}[2]{#1{\setlength{\fboxrule}{0.5pt}\setlength{\fboxsep}{1pt}\fbox{\vphantom{$#1$}$#2$}}} 
\newcommand{\emptyCtx}{[]} 
\newcommand{\emptyExt}{[]} 
\newcommand{\genapp}[1]{\widehat{#1}} 
\newcommand{\rulename}[1]{\textnormal{\textsc{#1}}}
\newcommand{\rref}[1]{\hyperref[#1]{\rulename{#1}}} 
\newcommand{\inCase}[1]{\smallskip\par\noindent\textit{Case \rref{#1}}:}
\newcommand{\inCaseText}[1]{\smallskip\par\noindent\textit{Case #1}:}
\newcommand{\inferenceRule}[3]{\inferrule*[lab={\label{#1}\rulename{\color{rulenameColor}#1}}]{#2}{#3}}
\newcommand{\cmp}{\sim}
\newcommand{\cmpN}{\sim_\mathrm{n}}
\newcommand{\cmpE}{\sim_\mathrm{e}}
\newcommand{\principal}[1]{\wp(#1)} 
\newcommand{\compute}{\triangleright} 
\newcommand{\computeC}{\triangleright_\mathrm{c}} 
\newcommand{\computeP}{\triangleright_\mathrm{p}} 
\newcommand{\matches}[4]{#1 \types #2 \triangleright #3 \leadsto #4}  
\newcommand{\notmatches}[3]{#1 \types #2 \triangleright #3 \, {\not\leadsto}}  
\title[An extensible equality checking algorithm]{An extensible equality checking algorithm\texorpdfstring{\\}{} for dependent type theories}
\author[A.~Bauer]{Andrej~Bauer}
\author[A.~Petković Komel]{Anja~Petković Komel}
\address{Faculty of Mathematics and Physics, University of Ljubljana}
\email{Andrej.Bauer@andrej.com, Anja.Petkovic@fmf.uni-lj.si}
\begin{document}

\begin{abstract}
  We present a general and user-extensible equality checking algorithm that is applicable to a large class of type theories. The algorithm has a type-directed phase for applying extensionality rules and a normalization phase based on computation rules, where both kinds of rules are defined using the type-theoretic concept of object-invertible rules. We also give sufficient syntactic criteria for recognizing such rules, as well as a simple pattern-matching algorithm for applying them. A third component of the algorithm is a suitable notion of principal arguments, which determines a notion of normal form. By varying these, we obtain known notions, such as weak head-normal and strong normal forms.
  We prove that our algorithm is sound.
  We implemented it in the Andromeda~2 proof assistant, which supports user-definable type theories. The user need only provide the equality rules they wish to use, which the algorithm automatically classifies as computation or extensionality rules, and select appropriate principal arguments.
\end{abstract}

\maketitle

\section{Introduction}
\label{sec:introduction}

Equality checking algorithms are essential components of proof assistants based on type theories~\cite{coq-site,agda-site,lean,coq-correct,gilbert19:_defin_k,agda-conversion}. They free users from the burden of proving scores of mostly trivial judgemental equalities, and provide computation-by-normalization engines. Some systems~\cite{dedukti-site,sprinklesAgda} go further by allowing user extensions to the built-in equality checkers.

The situation is less pleasant in a proof assistant that supports arbitrary user-definable theories, such as Andromeda~2~\cite{andromeda-site,andromeda-1}, where in general no equality checking algorithm may be available.
Nevertheless, the proof assistant should still provide support for equality checking that is easy to use and works well in the common, well-behaved cases. For this purpose we have developed and implemented a sound and extensible equality checking algorithm for user-definable type theories.

The generality of type theories supported by Andromeda~2 presents a significant challenge in devising a useful equality checking algorithm. Many commonly used ideas and notions that one encounters in specific type theories do not apply anymore: not every rule can be classified either as an introduction or an elimination form, not every equation as either a $\beta$- or an $\eta$-rule, all terms must be fully annotated with types to ensure soundness, there may be no reasonable notion of normal form, or neutral form, etc. And of course, the user may easily define a theory whose equality checking is undecidable. In order to do better than just exhaustive proof search, some compromises must therefore be made and design decisions taken:
\begin{enumerate}
\item We work in the fully general setting of standard type theories (\cref{sec:finit-type-theor}).
\item We prefer ease of experimentation at the expense of possible non-termination or unpredictable behavior.
\item At the same time, soundness of the algorithm is paramount: any equation verified by it must be derivable in the theory at hand.
\item The algorithm should work well on well-behaved theories, and especially those seen in practice.
\end{enumerate}
The most prominent design goals missing from the above list are completeness and performance. The former cannot be achieved in full generality, as there are type theories with undecidable equality checking. We have expended enough energy looking for acceptably general sufficient conditions guaranteeing completeness to state with confidence that this task is best left for another occasion. Regarding performance, we freely admit that equality checking in Andromeda~2 is nowhere near the efficiency of established proof assistants. For this we blame not only the immaturity of the implementation, but also the generality of the situation, which simply demands that a price be paid in exchange for soundness.
We console ourselves with the fact that our equality checker achieves soundness and complete user-extensibility at the same time.

\subsection*{Contributions}
\label{sec:contributions}

We present a \emph{general equality checking algorithm} that is applicable to a large class of type theories, the \emph{standard type theories} of~\cite{bauer:_finit} (\cref{sec:finit-type-theor}).
The algorithm (\cref{sec:type-directed-equality-checking}) is fashioned after equality checking algorithms~~\cite{Harper-Stone,Abel-Scherer} that have a type-directed phase for applying extensionality rules (inter-derivable with $\eta$-rules), intertwined with a normalization phase based on computation rules ($\beta$-rules).
For the usual kinds of type theories (simply typed $\lambda$-calculus, Martin-Löf type theory, System F), the algorithm behaves like the well-known standard equality checkers.
We prove that our algorithm is \emph{sound} (\cref{sec:sound-algor}).

We define a general notion of \emph{computation} and \emph{extensionality rules} (\cref{sec:extens-comp-rules}),
using the type-theoretic concept of an \emph{object-invertible rule} (\cref{sec:invertible-rules}). We also provide sufficient \emph{syntactic criteria} for recognizing such rules, together with a simple pattern-matching algorithm for applying them.
A third component of the algorithm is a suitable notion of normal form, which guarantees correct execution of normalization and coherent interaction of both phases of the algorithm.
In our setting, normal forms are determined by a selection of \emph{principal arguments} (\cref{sec:principal-and-normalization}). By varying these, we obtain known notions, such as weak head-normal and strong normal forms.

We \emph{implemented} the algorithm in Andromeda~2 (\cref{sec:implementation}).
The user need only provide the equality rules they wish to use, which the algorithm automatically classifies either as computation or extensionality rules, rejects those that are of neither kind, and selects appropriate principal arguments.

Those readers who prefer to see examples before the formal development, may first take a peek at~\cref{sec:examples-code}, where we show how our work allows one to implement extensional type theory, and use the reflection rule to derive computation rules which are only available in their propositional form in intensional type theory.

\subsection*{Acknowledgments}

We thank Philipp G.\ Haselwarter for his support and discussions through which he generously shared ideas that helped get this work completed.
This material is based upon work supported by the U.S.~Air Force Office of Scientific Research under award number FA9550-17-1-0326, grant number 12595060, and award number FA9550-21-1-0024.


\section{Finitary type theories}
\label{sec:finit-type-theor}

We shall work with a variant of general dependent type theories~\cite{bauer20:gtt}, namely \emph{finitary type theories}, as described in~\cite{bauer:_finit} and implemented in Andromeda~2. We give here only an overview of the syntax of such theories and refer the reader to~\cite{bauer:_finit} for a complete exposition.

\subsection{Deductive systems}
\label{sec:deductive-system}

We first recall the general notion of a deductive system.
A (finitary) \defemph{closure rule} on a carrier set~$S$ is a pair $([p_1, \ldots, p_n], q)$ where $p_1, \ldots, p_n, q \in S$. The elements $p_1, \ldots, p_n$ are the \defemph{premises} and $q$ is the \defemph{conclusion} of the rule. A rule may be displayed as
\begin{equation*}
  \infer{p_1 \\ \cdots \\ p_n}{q}.
\end{equation*}
A \defemph{deductive system} on a set~$S$ is a family $C$ of closure rules on~$S$.
We say that $T \subseteq S$ is \defemph{deductively closed} for~$C$ when the following holds: for every rule $C_i = ([p_1, \ldots, p_n], q)$, if $\{p_1, \ldots, p_n\} \subseteq T$ then $q \in T$.
A \defemph{derivation} with \defemph{conclusion~$q \in S$} is a well-founded tree whose root is labeled by an index~$i$ of a closure rule $C_i = ([p_1, \ldots, p_n], q)$, and whose subtrees are derivations with conclusions~$p_1, \ldots, p_n$. We say that $q \in S$ is \defemph{derivable} if there exists a derivation with conclusion~$q$. The derivable elements of~$S$ form precisely the least deductively closed subset.

All deductive systems that we shall consider will have as their carriers the set of hypothetical judgements and boundaries, as described in \cref{sec:judg-bound}.

\subsection{Signatures and arities}
\label{sec:signatures-arities}

In a finitary type theory there are four \defemph{judgement forms}:
\begin{itemize}
\item ``$\isType{A}$'' asserting that $A$ is a type,
\item ``$t : A$'' asserting that $t$ is a term of type $A$,
\item ``$A \equiv B \bye \dummy_\Ty$'' asserting that types $A$ and $B$ are equal, and
\item ``$s \equiv t : A \bye \dummy_\Tm$'' asserting that terms $s$ and $t$ are equal at type~$A$.
\end{itemize}
We indicate these with tokens $\Ty$, $\Tm$, $\EqTy$ and $\EqTm$ respectively.
To each token there also corresponds a syntactic class. Expressions of class~$\Ty$ are the \emph{type expressions}, and those of class~$\Tm$ are the \emph{term expressions}.
These are formed using \emph{(primitive) symbols} and \emph{metavariables}, see \cref{sec:expressions}, each of which has an associated arity, as explained below. The symbols should be thought of as the primitive type and term formers, while the metavariables shall be used to refer to the premises of a rule, and as pattern variables in the equality checking algorithm.
The only expressions of syntactic classes $\EqTy$ and $\EqTm$ are the dummy expressions~$\dummy_\Ty$ and $\dummy_\Tm$, which we both write as~$\dummy$ when no confusion can arise. These are formality, to be used where one would normally record a proof term witnessing a premise, but the premise is a judgemental equality, which is proof irrelevant.

The \defemph{symbol arity $(c, [(c_1, n_1), \ldots, (c_k, n_k)])$} of a symbol~$\symS$ tells us that
\begin{enumerate}
\item the syntactic class of expressions built with $\symS$ is $c \in \set{\Ty, \Tm}$,
\item $\symS$ accepts $k \in \mathbb{N}$ arguments,
\item the $i$-th argument has syntactic class~$c_i \in \set{\Ty, \Tm, \EqTy, \EqTm}$ and binds~$n_i \in \mathbb{N}$ variables.
\end{enumerate}

\begin{exa}
  The arity of a type constant such as $\mathsf{bool}$ is $(\Ty, [])$, the arity of a binary term operation such as~$+$ is $(\Tm, [(\Tm, 0), (\Tm, 0)])$, and the arity of a quantifier such as the dependent product~$\Uppi$ is $(\Ty, [(\Ty, 0), (\Ty, 1)])$ because it is a type former taking two type arguments, with the second one binding one variable.
\end{exa}

The \defemph{metavariable arity} of a metavariable~$\symM$ is a pair $(c, n)$, where the \defemph{syntactic class} $c \in \set{\Ty, \Tm, \EqTy, \EqTm}$ indicates whether~$\symM$ is respectively a type, term, type equality, or term equality metavariable, and $n \in \mathbb{N}$ is the number of term arguments it accepts. The metavariables of syntactic classes $\Ty$ and $\Tm$ are the \defemph{object metavariables}, and they participate in formation of expressions, while those of syntactic classes $\EqTy$ and $\EqTm$ are the \defemph{equality metavariables}, and are used to refer to equational premises.

The information about symbol arities is collected in a \defemph{signature}, which maps each symbol to its arity. When discussing syntax, it is understood that such a signature has been given, even if we do not mention it explicitly.

\subsection{Expressions}
\label{sec:expressions}

The syntax of finitary type theories is summarized in the top part of
\cref{fig:syntax-general-type-theories}. There are three kinds: type expressions, term
expressions, and arguments.

\begin{figure}[htbp]
  \centering
  \small
  \begin{ruleframe}
  \begin{align*}
  \text{Type expression}\ A, B
  \bnfis& \symS(e_1, \ldots, e_n)   &&\text{type symbol application}\\
  \bnfor& \symM(t_1, \ldots, t_n)   &&\text{type metavariable application}
  \\
  \text{Term expression}\ s, t
  \bnfis& \var a                         &&\text{free variable}\\
  \bnfor& x                              &&\text{bound variable}\\
  \bnfor& \symS(e_1, \ldots, e_n)   &&\text{term symbol application}\\
  \bnfor& \symM(t_1, \ldots, t_n)   &&\text{term metavariable application}
  \\
  \text{Argument}\ e
  \bnfis& A           &&\text{type argument} \\
  \bnfor& t           &&\text{term argument} \\
  \bnfor& \dummy_\Ty  &&\text{dummy type equality argument} \\
  \bnfor& \dummy_\Tm  &&\text{dummy term equality argument} \\
  \bnfor& \abstr x e  &&\text{abstraction ($x$ bound)}
  \\[1ex]
  \text{Judgement thesis}\ \J
  \bnfis& \isType{A}                    && \text{$A$ is a type} \\
  \bnfor& t : A                         && \text{$t$ has type $T$} \\
  \bnfor& A \equiv B \bye \dummy_\Ty     && \text{$A$ and $B$ are equal types} \\
  \bnfor& s \equiv t : A \bye \dummy_\Tm && \text{$s$ and $t$ are equal terms at $A$}
  \\
  \text{Abstracted judgement:}\ \JJ
  \bnfis& \J                   &&\text{judgement thesis} \\
  \bnfor& \abstr{x \of A} \JJ  &&\text{abstracted judgement ($x$ bound)}
  \\[1ex]
  \text{Boundary thesis}\ \B
  \bnfis& \isType{\Box}            &&\text{a type}\\
  \bnfor& \Box :  A                &&\text{a term of type $A$}\\
  \bnfor& A \equiv B \bye \Box      &&\text{type equation boundary}\\
  \bnfor& s \equiv t : B \bye \Box  &&\text{term equation boundary}
  \\
  \text{Abstracted boundary}\ \BB
  \bnfis& \B                   &&\text{boundary thesis} \\
  \bnfor& \abstr{x \of A} \BB  &&\text{abstracted boundary ($x$ bound)}
  \\[1ex]
  \text{Variable context}\ \Gamma
  \bnfis& \mathrlap{[\var{a}_1 \of A_1, \ldots, \var{a}_n \of A_n]}
  \\[1ex]
  \text{Metavariable context}\ \Theta
  \bnfis& \mathrlap{[\symM_1 \of \BB_1, \ldots, \symM_n \of \BB_n]}
  \\[1ex]
  \text{Hypothetical judgement}\
  \phantom{\bnfis}& \Theta; \Gamma \types \JJ \\
  \text{Hypothetical boundary}\
  \phantom{\bnfis}& \Theta; \Gamma \types \BB
  \end{align*}
  \end{ruleframe}
  \caption{The syntax of expressions, boundaries and judgements.}
  \label{fig:syntax-general-type-theories}
\end{figure}

A \defemph{type expression} is an application $\symS(e_1, \ldots, e_n)$ of a \defemph{primitive symbol~$\symS$} to arguments, or an application $\symM(t_1, \ldots, t_n)$ of a \defemph{metavariable~$\symM$} to terms. We write $\symS$ and $\symM$ instead of $\symS()$ and $\symM()$.

A \defemph{term expression} is a variable, an application of a primitive symbol to arguments, or an application of a metavariable to terms. We strictly separate free variables $\var{a}, \var{b}, \var{c}, \ldots$ from the bound ones $x, y, z, \ldots$, a choice fashioned after the locally nameless syntax~\cite{mckinna93:_pure_type_system_formal,chargueraud12:_local_namel_repres}, a common implementation technique in which free variables are represented as names and the bound ones as de Bruijn indices.

An \defemph{argument} is a type expression, a term expression, a dummy argument~$\star_\Ty$ or $\star_\Tm$, or an abstracted argument $\abstr{x} e$ binding~$x$ in~$e$.
Note that we take abstraction to be a basic syntactic operation. For instance, we do not construe a $\lambda$-abstraction as a variable-binding construct $\lambda x \of A \,.\, t$, but rather an application $\uplambda(A, \abstr{x} t)$ of the primitive symbol~$\uplambda$ to two separate arguments $A$ and $\abstr{x} t$.
We may abbreviate an iterated abstraction $\abstr{x_1} \cdots \abstr{x_n} e$ as $\abstr{\vec{x}} e$, and similarly use the vector notation elsewhere when appropriate.
We permit $\vec{x}$ to be empty, in which case $\abstr{\vec{x}} e$ is just~$e$.
To an argument we assign the metavariable arity
\begin{equation*}
  \arity{\abstr{x_1} \cdots \abstr{x_n} e} = (c, n),
\end{equation*}
where $c \in \set{\Ty, \Tm, \EqTy, \EqTm}$ is the syntactic class of the non-abstracted argument~$e$.

For an expression to be syntactically valid, all bound variables must be bound by abstractions, and all symbol and metavariable applications respect their arities.
That is, if the arity of $\symS$ is $(c, [(c_1, n_1), \ldots, (c_k, n_k)])$ then it must be applied to~$k$ arguments $e_1, \ldots, e_k$ with $\arity{e_i} = (c_i, n_i)$, and the expression $\symS(e_1, \ldots, e_k)$ has syntactic class~$c$.
Similarly, an object metavariable $\symM$ of arity $(c, n)$ must be applied to~$n$ term
expressions to yield an expression of syntactic class~$c$.

We write $e[t/x]$ for capture-avoiding \defemph{substitution} of~$t$ for $x$ in $e$, and $e[t_1/x_1, \ldots, t_n/x_n]$ or $e[\vec{t}/\vec{x}]$ for simultaneous substitution of $t_1, \ldots, t_n$ for $x_1, \ldots, x_n$. Expressions which only differ in the choice of names of bound variables are considered syntactically identical (alternatively, we could use de Bruijn indices for bound variables).

Given an expression~$e$, let $\mv{e}$ and $\fv{e}$ be the sets of metavariables and free variables occurring in~$e$, respectively.
A \defemph{renaming} of an expression $e$ is an injective map $\rho$ with domain $\mv{e} \cup \fv{e}$ that takes metavariables to metavariables and free variables to free variables. The renaming acts on~$e$ to yield an expression $\rho_* e$ by replacing each occurrence of a metavariable~$\symM$ and a free variable~$\var{a}$ with $\rho(\symM)$ and $\rho(\var{a})$, respectively. We similarly define renamings of metavariable and variable contexts, judgements, and boundaries, which are defined below.

\subsection{Judgements and boundaries}
\label{sec:judg-bound}

We next discuss the syntax of judgements and boundaries, see the bottom part of \cref{fig:syntax-general-type-theories}.
To each of the judgement forms corresponds a \defemph{judgement thesis}:
\begin{itemize}
\item ``$\isType{A}$'' asserts that $A$ is a type,
\item ``$t : A$'' that $t$ is a term of type~$A$,
\item ``$A \equiv B \bye \dummy_\Ty$'' that types $A$ and $B$ are equal, and
\item ``$s \equiv t : A \bye \dummy_\Tm$'' that terms $s$ and $t$ of type $A$ are equal.
\end{itemize}
The latter two have ``$\bye \dummy$'' attached so that all boundaries can be filled with a head, as we shall explain shortly. We normally write just ``$A \equiv B$'' and ``$s \equiv t : A$''.

A \defemph{boundary} is a fundamental notion of type theory, although perhaps less familiar.
Whereas a judgement is an assertion, a boundary is a \emph{goal} to be accomplished:
\begin{itemize}
\item ``$\isType{\Box}$'' asks that a type be constructed,
\item ``$\Box : A$'' that the type $A$ be inhabited, and
\item ``$A \equiv B \bye \Box$'' and ``$s \equiv t : A \bye \Box$'' that equations be proved.
\end{itemize}
An \defemph{abstracted judgement} has the form $\abstr{x \of A}\; \JJ$, where $A$ is a type expression and $\JJ$ is a (possibly abstracted) judgement. The variable~$x$ is bound in $\JJ$ but not in~$A$. As before, we write $\abstr{\vec{x} \of \vec{A}} \; \J$ for an iterated abstraction
$
  \abstr{x_1 \of A_1} \cdots \abstr{x_n \of A_n} \; \J.
$
Similarly, an \defemph{abstracted boundary} has the form
$
  \abstr{x_1 \of A_1} \cdots \abstr{x_n \of A_n} \; \B
$,
where~$\B$ is a \defemph{boundary thesis}, i.e., it takes one of the four (non-abstracted) boundary forms.

To an abstracted boundary we assign a metavariable arity by
\begin{align*}
  \arity{\abstr{x_1 \of A_1} \cdots \abstr{x_n \of A_n} \BB} &= (c, n)
\end{align*}
where $c \in \set{\Ty, \Tm, \EqTy, \EqTm}$ is the syntactic class of the non-abstracted boundary~$\B$.

The placeholder $\Box$ in a boundary $\BB$ may be filled with an argument~$e$, called the \defemph{head}, to give a judgement $\plug{\BB}{e}$, as follows:
\begin{align*}
  \plug{(\isType{\Box})}{A} &= (\isType{A}), \\
  \plug{(\Box : A)}{t} &= (t : A), \\
  \plug{(A \equiv B \bye \Box)}{\dummy} &= (A \equiv B \bye \dummy), \\
  \plug{(s \equiv t : A \bye \Box)}{\dummy} &= (s \equiv t : A \bye \dummy), \\
  \plug{(\abstr{x \of A} \BB)}{\abstr{x} e} &= (\abstr{x \of A} \plug{\BB}{e}).
\end{align*}
We also define the operation $\plug{\BB}{e \equiv e'}$ which turns an object boundary~$\BB$ into an equation:
\begin{align*}
  \plug{(\isType{\Box})}{A \equiv B} &= (A \equiv B \bye \dummy), \\
  \plug{(\Box : A)}{s \equiv t} &= (s \equiv t : A \bye \dummy), \\
  \plug{(\abstr{x \of A} \BB)}{\abstr{x} e \equiv \abstr{x} e'} &= (\abstr{x \of A} \plug{\BB}{e \equiv e'}).
\end{align*}

\begin{exa}
  If the symbols $\sym{A}$  and  $\sym{Id}$ have arities
  \begin{equation*}
    (\Ty, []),
    \quad\text{and}\quad
    (\Ty, [(\Ty, 0), (\Tm, 0), (\Tm, 0)]),
  \end{equation*}
  respectively, then the boundaries
  \begin{equation*}
    \abstr{x \of \sym{A}} \abstr{y \of \sym{A}} \; \Box : \sym{Id}(\sym{A}, x, y)
    \qquad\text{and}\qquad
    \abstr{x \of \sym{A}} \abstr{y \of \sym{A}} \; x \equiv y : \sym{A} \bye \Box
  \end{equation*}
  may be filled with heads $\abstr{x} \abstr{y} x$ and $\abstr{x} \abstr{y} \dummy$ to yield abstracted judgements
  \begin{equation*}
    \abstr{x \of \sym{A}} \abstr{y \of \sym{A}} \;  x : \sym{Id}(\sym{A}, x, y)
    \qquad\text{and}\qquad
    \abstr{x \of \sym{A}} \abstr{y \of \sym{A}} \; x \equiv y : \sym{A} \bye \dummy_\Tm.
  \end{equation*}
\end{exa}

Information about available metavariables is collected in a \defemph{metavariable context}, which is a finite list $\Theta = [\symM_1 \of \BB_1, \ldots, \symM_n \of \BB_n]$, also construed as a map, assigning to each metavariable $\symM_i$ a boundary $\BB_i$.
In \cref{sec:raw-rules}, $\Theta$ will provide typing of metavariable and premises of an inference rule, while the level of raw syntax it just determines metavariable arities. That is, $\Theta$ assigns the metavariable arity $\arity{\BB_i}$ to~$\symM_i$.

A metavariable context $\Theta = [\symM_1 \of \BB_1, \ldots, \symM_n \of \BB_n]$ may be \emph{restricted} to a metavariable context $\upto{\Theta}{i} = [\symM_1 \of \BB_1, \ldots, \symM_{i-1} \of \BB_{i-1}]$.

The metavariable context~$\Theta$ is syntactically well formed when each~$\BB_i$ is a
syntactically well-formed boundary over~$\Sigma$ and~$\upto{\Theta}{i}$. In addition each $\BB_i$ must be closed, i.e., contain no free variables.

We also define the set of the object metavariables of~$\Theta$ to be
\begin{equation*}
  \objmv{\Theta} = \set{\symM_i \mid \text{$\BB_i$ is an object boundary}}.
\end{equation*}

A \defemph{variable context}~$\Gamma = [\var{a}_1 \of A_1, \ldots, \var{a}_n \of A_n]$ over a metavariable context~$\Theta$ is a finite list of pairs written as $\var{a}_i \of A_i$. It is considered syntactically valid when the variables $\var{a}_1, \ldots, \var{a}_n$ are all distinct, and for each~$i$ the type expression $A_i$ is valid with respect to the signature and the metavariable arities assigned by~$\Theta$, and the free variables occurring in~$A_i$ are among $\var{a}_1, \ldots, \var{a}_{i-1}$. A variable context $\Gamma$ yields a finite map, also denoted $\Gamma$, defined by $\Gamma(\var{a}_i) = A_i$. The \emph{domain} of~$\Gamma$ is the set $|\Gamma| = \{\var{a}_1, \ldots, \var{a}_n\}$.

A \defemph{context} is a pair $\Theta; \Gamma$ consisting of a metavariable context~$\Theta$ and a variable context~$\Gamma$ over $\Theta$. A syntactic entity is considered syntactically valid over a signature and a context $\Theta; \Gamma$ when all symbol and metavariable applications respect the assigned arities, the free variables are among~$|\Gamma|$, and all bound variables are properly abstracted. It goes without saying that we always require all syntactic entities to be valid in this sense.

A \defemph{(hypothetical) judgement} has the form
\begin{equation*}
  \Theta; \Gamma \types \JJ,
\end{equation*}
where $\Theta; \Gamma$ is a context and~$\JJ$ is an abstracted judgement over $\Theta; \Gamma$.
In a hypothetical judgement
\begin{equation*}
  \Theta ; \var{a}_1 \of A_1, \ldots, \var{a}_n \of A_n
  \types
  \abstr{x_1 \of B_1} \cdots \abstr{x_m \of B_m} \J
\end{equation*}
the hypotheses are split between the variable context $\var{a}_1 \of A_1, \ldots, \var{a}_n \of A_n$ on the left of~$\types$, and the abstraction $\abstr{x_1 \of B_1} \cdots \abstr{x_m \of B_m}$ on the right. The former lists the \emph{global} hypotheses that interact with other judgements, and the latter the hypotheses that are \emph{local} to the judgement.
In our experience such a separation is quite useful, because it explicitly marks the part of the context that is abstracted when a variable-binding symbol is applied to its arguments.

A \defemph{(hypothetical) boundary} is formed in the same fashion, as
\begin{equation*}
  \Theta; \Gamma \types \BB.
\end{equation*}
We read it as asserting that $\BB$ is a well-typed boundary over the context~$\Theta; \Gamma$.

\subsection{Instantiations}
\label{sec:raw-instantiations}

Let us spell out how how to instantiate metavariables with arguments.
An \defemph{instantiation} of a metavariable context $\Xi = [\symM_1 \of \BB_1, \ldots, \symM_n \of \BB_n]$ over a context $\Theta; \Gamma$ is a list representing a map
\begin{equation*}
  \finmap{\symM_1 \mto e_1, \ldots, \symM_n \mto e_n},
\end{equation*}
where $e_i$'s are arguments over $\Theta; \Gamma$ such that $\arity{\BB_i} = \arity{e_i}$. We sometimes write $\Inst{I}{\Xi}{\Theta}{\Gamma}$ when~$I$ is such an instantiation.

For $k \leq n$, define the \defemph{restriction}
\begin{equation*}
  \upto{I}{k} = \finmap{\symM_1 \mto e_1, \ldots, \symM_{k-1} \mto e_{k-1}}.
\end{equation*}
We sometimes write $\upto{I}{\symM}$ to indicate the initial segment up to the given metavariable $\symM \in |I|$.
We use the same notation for initial segments of sequences in general, e.g., if $\vec{x} = (x_1, \ldots, x_n)$ then $\upto{\vec{x}}{k} = (x_1, \ldots, x_{k-1})$.

An \defemph{instantiation $I$ acts} on an expression $e$ to give an expression $\act{I} e$ by:
\begin{gather*}
  \act{I} \var{a} = \var{a},
  \qquad
  \act{I} x = x,
  \qquad
  \act{I} \dummy = \dummy,
  \\
  \begin{aligned}
  \act{I} (\abstr{x} e) &= \abstr{x} (\act{I} e),
  &\quad
  \act{I} (\symM(\vec{t})) &=
  e[\act{I} \vec{t}/\vec{x}] \ \text{if $I(\symM) = \abstr{\vec{x}} e$},
  \\
  \act{I} (\symS(\vec{e}')) &= \symS(\act{I} \vec{e}'),
  &\quad
  \act{I} \symM(\vec{t}) &= \symM(\act{I} \vec{t}) \ \text{if $\symM \not\in |I|$}.
  \end{aligned}
\end{gather*}
The action on abstracted judgements is given by
\begin{align*}
  \act{I} (\isType{A}) &= (\isType{\act{I} A}), \\
  \act{I} (t : A) &= (\act{I} t : \act{I} A), \\
  \act{I} (A \equiv B \bye \dummy) &= (\act{I} A \equiv \act{I} B \bye \dummy), \\
  \act{I} (s \equiv t : A \bye \dummy) &= (\act{I} A \equiv \act{I} B \bye \dummy), \\
  \act{I} (\abstr{x \of A} \; \JJ) &= (\abstr{x \of \act{I} A} \; \act{I} \JJ).
\end{align*}
An abstracted boundary may be instantiated analogously.

Given $I$ of~$\Xi$ over~$\Theta; \Gamma$, and $\Delta = [x_1 \of A_1, \ldots, x_n \of A_n]$ over $\Theta$ such that $|\Gamma| \cap |\Delta| = \emptyset$, we define $\Gamma, \act{I} \Delta$ to be the variable context
\begin{equation*}
  \Gamma, x_1 \of \act{I} A_1, \ldots, x_n \of \act{I} A_n
\end{equation*}
Note that $\act{I} \Delta$ by itself is not a valid variable context.
A judgement $\Xi; \Delta \types \JJ$ may be instantiated to $\Theta; \Gamma, \act{I} \Delta \types \act{I} \JJ$.
A hypothetical boundary can be instantiated analogously.

\subsection{Raw rules}
\label{sec:raw-rules}

An inference rule in type theory is a template that generates a family of closure rules constituting a deductive system. In our setting, a \defemph{raw rule} is a hypothetical judgement of the form $\Theta; \emptyCtx \types \J$, which we display as
\begin{equation*}
  \rawRule{\Theta}{\J}.
\end{equation*}
It is an \defemph{object rule} when $\J$ is an object judgement, and an \defemph{equality rule} when~$\J$ is an equality judgement.
The designation ``raw'' signals that, even though a raw rule is syntactically sensible, it may be quite unreasonable from a type-theoretic point of view.

Given a rule $R = (\rawRule{\symM_1 \of \BB_1, \ldots, \symM_n \of \BB_n}{\plug{\B}{e}})$, along with an instantiation $I = \finmap{\symM_1 \mto e_1, \ldots, \symM_n \mto e_n}$ of its premises over $\Theta; \Gamma$, the \defemph{rule instantiation} $\act{I} R$ is the closure rule $([p_1, \ldots, p_n, q], r)$ where $p_i$ is
\begin{equation*}
  \Theta; \Gamma \types \plug{(\upact{I}{i} \BB_i)}{e_i},
\end{equation*}
$q$ is $\Theta ; \Gamma \types \act{I} \B$,
and $r$ is $\Theta; \Gamma \types \act{I} (\plug{\B}{e})$.
In this way a raw rule generates a family of closure rules, indexed by instantiations.
The premise~$q$ is needed only in various meta-theoretic inductive arguments, as it ensures the well-formedness of the boundary of the conclusion. In practice, we use the ``economic'' variant $([p_1, \ldots, p_n], r)$, which is easily seen to be admissible once \cref{prop:presuppositivity} is established.

\begin{exa}
  We may translate traditional ways of presenting rules to raw rules easily. For example, consider the formation rule for dependent products, which might be written as
  \begin{equation*}
    \infer
    {
      \Gamma \types \isType{A} \\
      \Gamma, x \of A \types \isType{B}
    }{
      \Gamma \types \isType{\Uppi(A, \abstr{x} B)}
    }
  \end{equation*}
  To be quite precise, the above is a \emph{family} of closure rules, indexed by meta-level parameters $\Gamma$, $A$, and $B$ ranging over suitable syntactic entities.
  The template which generates such a family might be written as
  \begin{equation*}
    \infer
    {
      \types \isType{\sym{A}} \\
      x \of \sym{A} \types \isType{\sym{B}(x)}
    }{
      \types \isType{\Uppi(\sym{A}, \abstr{x} \sym{B}(x))}
    }
  \end{equation*}
  Indeed, there is no need to mention~$\Gamma$ because it is always present, and we have replaced the parameters~$A$ and~$B$ with metavariables $\sym{A}$ and $\sym{B}$ (notice the change of fonts) to obtain bona-fide syntactic expressions. Next, observe that the premises amount to specifying an abstracted boundary for each metavariable, which brings us to
  \begin{equation*}
    \infer
    {
      \sym{A} \of (\isType{\Box}) \\
      \sym{B} \of (\abstr{x \of \sym{A}} \; \isType{\Box})
    }{
      \isType{\Uppi(\sym{A}, \abstr{x} \sym{B}(x))}
    }
  \end{equation*}
  By writing everything in a single line we obtain a raw rule
  \begin{equation*}
    \rawRule
    {
      \sym{A} \of (\isType{\Box}),
      \sym{B} \of (\abstr{x \of \sym{A}} \; \isType{\Box})
    }{
      \isType{\Uppi(\sym{A}, \abstr{x} \sym{B}(x))}
    }.
  \end{equation*}
  The original family of closure rules is recovered when the above raw rule is instantiated with $\finmap{\sym{A} \mto A, \sym{B} \mto \abstr{x}B}$ where $A$ and $B$ are type expressions over (a metavariable context and) a variable context $\Gamma$.
\end{exa}

We next define congruence and metavariable rules. These feature in every type theory.

\begin{defi}
  \label{def:congruence-rule}
  The \defemph{congruence rules} associated with a raw object rule~$R$
  \begin{equation*}
    \rawRule{\symM_1 \of \BB_1, \ldots, \symM_n \of \BB_n}{\plug{\B}{e}}
  \end{equation*}
  are closure rules, with
  \begin{equation*}
    I = \finmap{\symM_1 \mto f_1, \ldots, \symM_n \mto f_n}
    \quad\text{and}\quad
    J = \finmap{\symM_1 \mto g_1, \ldots, \symM_n \mto g_n},
  \end{equation*}
  of the form
  \begin{equation*}
    \infer{
      {\begin{aligned}
      &\Theta; \Gamma \types \plug{(\upact{I}{i} \BB_i)}{f_i}  &&\text{for $i = 1, \ldots, n$}\\
      &\Theta; \Gamma \types \plug{(\upact{J}{i} \BB_i)}{g_i}  &&\text{for $i = 1, \ldots, n$}\\
      &\Theta; \Gamma \types \plug{(\upact{I}{i} \BB_i)}{f_i \equiv g_i} &&\text{for object boundary $\BB_i$} \\
      &\Theta; \Gamma \types \act{I} B \equiv \act{J} B        &&\text{if $\B = (\Box : B)$}
    \end{aligned}}
    }{
      \Theta; \Gamma \types \plug{(\act{I} \B)}{\act{I} e \equiv \act{J} e}
    }
  \end{equation*}
\end{defi}

Metavariables have their own formation and congruence rules, akin to specific and congruence rules.

\begin{defi}
  \label{def:metavariable-rule}%
  Given a context $\Theta; \Gamma$ over~$\Sigma$ with
  \begin{equation*}
    \Theta = [\symM_1 \of \BB_1, \ldots, \symM_n \of \BB_n]
    \quad\text{and}\quad
    \BB_k = (\abstr{x_1 \of A_1} \cdots \abstr{x_m \of A_m}\; \B),
  \end{equation*}
  the \defemph{metavariable rules} for $\symM_k$ are the closure rules of the form
  \begin{equation*}
    \infer
    {{\begin{aligned}
     &\Theta; \Gamma \types t_j : A_j[\upto{\vec{t}}{j}/\upto{\vec{x}}{j}]
      &&\text{for $j = 1, \ldots, m$}
     \\
     &\Theta; \Gamma \types \B[\vec{t}/\vec{x}]
    \end{aligned}}
    }{
      \Theta; \Gamma \types \plug{(\B[\vec{t}/\vec{x}])}{\symM_k(\vec{t})}
    }
  \end{equation*}
  where $\vec{x} = (x_1, \ldots, x_m)$ and $\vec{t} = (t_1, \ldots, t_m)$.
  Furthermore, if $\B$ is an object boundary, then the \defemph{metavariable congruence
    rules} for $\symM_k$ are the closure rules of the form
  \begin{equation*}
    \infer
    {
     { \begin{aligned}
          &\Theta; \Gamma \types s_j :
              A_j[\upto{\vec{s}}{j}/\upto{\vec{x}}{j}]
          &\text{for $j = 1, \ldots, m$}
          \\
          &\Theta; \Gamma \types t_j :
              A_j[\upto{\vec{t}}{j}/\upto{\vec{x}}{j}]
          &\text{for $j = 1, \ldots, m$}
          \\
          &\Theta; \Gamma \types s_j \equiv t_j :
              A_j[\upto{\vec{s}}{j}/\upto{\vec{x}}{j}]
          &\text{for $j = 1, \ldots, m$}
          \\
          &\Theta; \Gamma \types C[\vec{s}/\vec{x}] \equiv C[\vec{t}/\vec{x}]
          &\text{if $\B = (\Box : C)$}
        \end{aligned} }
    }{
      \Theta; \Gamma \types
      \plug
      {(\B[\vec{s}/\vec{x}])}
      {\symM_k(\vec{s}) \equiv \symM_k(\vec{t})}
    }
  \end{equation*}
  where
  $\vec{s} = (s_1, \ldots, s_m)$ and
  $\vec{t} = (t_1, \ldots, t_m)$.
\end{defi}

Once again, presuppositions may be elided safely from the above rules to yield the following admissible ``economic''
versions:
\begin{mathpar}
    \infer{
      {\begin{aligned}
      &\Theta; \Gamma \types \plug{(\upact{I}{i} \BB_i)}{f_i}  &&\text{for equation boundary $\BB_i$}\\
      &\Theta; \Gamma \types \plug{(\upact{I}{i} \BB_i)}{f_i \equiv g_i} &&\text{for object boundary $\BB_i$}
    \end{aligned}}
    }{
      \Theta; \Gamma \types \plug{(\act{I} \B)}{\act{I} e \equiv \act{J} e}
    }

  \infer
    {{\begin{aligned}
     &\Theta; \Gamma \types t_j : A_j[\upto{\vec{t}}{j}/\upto{\vec{x}}{j}]
      &&\text{for $j = 1, \ldots, m$}
    \end{aligned}}
    }{
      \Theta; \Gamma \types \plug{(\B[\vec{t}/\vec{x}])}{\symM_k(\vec{t})}
    }

    \infer
    {
     { \begin{aligned}
          &\Theta; \Gamma \types s_j \equiv t_j :
              A_j[\upto{\vec{s}}{j}/\upto{\vec{x}}{j}]
          &\text{for $j = 1, \ldots, m$}
        \end{aligned} }
    }{
      \Theta; \Gamma \types
      \plug
      {(\B[\vec{s}/\vec{x}])}
      {\symM_k(\vec{s}) \equiv \symM_k(\vec{t})}
    }
\end{mathpar}

\subsection{Type theories}
\label{sec:type-theories}

A type theory in its basic form is a collection of rules that generate a deductive system. While this is too permissive a notion from a type-theoretic standpoint, it is nevertheless useful enough to deserve a name.

\begin{defi}
  \label{def:raw-type-theory}
  A \defemph{raw type theory~$T$} over a signature~$\Sigma$ is a family of raw rules over~$\Sigma$, called the \defemph{specific rules} of~$T$.
  The \defemph{associated deductive system} of~$T$ consists of:
  \begin{enumerate}
  \item the \defemph{structural rules} over~$\Sigma$:
    \begin{enumerate}
    \item the \emph{variable}, \emph{metavariable}, and \emph{abstraction} rules (\cref{def:metavariable-rule,fig:struct-rules}),
    \item the \emph{equality} rules, (\cref{fig:equality-rules}),
    \item the \emph{boundary} rules (\cref{fig:well-formed-boundaries});
    \end{enumerate}
  \item the instantiations of the specific rules of~$T$;
  \item for each specific object rule of~$T$, the instantiations of the associated congruence rule (\cref{def:congruence-rule}).
  \end{enumerate}
\end{defi}

\begin{figure}[pt]
  \centering
  \small
  \begin{ruleframe}
  \begin{mathpar}
    \inferenceRule{TT-Var}
    {
      \var a \in |\Gamma|
    }{
      \Theta; \Gamma \types \var{a} : \Gamma(\var{a})
    }

    \inferenceRule{TT-Meta}
    {{\begin{aligned}
     &\mathrlap{\Theta(\symM_k) = \abstr{x_1 \of A_1} \cdots \abstr{x_m \of A_m}\; \B} \\
     &\Theta; \Gamma \types t_j : A_j[\upto{\vec{t}}{j}/\upto{\vec{x}}{j}]
      &&\text{for $j = 1, \ldots, m$}
     \\
     &\Theta; \Gamma \types \B[\vec{t}/\vec{x}]
    \end{aligned}}
    }{
      \Theta; \Gamma \types \plug{(\B[\vec{t}/\vec{x}])}{\symM_k(\vec{t})}
    }

    \inferenceRule{TT-Meta-Congr}
    {
     { \begin{aligned}
          &\Theta(\symM_k) = \abstr{x_1 \of A_1} \cdots \abstr{x_m \of A_m}\; \B\\
          &\Theta; \Gamma \types s_j :
              A_j[\upto{\vec{s}}{j}/\upto{\vec{x}}{j}]
          &\text{for $j = 1, \ldots, m$}
          \\
          &\Theta; \Gamma \types t_j :
              A_j[\upto{\vec{t}}{j}/\upto{\vec{x}}{j}]
          &\text{for $j = 1, \ldots, m$}
          \\
          &\Theta; \Gamma \types s_j \equiv t_j :
              A_j[\upto{\vec{s}}{j}/\upto{\vec{x}}{j}]
          &\text{for $j = 1, \ldots, m$}
          \\
          &\Theta; \Gamma \types C[\vec{s}/\vec{x}] \equiv C[\vec{t}/\vec{x}]
          &\text{if $\B = (\Box : C)$}
        \end{aligned} }
    }{
      \Theta; \Gamma \types
      \plug
      {(\B[\vec{s}/\vec{x}])}
      {\symM_k(\vec{s}) \equiv \symM_k(\vec{t})}
    }

    \inferenceRule{TT-Abstr}
    {
      \Theta; \Gamma \types \isType A \\
      \var{a} \not\in |\Gamma| \\
      \Theta; \Gamma, \var{a} \of A \types \JJ[\var{a}/x]
    }{
      \Theta; \Gamma \types \abstr{x \of A} \; \JJ
    }

  \end{mathpar}
  \end{ruleframe}
  \caption{Variable, metavariable and abstraction closure rules}
  \label{fig:struct-rules}
\end{figure}

\begin{figure}[pht]
  \centering
  \small
  \begin{ruleframe}
  \begin{mathpar}
  \inferenceRule{TT-Ty-Refl}
  { \Theta; \Gamma \types \isType { A } }
  { \Theta; \Gamma \types A \equiv A }

  \inferenceRule{TT-Ty-Sym}
  { \Theta; \Gamma \types A \equiv B }
  { \Theta; \Gamma \types B \equiv A }

  \inferenceRule{TT-Ty-Tran}
  { \Theta; \Gamma \types A \equiv B \\
    \Theta; \Gamma \types B \equiv C }
  { \Theta; \Gamma \types A \equiv C }

  \inferenceRule{TT-Tm-Refl}
  { \Theta; \Gamma \types t : A }
  { \Theta; \Gamma \types t \equiv t : A }

  \inferenceRule{TT-Tm-Sym}
  { \Theta; \Gamma \types s \equiv t : A }
  { \Theta; \Gamma \types t \equiv s : A}

  \inferenceRule{TT-Tm-Tran}
  { \Theta; \Gamma \types s \equiv t : A \\
    \Theta; \Gamma \types t \equiv u : A }
  { \Theta; \Gamma \types s \equiv u : A }

  \inferenceRule{TT-Conv-Tm}
  { \Theta; \Gamma \types t : A \\
    \Theta; \Gamma \types A \equiv B }
  { \Theta; \Gamma \types t : B }

  \inferenceRule{TT-Conv-Eq}
  {
    \Theta; \Gamma \types s \equiv t : A \\
    \Theta; \Gamma \types A \equiv B }
  { \Theta; \Gamma \types s \equiv t : B }
  \end{mathpar}
  \end{ruleframe}
  \caption{Equality closure rules}
  \label{fig:equality-rules}
\end{figure}

\begin{figure}[pht]
  \centering
  \small
  \begin{ruleframe}
  \begin{mathpar}
    \inferenceRule{TT-Bdry-Ty}
    {
    }{
     \Theta; \Gamma \types \isType \Box
    }

    \inferenceRule{TT-Bdry-Tm}
    {
      \Theta; \Gamma \types \isType{A}
    }{
      \Theta; \Gamma \types \Box : A
    }

    \inferenceRule{TT-Bdry-EqTy}
    {
      \Theta; \Gamma \types \isType{A} \\
      \Theta; \Gamma \types \isType{B}
    }{
      \Theta; \Gamma \types A \equiv B \bye \Box
    }

    \inferenceRule{TT-Bdry-EqTm}
    {
      \Theta; \Gamma \types \isType{A} \\
      \Theta; \Gamma \types s : A \\
      \Theta; \Gamma \types t : A
    }{
      \Theta; \Gamma \types s \equiv t : A \bye \Box
    }

    \inferenceRule{TT-Bdry-Abstr}
    {
      \Theta; \Gamma \types \isType A \\
      \var{a} \not\in |\Gamma| \\
      \Theta ; \Gamma, \var{a} \of A \types \BB[\var{a}/x]
    }{
      \Theta; \Gamma \types \abstr{x \of A} \; \BB
    }
  \end{mathpar}
  \end{ruleframe}
  \caption{Well-formed abstracted boundaries}
  \label{fig:well-formed-boundaries}
\end{figure}

\begin{figure}[pht]
  \centering
  \small
  \begin{ruleframe}
  \begin{mathpar}
    \inferenceRule{MCtx-Empty}
    {
    }{
      \types \isExt{\emptyExt}
    }

    \inferenceRule{MCtx-Extend}
    {
      \types \isExt{\Theta} \\
      \Theta ; \emptyCtx \types \BB \\
      \symM \not\in |\Theta|
    }{
      \types \isExt{\finmap{\Theta, \symM \of \BB}}
    }

    \\

    \inferenceRule{VCtx-Empty}
    {
      \types \isExt{\Theta}
    }{
      \Theta; \Gamma \types \isCtx{\emptyCtx}
    }

    \inferenceRule{VCtx-Extend}
    {
      \Theta; \Gamma \types \isCtx{\Delta} \\
      \Theta; \Gamma, \Delta \types \isType{A} \\
      \var{a} \not\in |\Gamma, \Delta|
    }{
      \Theta; \Gamma \types \isCtx{\finmap{\Delta, \var{a} \of A}}
    }
  \end{mathpar}
  \end{ruleframe}
  \caption{Well-formed contexts}
  \label{fig:contexts}
\end{figure}

The rules of a raw type theory do not impose any conditions on the contexts, although they only ever extend variable contexts with well-formed types. When a well-formed context is needed, the auxiliary rules in \cref{fig:contexts} are employed.

\medskip

With the notion of raw type theory in hand, we may define concepts that employ derivability.

\begin{defi}
  An instantiation $I = \finmap{M_1 \mto e_1, \ldots, M_n \mto e_n}$ of a metavariable context $\Xi = [M_1 \of \BB_1, \ldots, M_n \of \BB_n]$ over $\Theta; \Gamma$ is \defemph{derivable} when
  $
  \Theta ; \Gamma \types \plug{(\upact{I}{k} \BB_k)}{e_k}
  $
  for $k = 1, \ldots, n$.
\end{defi}

\begin{defi}
  Instantiations
  \begin{equation*}
    I = \finmap{\symM_1 \mto e_1, \ldots, \symM_n \mto e_n}
    \qquad\text{and}\qquad
    J = \finmap{\symM_1 \mto f_1, \ldots, \symM_n \mto f_n}
  \end{equation*}
  over $\Theta; \Gamma$ are \defemph{judgementally equal} when, for $k = 1, \ldots, n$,
  if $\BB_k$ is an object boundary then
  $
    \Theta; \Gamma \types \plug{(\upact{I}{k}\BB_k)}{e_k \equiv f_k}
  $.
\end{defi}

\begin{defi}
  A raw rule $\rawRule{\Xi}{\J}$ is \defemph{derivable} when it is derivable qua judgement.
  It is \defemph{admissible} when, for every derivable instantiation~$I = \finmap{M_1 \mto e_1, \ldots, M_n \mto e_n}$ of~$\Xi$ over~$\Theta; \Gamma$ we have $\Theta; \Gamma \types \act{I} \J$.
\end{defi}

If $I$ is an instantiation of $\Xi = [\symM_1 \of \BB_1, \ldots, \symM_m \of \BB_m]$ over~$\Theta; \Delta$, and $J$ is an instantiation of $\Theta$ over $\Psi; \Gamma$ such that $|\Gamma| \cap |\Delta| = \emptyset$, their \defemph{composition $J \circ I$} is the instantiation of $\Xi$ over $\Psi; \Gamma, \act{J} \Delta$ defined by
\begin{equation*}
  (J \circ I)(\symM) = \act{J}(I(\symM)).
\end{equation*}
Composition of instantiations is associative. It also preserves derivability, which can be proved easily once \cref{thm:admiss-of-inst} is established.

It will be useful to know that derivability only needs to be checked for instantiations over the empty variable context. For this purpose, define the \defemph{promotion} of
\begin{equation*}
  \Theta; \Gamma \types \JJ
\end{equation*}
to be the judgement
\begin{equation*}
  (\Theta, \Gamma); \emptyCtx \types \JJ,
\end{equation*}
in which the free variables are promoted to metavariables. (We could obfuscate what we just said by being more precise: if $\Gamma = [\var{a}_1 \of A_1, \ldots, \var{a}_n \of A_n]$, the promotion is the judgement
\begin{equation*}
  (\Theta, \sym{a}'_1 \of A'_1, \ldots, \sym{a}'_n \of A'_n); \emptyCtx \types \JJ[\vec{\sym{a}}'/\vec{\var{a}}]
\end{equation*}
in which $\sym{a}'_1, \ldots, \sym{a}'_n$ are fresh and $A'_i = A_i[\upto{\vec{\sym{a'}}}{i}/\upto{\vec{\var{a}}}{i}]$.)
Note that $\types \isExt{(\Theta, \Gamma)}$ is derivable if, and only if, both $\types \isExt{\Theta}$ and $\Theta \types \isCtx{\Gamma}$ are derivable.

\begin{prop}
  \label{prop:promotion}
  A raw type theory derives $\Theta; \Gamma \types \JJ$ if, and only if, it derives the promotion $(\Theta, \Gamma); \emptyCtx \types \JJ$.
\end{prop}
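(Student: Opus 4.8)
The plan is to prove both implications by induction on derivations, after strengthening the statement so that the local context introduced by the abstraction rules is handled correctly. Since \rref{TT-Abstr} and \rref{TT-Bdry-Abstr} extend the variable context, a sub-derivation of $\Theta;\Gamma\types\JJ$ has the form $\Theta;\Gamma,\Delta\types\JJ'$ for some $\Delta$, and this $\Delta$ must survive promotion as a genuine variable context. So I would actually prove: for every $\Delta$ such that $\Gamma,\Delta$ is a valid variable context over $\Theta$, the judgement $\Theta;\Gamma,\Delta\types\JJ$ is derivable if and only if $(\Theta,\Gamma);\Delta'\types\JJ'$ is, where $(-)'$ is the promotion renaming sending each free variable $\var{a}_i$ of $\Gamma$ to the fresh metavariable $\sym{a}'_i$ and acting as the identity on the variables of $\Delta$ and the metavariables of $\Theta$. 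The proposition is the case $\Delta = \emptyCtx$.

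For the left-to-right direction, induct on a derivation of $\Theta;\Gamma,\Delta\types\JJ$. The only genuinely interesting case is \rref{TT-Var}: if it derives $\var{b} : (\Gamma,\Delta)(\var{b})$ for $\var{b}$ among the variables of $\Delta$ (possibly further extended by abstractions), we simply reapply \rref{TT-Var} in the promoted context; if $\var{b} = \var{a}_i$ comes from $\Gamma$, its promotion $\sym{a}'_i : A'_i$ is derived by the metavariable rule \rref{TT-Meta} for $\sym{a}'_i$ applied to the empty list of arguments, using that $(\Theta,\Gamma)$ assigns $\sym{a}'_i$ exactly the boundary $\Box : A'_i$. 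Every other rule — \rref{TT-Abstr}, the equality rules, the well-formed-boundary rules, \rref{TT-Meta} and \rref{TT-Meta-Congr} for the original metavariables of $\Theta$, the instantiations of the specific rules and of their congruence rules — acts uniformly in the variable context and commutes with the renaming $(-)'$; this uses the routine facts that renaming commutes with substitution and with the action of an instantiation and that it preserves arities, so that applying the same rule to the promoted premises supplied by the induction hypothesis produces the promoted conclusion. The right-to-left direction is symmetric: a derivation over $(\Theta,\Gamma)$ has all its judgements in variable contexts of the form $\Delta^\ast$ over $(\Theta,\Gamma)$, and we apply the inverse ``demotion'' renaming $\sym{a}'_i\mapsto\var{a}_i$ throughout, turning each application of \rref{TT-Meta} for some $\sym{a}'_i$ into an application of \rref{TT-Var} for $\var{a}_i$ (legitimate because $\var{a}_i\in|\Gamma|$) and transferring every other rule as before. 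Alternatively, this direction follows in one step from the admissibility of instantiations (\cref{thm:admiss-of-inst}) by instantiating the generic arguments for $\Theta$ together with $\sym{a}'_i\mapsto\var{a}_i$.

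The one point that requires care — and the main obstacle I anticipate — is the presupposition premise of \rref{TT-Meta}: turning \rref{TT-Var} for $\var{a}_i$ into \rref{TT-Meta} for $\sym{a}'_i$ also demands a derivation of the boundary $(\Theta,\Gamma);\Delta'\types\Box:A'_i$, i.e.\ of $\isType{A'_i}$, which is not literally a sub-derivation of the given one. The cheapest way around this is to run the induction against the admissible ``economic'' variants of the metavariable rules (\cref{prop:presuppositivity}), in which the presupposition premises have been dropped; a derivation built from admissible rules still witnesses derivability in the theory. (If instead one insists on the non-economic system, one restricts attention to well-formed contexts, so that the needed boundary derivations are available.) The remaining work is bookkeeping: verifying that $\Gamma,\Delta$ and $\Gamma$ extended by the demoted $\Delta^\ast$ are again valid variable contexts — using freshness of the $\sym{a}'_i$ and renaming bound variables where necessary — and that promotion and demotion commute with all the syntactic operations occurring in the rules.
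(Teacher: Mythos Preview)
Your proposal is correct and follows exactly the approach the paper sketches in one line: ``swap applications of \rref{TT-Var} with corresponding applications of \rref{TT-Meta}.'' Your strengthening of the induction hypothesis to carry an auxiliary variable context~$\Delta$, and your explicit discussion of the boundary-presupposition premise of \rref{TT-Meta} (and the need for either the economic variant or a well-formedness assumption on~$\Gamma$), are points the paper's terse proof does not spell out.
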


\begin{proof}
  To pass between the original variable context and its promotion, swap applications of \rref{TT-Var} with corresponding applications of \rref{TT-Meta}.
\end{proof}



\medskip

Raw rules do not impose any well-typedness conditions on the premises and the conclusion. We may rectify this by requiring that their boundaries be derivable, as follows.

\begin{defi}
  \label{def:finitary-rule-theory}
  A raw rule $\rawRule{\Theta}{\plug{\B}{e}}$ is a \defemph{finitary rule} with respect to a raw type theory~$T$ when $\types \isExt{\Theta}$ and $\Theta; \emptyCtx \types \B$ are derivable.
  A \defemph{finitary type theory} is a raw type theory $T$ whose rules are finitary with respect to $T$.
\end{defi}

According to the above definition, the justification that a rule is finitary may rely on
the rule itself. If so desired, such circularity may be proscribed by imposition of a
well-found order on the rules, with the requirement that the finitary character of each
rule be established using only the rules preceding it, see~\cite{bauer20:gtt} for further
details.

\begin{exa}
  A finitary type theory is well behaved in many respects, but may still be ``non-standard''. Assuming $\sym{N}$, $\sym{O}$ and $\sym{S}$ are respectively a type constant, a term constant, and a unary term symbol, the rules
  \begin{equation*}
    \rawRule{\emptyExt}{\isType{\sym{N}}},
    \qquad
    \rawRule{\emptyExt}{\sym{O} : \sym{N}},
    \qquad
    \rawRule{\sym{n} \of (\Box : \sym{N})}{\sym{S}(\sym{S}(\sym{n})) : \sym{N}}
  \end{equation*}
  constitute a finitary type theory. However, the third rule is troublesome because it posits a compound term $\sym{S}(\sym{S}(\sym{n}))$.
\end{exa}

 We avoid such anomalies by requiring that object rules only ever introduce generically applied symbols.
For this purpose, define a \defemph{rule-boundary} to be a hypothetical boundary of the form
$
\Theta; \emptyCtx \types \B
$,
notated as
$
  \rawRule{\Theta}{\B}
$.
The elements of~$\Theta$ are the \defemph{premises} and $\B$ is the \defemph{conclusion boundary}.
We say that the rule-boundary is an \defemph{object rule-boundary} when~$\B$ is a type or a term boundary, and an \defemph{equality rule-boundary} when $\B$ is an equality boundary.
Next, given an object rule-boundary
\begin{equation*}
  \rawRule{\symM_1 \of \BB_1, \ldots, \symM_n \of \BB_n}{\B}.
\end{equation*}
its \defemph{associated symbol arity} is $(c, [\arity{\BB_1}, \ldots, \arity{\BB_n}])$, where $c \in \set{\Ty, \Tm}$ is the syntactic class of~$\B$.
Given a fresh symbol~$\sym{S}$, we assign it the associated arity and define the
\defemph{associated symbol rule} to be
\begin{equation*}
  \rawRule
  {\symM_1 \of \BB_1, \ldots, \symM_n \of \BB_n}
  {\B[\symS(\genapp{\symM}_i, \ldots, \genapp{\symM}_n)]},
\end{equation*}
where $\genapp{\symM}_i$ is the \defemph{generic application} of the metavariable~$\symM_i$, defined as, assuming $\arity{\BB_i} = (c_i, n_i)$:
\begin{enumerate}
\item
  $\genapp{M}_i = \abstr{x_1, \ldots, x_{n_i}} \symM_i(x_1, \ldots, x_{n_i})$ when $c_i \in \set{\Ty, \Tm}$,
\item
  $\genapp{M}_i = \abstr{x_1, \ldots, x_{n_i}} \dummy$ when $c_i \in \set{\EqTy, \EqTm}$.
\end{enumerate}
Here then is our final notion of type theory.

\begin{defi}
  \label{def:standard-type-theory}
  A finitary type theory is \defemph{standard} if its specific object rules are symbol rules, and each symbol has precisely one associated rule.
\end{defi}

\subsection{A review of meta-theorems}

We recall from~\cite{bauer:_finit} meta-theorems that establish desirable structural properties of type theories.
In the next section we prove several additional meta-theorems that we rely on subsequently.

First we have meta-theorems about raw type theories that are concerned with syntactic manipulations.

\begin{prop}[Renaming]
  \label{prop:tt-renaming}%
  If a raw type theory derives a judgement or a boundary, then it also derives its renaming.
\end{prop}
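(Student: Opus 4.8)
The statement is the Renaming meta-theorem: if a raw type theory derives a judgement or a boundary, then it derives its renaming. The natural approach is induction on the derivation, simultaneously for judgements and boundaries, showing that for any renaming $\rho$ defined on the metavariables and free variables occurring in the (conclusion of the) derivation, the renamed judgement $\rho_* \JJ$ over $\rho_* \Theta; \rho_* \Gamma$ is again derivable. Because the closure rules come in the two families described in \cref{def:raw-type-theory} — the structural rules (variable, metavariable, abstraction, equality, boundary rules from \cref{fig:struct-rules,fig:equality-rules,fig:well-formed-boundaries}), and the instantiations of specific rules together with their associated congruence rules — I would handle the structural rules by direct inspection and the specific/congruence instantiations by a lemma stating that renaming commutes with instantiation.

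**Key steps.** First I would record the basic compatibility facts: renaming commutes with substitution, with abstraction, and with the plugging operation $\plug{\BB}{e}$; and if $\rho$ is a renaming of a judgement built by a closure rule, then its restriction to the metavariables/free variables of each premise is again a renaming, so the induction hypothesis applies to the subderivations. Second, for each structural rule I would check that applying $\rho_*$ to premises and conclusion yields another instance of the same rule: e.g. for \rref{TT-Var}, $\rho_*(\var a : \Gamma(\var a)) = (\rho(\var a) : (\rho_*\Gamma)(\rho(\var a)))$ since $\rho$ acts pointwise on the context, and $\rho(\var a) \in |\rho_*\Gamma|$; for \rref{TT-Abstr} and \rref{TT-Bdry-Abstr} one must pick the fresh variable image appropriately, extending $\rho$ to send the bound-variable witness $\var a$ to a fresh name outside $|\rho_*\Gamma|$, which is possible since names are drawn from an infinite supply; the equality rules of \cref{fig:equality-rules} are immediate since they are purely schematic. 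Third, for an instantiation $\act{I} R$ of a specific rule $R$ (or of an associated congruence rule), I would observe that $\rho_* (\act{I} J) = \act{(\rho \circ I)} (\rho|_{\mv R})_* J$ — more precisely, renaming the conclusion of a rule instantiation is the same as instantiating the (renamed) rule with the renamed instantiation $\rho_* I := \finmap{\symM_1 \mto \rho_* e_1, \ldots}$, because the metavariables of $R$ itself are bound within the rule and untouched by $\rho$ while the free variables and ambient metavariables in the $e_i$ and in $\Theta;\Gamma$ are renamed consistently. Hence the renamed judgement is again the conclusion of an instantiation of the same specific (or congruence) rule, and the induction closes.

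**The main obstacle.** The routine part is the bookkeeping that renaming commutes with substitution, abstraction and plugging; the only genuinely delicate point is the treatment of the abstraction rules \rref{TT-Abstr} and \rref{TT-Bdry-Abstr}, where a fresh free variable $\var a \notin |\Gamma|$ is introduced and then $x$ is substituted by it. One must extend the given renaming $\rho$ (defined on $\mv{\JJ} \cup \fv{\JJ}$, which does not mention $\var a$) to a renaming $\rho'$ on the larger set $\mv{\JJ} \cup \fv{\JJ} \cup \{\var a\}$ by choosing $\rho'(\var a)$ fresh for $|\rho_*\Gamma|$, verify that $\rho'$ is still injective, and check that $\rho'_*(\JJ[\var a/x]) = (\rho_*\JJ)[\rho'(\var a)/x]$ so that the induction hypothesis applied to the subderivation yields exactly the premise of the renamed \rref{TT-Abstr}. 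This is standard locally-nameless manipulation but is the one place where the proof is more than pure symbol-pushing; everything else follows the pattern "the renamed derivation tree is obtained by renaming each node's data."
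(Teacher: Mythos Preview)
Your proposal is the standard induction-on-derivations argument and is correct in outline; the delicate point you identify (extending the renaming at \rref{TT-Abstr}/\rref{TT-Bdry-Abstr} by choosing a fresh target for the newly introduced free variable) is indeed the only non-mechanical step.

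Note, however, that the paper does not actually give a proof of this proposition: it is listed among the meta-theorems \emph{recalled} from~\cite{bauer:_finit}, so there is no in-paper argument to compare against. Your sketch is exactly the kind of proof one would expect to find in that reference, and nothing in it conflicts with how the result is used downstream (e.g.\ in \cref{prop:promotion}, which swaps \rref{TT-Var} for \rref{TT-Meta} and implicitly relies on renaming). One small refinement: when you write $\rho_*(\act{I} J) = \act{(\rho_* I)} J$ for a specific rule~$R$, make explicit that the metavariables of~$\Xi$ in $R$ may be assumed disjoint from~$|\Theta|$ (and hence from the domain of~$\rho$) by a preliminary renaming of the rule itself, so that $\rho$ genuinely acts only on the instantiating data and not on the bound metavariables of the rule.
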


\begin{prop}[Weakening]
  \label{prop:tt-weakening}
  For a raw type theory:
  \begin{enumerate}
  \item If $\Theta; \Gamma_1, \Gamma_2 \types \JJ$ then $\Theta; \Gamma_1, \var{a} \of A, \Gamma_2 \types \JJ$.
  \item If $\Theta_1, \Theta_2; \Gamma \types \JJ$ then $\Theta_1, \symM \of \BB, \Theta_2; \Gamma \types \JJ$.
  \end{enumerate}
  An analogous statement holds for boundaries.
\end{prop}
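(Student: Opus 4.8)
The plan is to prove both statements, together with their boundary analogues, simultaneously by induction on the derivation, exploiting the fact that nearly every closure rule of a raw type theory mentions the ambient context $\Theta;\Gamma$ uniformly in its premises and conclusion. Throughout we tacitly assume that the weakened contexts remain syntactically valid — $\var a \notin |\Gamma_1,\Gamma_2|$ in part~(1) and $\symM \notin |\Theta_1,\Theta_2|$ in part~(2) — and we note at the outset that in part~(2) the inserted boundary $\BB$ plays no active role whatsoever: no instance of a closure rule inspects the boundary of a metavariable other than through the lookup $\Theta(\symM_k)$ in \rref{TT-Meta} and \rref{TT-Meta-Congr}, and since $\symM$ is fresh it never occurs in the given derivation at all, so that lookup is unaffected.

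For the leaf and lookup rules the argument is immediate. In \rref{TT-Var}, inserting an entry into $\Gamma$ neither removes any variable from $|\Gamma|$ nor changes the type $\Gamma(\var a)$ assigned to an existing one, so both the side condition and the conclusion survive weakening verbatim; \rref{TT-Meta} and \rref{TT-Meta-Congr} behave the same way with respect to $\Theta(\symM_k)$ (and their term premises are handled by the induction hypothesis). The remaining rules without context-sensitive side conditions — the equality rules of \cref{fig:equality-rules}, the boundary-formation rules of \cref{fig:well-formed-boundaries} other than the abstraction one, the instantiations of the specific rules of $T$, and the instantiations of the associated congruence rules — are all disposed of uniformly: the instantiation $I$ and the underlying raw or congruence rule do not depend on the ambient context, the same arguments $e_i$ remain valid over the larger context, so we apply the induction hypothesis to each premise $\Theta;\Gamma \types \plug{(\upact{I}{i}\BB_i)}{e_i}$ (and to the equational premises in the congruence case) and re-apply the same closure rule in the weakened context.

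The one genuine subtlety is with the abstraction rules \rref{TT-Abstr} and \rref{TT-Bdry-Abstr}, whose premise about $\Theta;\Gamma,\var b\of A \types \JJ[\var b/x]$ (with $\Gamma = \Gamma_1,\Gamma_2$) carries the freshness condition $\var b \notin |\Gamma|$, which may fail after we insert $\var a\of A'$, precisely when $\var b = \var a$. This is resolved in the standard way: by \cref{prop:tt-renaming} we may assume the witnessing variable $\var b$ of the sub-derivation was chosen outside $|\Gamma_1,\Gamma_2|\cup\{\var a\}$ in the first place; the induction hypothesis applied to that sub-derivation — weakening at the position between $\Gamma_1$ and $\Gamma_2$, leaving the last entry $\var b\of A$ untouched — then yields $\Theta;\Gamma_1,\var a\of A',\Gamma_2,\var b\of A \types \JJ[\var b/x]$, and since the freshness condition $\var b \notin |\Gamma_1,\var a\of A',\Gamma_2|$ now holds, re-applying the rule (together with the already-weakened first premise) gives the desired conclusion. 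For part~(2) no renaming is needed in this case, since the fresh metavariable $\symM$ never occurs in the derivation.

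I expect this bookkeeping in the abstraction case — matching the weakening position against the position of the newly abstracted variable, and appealing to renaming to keep the witnessing variable disjoint from the inserted one — to be the main, and essentially the only, obstacle; the rest is a routine uniform replay of the derivation, and the statement for boundaries follows by the very same induction with the boundary-formation rules of \cref{fig:well-formed-boundaries} treated exactly like their judgement counterparts.
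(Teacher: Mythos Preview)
Your proof is correct and follows the standard induction-on-derivations approach, with the expected use of \cref{prop:tt-renaming} to refresh the bound-variable witness in the \rref{TT-Abstr} and \rref{TT-Bdry-Abstr} cases. The paper itself does not prove this proposition; it is listed among the meta-theorems recalled from~\cite{bauer:_finit}, so there is no in-text proof to compare against beyond noting that your argument is the one a reader would reconstruct.
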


It is understood that in the above statements, and the subsequent ones, we tacitly assume whatever syntactic conditions are needed to ensure that all entities are well-formed. For example, in \cref{prop:tt-weakening} we require $\var{a} \not\in |\Gamma_1, \Gamma_2|$ and that $A$ be a syntactically valid type expression for $\Theta; \Gamma_1$.

\begin{thm}[Admissibility of substitution]
  \label{thm:substitution-admissible}
  In a raw type theory the substitution rules from \cref{fig:substitution-rules} are admissible.
\end{thm}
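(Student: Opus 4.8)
The plan is to prove all the variants of \cref{fig:substitution-rules} at once, by a single simultaneous induction. These variants are: the plain rules, which substitute a derivable term $\Theta; \Gamma \types t : A$ for a free variable $\var{a}$ in a derivable judgement $\Theta; \Gamma, \var{a} \of A, \Delta \types \JJ$ or boundary $\Theta; \Gamma, \var{a} \of A, \Delta \types \BB$; and the equality rules, which take a derivable $\Theta; \Gamma \types s \equiv t : A$ and conclude a congruence between the $s$- and the $t$-instance of $\JJ$, respectively of $\BB$. The induction is on the derivation of the entity substituted into, with the derivation of the left premise held fixed, and \cref{prop:tt-weakening} used throughout to move that premise from $\Gamma$ to $\Gamma, \Delta[t/\var{a}]$. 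Judgements and boundaries must be handled together, since \rref{TT-Bdry-Tm} has a type-judgement premise under a boundary conclusion while \rref{TT-Meta} has a boundary premise under a judgement conclusion.

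First I would isolate the syntactic interchange lemmas. The elementary ones record that $\var{a}$ occurs neither in $\Gamma$ nor in $A$, so the types in $\Gamma$ and $A$ itself are untouched, and that substitution commutes with the expression-, judgement- and boundary-forming operations. The essential one concerns instantiation: for an expression $e$ over a metavariable context $\Xi$ and the context $\Theta; \Gamma, \var{a} \of A, \Delta$, and an instantiation $I$ of $\Xi$ over it, one has $(\act{I} e)[t/\var{a}] = \act{I'}(e)$ with $I'(\symM) = (I(\symM))[t/\var{a}]$, because the raw rule underlying an instance $\act{I} R$ has all its premises and its conclusion over the empty variable context, so $\var{a}$ enters the instance only through $I$. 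Applying the same identity to the congruence closure rule of \cref{def:congruence-rule} covers the equality variant on symbol applications.

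The induction then goes rule by rule. For \rref{TT-Var}: if the variable looked up is $\var{a}$, the required judgement is the left premise transported by weakening (and $A[t/\var{a}] = A$); if it lies in $\Gamma$, its type is unchanged and \rref{TT-Var} applies again; if it lies in $\Delta$, its type is the substituted one and \rref{TT-Var} applies over $\Gamma, \Delta[t/\var{a}]$. For \rref{TT-Meta} and \rref{TT-Meta-Congr} one applies the induction hypotheses to the premises and reassembles with the interchange lemma on the metavariable's boundary. For \rref{TT-Abstr} and \rref{TT-Bdry-Abstr} one first uses \cref{prop:tt-renaming} to rename the freshly introduced free variable away from $\var{a}$ and $|\Gamma, \Delta|$, then applies the induction hypothesis. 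The equality rules of \cref{fig:equality-rules} and the boundary rules of \cref{fig:well-formed-boundaries} follow directly from the induction hypotheses; for the equality variant, the expression cases are: substituting $s \equiv t$ into $\var{a}$ uses the left premise, into any other free variable uses \rref{TT-Tm-Refl}, into a symbol application uses the congruence rule associated with that symbol, into a metavariable application uses \rref{TT-Meta-Congr}, with \rref{TT-Conv-Eq} threading through any mismatch of ascribed types.

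The case I expect to be the real work is an instance $\act{I} R$ of a specific rule or of an associated congruence rule. Here the interchange lemma shows the substituted instance is again of the form $\act{I'} R$ with $I'(\symM) = (I(\symM))[t/\var{a}]$; its premises are exactly the substitutions of the premises of $\act{I} R$, so the induction hypothesis — including the boundary version, for the well-formedness premise of the conclusion's boundary — supplies their derivations, and since $\act{I'} R$ is a closure rule of the associated deductive system we are done. The delicate points are all bookkeeping: keeping the split $\Gamma, \var{a} \of A, \Delta$ consistent under simultaneous substitution into $\Delta$, into the ascribed types, and into the head of the conclusion, together with the renamings forced by the locally nameless treatment of bound variables in the abstraction cases. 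Everything else is a routine unwinding of the definitions of $\act{(-)}$, $\plug{(-)}{-}$, and substitution.
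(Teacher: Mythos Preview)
The paper does not actually prove this theorem: it appears in the subsection ``A review of meta-theorems'' and is explicitly recalled from~\cite{bauer:_finit} without proof. So there is no in-paper argument to compare against.

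That said, your plan is the standard one and is correct. A few remarks. First, the rules in \cref{fig:substitution-rules} are phrased for abstracted judgements $\abstr{x \of A}\,\JJ$, whereas you set up the induction for the free-variable form $\Theta;\Gamma,\var{a}\of A,\Delta \types \JJ$; this generalization is exactly what is needed to make the induction go through, and the stated rules follow by inverting \rref{TT-Abstr} and taking $\Delta$ empty. Second, you do not single out \rref{TT-Conv-Abstr}; it is not a substitution rule per se but a context-conversion principle, and it follows from the rest once you can substitute $\var{a} : A$ (converted to type $B$) back into the body. Third, your identification of the instantiation case as the crux is right, and your interchange observation---that the underlying raw rule lives over the empty variable context, so the free variable enters only through~$I$---is precisely the reason $\act{I'}R$ with $I'(\symM) = I(\symM)[t/\var{a}]$ is again a closure rule of the system. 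The rest is, as you say, bookkeeping.
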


\begin{figure}[pht]
  \centering
  \small
  \begin{ruleframe}
  \begin{mathpar}
    \inferenceRule{TT-Subst}
    {
      \Theta; \Gamma \types \abstr{x \of A} \; \JJ
      \\
      \Theta; \Gamma \types t : A
    }{
      \Theta; \Gamma \types \JJ[t/x]
    }

    \inferenceRule{TT-Bdry-Subst}
    {
      \Theta; \Gamma \types \abstr{x \of A} \; \BB
      \\
      \Theta; \Gamma \types t : A
    }{
      \Theta; \Gamma \types \BB[t/x]
    }

    \inferenceRule{TT-Subst-EqTy}
    {
      \Theta; \Gamma \types \abstr{x \of A} \abstr{\vec{y} \of \vec{B}} \; \isType{C} \\
      \Theta; \Gamma \types s : A \\
      \Theta; \Gamma \types t : A \\
      \Theta; \Gamma \types s \equiv t : A
    }{
      \Theta; \Gamma \types \abstr{\vec{y} \of \vec{B}[s/x]} \; C[s/x] \equiv C[t/x]
    }

    \inferenceRule{TT-Subst-EqTm}
    {
      \Theta; \Gamma \types \abstr{x \of A} \abstr{\vec{y} \of \vec{B}} \; u : C \\
      \Theta; \Gamma \types s : A \\
      \Theta; \Gamma \types t : A \\
      \Theta; \Gamma \types s \equiv t : A
    }{
      \Theta; \Gamma \types \abstr{\vec{y} \of \vec{B}[s/x]} \; u[s/x] \equiv u[t/x] : C[s/x]
    }

    \inferenceRule{TT-Conv-Abstr}
    {
      \Theta; \Gamma \types \abstr{x \of A}\; \JJ
      \\ \Theta; \Gamma \types \isType B \\ \Theta; \Gamma \types A \equiv B }
    { \Theta; \Gamma \types \abstr{x \of B}\; \JJ }
  \end{mathpar}
  \end{ruleframe}
  \caption{Admissible substitution rules}
  \label{fig:substitution-rules}
\end{figure}

Next we have admissibility of instantiations.

\begin{thm}[Admissibility of instantiations]
  \label{thm:admiss-of-inst}%
  In a raw type theory, let $I$ be a derivable instantiation of $\Xi$ over~$\Theta; \Delta$.
  If $\Xi; \Gamma \types \JJ$ is derivable and $|\Delta| \cap |\Gamma| = \emptyset$ then
  $\Theta; \Delta, \act{I} \Gamma \types \act{I} \JJ$ is derivable, and similarly for boundaries.
\end{thm}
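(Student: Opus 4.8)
The plan is to prove the statement by induction on the derivation of $\Xi; \Gamma \types \JJ$, carried out simultaneously with the corresponding statement for boundaries, examining the last closure rule applied. Throughout, I would use two routine syntactic facts about the action $\act{I}$: it commutes with substitution, $\act{I}(e[\vec{t}/\vec{x}]) = (\act{I} e)[\act{I}\vec{t}/\vec{x}]$, and it composes, $\act{I}(\act{J} e) = \act{(I \circ J)} e$ for the composition of instantiations from \cref{sec:type-theories}; both extend to judgements and boundaries. I would also invoke \cref{prop:tt-renaming} (renaming) to keep variable contexts disjoint and \cref{prop:tt-weakening} (weakening) to pass from $\Theta;\Delta$ to $\Theta;\Delta,\act{I}\Gamma$ as needed.

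The structural rules that involve neither metavariable instantiation nor binding — the equality and conversion rules of \cref{fig:equality-rules}, \rref{TT-Var}, and the object boundary rules — are immediate, since $\act{I}$ commutes with the syntactic constructions involved; for \rref{TT-Var} note $\act{I}\var{a} = \var{a}$ and $(\act{I}\Gamma)(\var{a}) = \act{I}(\Gamma(\var{a}))$, so the same rule reapplies. For \rref{TT-Abstr} and \rref{TT-Bdry-Abstr}, the sub-derivation lives over $\Xi;\Gamma,\var{a}\of A$ with $\var{a}\notin|\Gamma|$; after first renaming $\var{a}$ so that also $\var{a}\notin|\Delta|$, the induction hypothesis applies and reassembling with the same rule yields the conclusion. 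For the instantiations of the specific rules of $T$ and of their associated congruence rules, the key observation is that $\act{I}$ turns an instantiation $J$ of a rule's premises over $\Xi;\Gamma$ into the composite $I\circ J$ over $\Theta;\Delta,\act{I}\Gamma$, which is well-defined precisely because $|\Delta|\cap|\Gamma| = \emptyset$; by the composition identity, $\act{I}$ sends each premise and the conclusion of $\act{J} R$ to the corresponding premise and conclusion of $\act{(I\circ J)} R$, the instantiated premises are derivable by the induction hypothesis, and $\act{(I\circ J)} R$ is again a closure rule of the deductive system, giving the result. Congruence rules are treated identically using a pair of instantiations.

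The substantive case is \rref{TT-Meta}, where the hypothesis that $I$ is \emph{derivable} enters. Suppose the last step derives $\Xi;\Gamma\types\plug{(\B[\vec{t}/\vec{x}])}{\symM_k(\vec{t})}$ from $\Xi;\Gamma\types t_j : A_j[\upto{\vec{t}}{j}/\upto{\vec{x}}{j}]$ (and the presupposition premise), where $\Xi(\symM_k) = \abstr{x_1\of A_1}\cdots\abstr{x_m\of A_m}\;\B$ and $I(\symM_k) = \abstr{\vec{x}} e$. Derivability of $I$ gives $\Theta;\Delta\types\plug{(\upact{I}{k}\BB_k)}{I(\symM_k)}$, which unfolds to $\Theta;\Delta\types\abstr{x_1\of\act{I}A_1}\cdots\abstr{x_m\of\act{I}A_m}\;\plug{(\act{I}\B)}{e}$; weaken this to $\Theta;\Delta,\act{I}\Gamma$. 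By the induction hypothesis $\Theta;\Delta,\act{I}\Gamma\types\act{I}t_j : (\act{I}A_j)[\act{I}\upto{\vec{t}}{j}/\upto{\vec{x}}{j}]$ for each $j$. Substituting $\act{I}t_1,\ldots,\act{I}t_m$ successively into the abstracted judgement via admissibility of substitution (\cref{thm:substitution-admissible}, rule \rref{TT-Subst}), and using that $\act{I}$ commutes with substitution, yields exactly $\Theta;\Delta,\act{I}\Gamma\types\plug{((\act{I}\B)[\act{I}\vec{t}/\vec{x}])}{e[\act{I}\vec{t}/\vec{x}]} = \act{I}(\plug{(\B[\vec{t}/\vec{x}])}{\symM_k(\vec{t})})$, as required. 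The case \rref{TT-Meta-Congr} is analogous: one substitutes the equalities $\act{I}s_j \equiv \act{I}t_j$ into the abstracted object judgement $\abstr{\vec{x}\of\act{I}\vec{A}}\;\plug{(\act{I}\B)}{e}$ using \rref{TT-Subst-EqTy} and \rref{TT-Subst-EqTm} (admissible by \cref{thm:substitution-admissible}), together with the induction hypotheses on the typing premises. The boundary version of the theorem is threaded through the same induction, since a boundary derivation ends in one of the boundary rules whose premises are covered by the mutual hypothesis. The main obstacle is organizing the \rref{TT-Meta} case so that the iterated substitution lines up exactly with the derivability data for $I$; everything else is bookkeeping with renaming, weakening, and the commutation and composition identities for $\act{I}$.
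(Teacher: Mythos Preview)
The paper does not actually prove this theorem; it is one of the meta-theorems recalled without proof from~\cite{bauer:_finit} in the ``review of meta-theorems'' subsection. Your argument is correct and is the standard one: induction on the derivation, with the only substantive case being \rref{TT-Meta}, where derivability of~$I$ supplies the abstracted judgement $\Theta;\Delta\types\plug{(\upact{I}{k}\BB_k)}{I(\symM_k)}$ into which the instantiated arguments $\act{I}t_j$ are substituted via \cref{thm:substitution-admissible}; the specific-rule and congruence cases via the composite instantiation $I\circ J$ are handled exactly as one would expect.
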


\begin{thm}
  \label{thm:eq-inst-admit}%
  \label{thm:instEq}
  In a raw type theory, let $I$ and $J$ be judgementally equal derivable instantiations
  of~$\Xi$ over~$\Theta; \Gamma$.
  Suppose $\Xi \types \isCtx{\Delta}$ and $|\Gamma| \cap |\Delta| = \emptyset$.
  If $\Xi; \Delta \types \plug{\BB}{e}$ is a derivable object judgement then
  $\Theta; \Gamma, \act{I} \Delta \types \plug{(\act{I} \BB)}{\act{I} e \equiv \act{J} e}$ is derivable.
\end{thm}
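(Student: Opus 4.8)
The plan is to induct on the derivation of $\Xi; \Delta \types \plug{\BB}{e}$, in parallel with an analogous claim for boundaries, exactly as in the proof of \cref{thm:admiss-of-inst}. The statement we prove by induction is: whenever $I$ and $J$ are judgementally equal derivable instantiations of $\Xi$ over $\Theta;\Gamma$ with $|\Gamma|\cap|\Delta|=\emptyset$, then (a) for a derivable object judgement $\Xi;\Delta\types\plug{\BB}{e}$ we get $\Theta;\Gamma,\act{I}\Delta\types\plug{(\act{I}\BB)}{\act{I}e\equiv\act{J}e}$, and (b) for a derivable judgement $\Xi;\Delta\types\JJ$ of any form, $\act{I}\JJ$ and $\act{J}\JJ$ are ``componentwise equal'' in the appropriate sense — for an object type/term judgement this is (a), for an equality judgement it is vacuous, and the abstraction case threads an auxiliary type-equality obligation through. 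Throughout, \cref{thm:admiss-of-inst} supplies the ``plain'' instantiations $\act{I}(\ldots)$ and $\act{J}(\ldots)$ that appear as side premises, and the equality structural rules of \cref{fig:equality-rules} (reflexivity, symmetry, transitivity, conversion) together with the congruence rules of \cref{def:congruence-rule} and \rref{TT-Meta-Congr} are the workhorses for assembling the conclusion.

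The case analysis proceeds as follows. For \rref{TT-Var}: $\act{I}\var{a}=\act{J}\var{a}=\var{a}$, so the conclusion is an instance of \rref{TT-Tm-Refl} applied to the (derivable, by \cref{thm:admiss-of-inst}) judgement $\Theta;\Gamma,\act{I}\Delta\types\var{a}:\act{I}\Gamma(\var{a})$. For \rref{TT-Meta}: if $\symM_k\in|\Xi|$ with $I(\symM_k)=\abstr{\vec{x}}e_k$, $J(\symM_k)=\abstr{\vec{x}}f_k$, then $\act{I}(\symM_k(\vec t))=e_k[\act{I}\vec t/\vec x]$ and $\act{J}(\symM_k(\vec t))=f_k[\act{J}\vec t/\vec x]$; we use the hypothesis that $I$ and $J$ are judgementally equal (giving $e_k\equiv f_k$ over the right boundary) together with the induction hypothesis on the argument terms $t_j$ and \cref{thm:instEq} (or rather its own inductive incarnation — one must be careful this does not become circular, see below) to substitute. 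If $\symM_k\notin|\Xi|$ then $\act{I}$ and $\act{J}$ leave the head symbol alone and we apply \rref{TT-Meta-Congr} to the inductively-obtained equalities of the arguments. For a specific symbol rule application $\symS(\vec e)$, the conclusion is exactly an instance of the congruence rule associated with that symbol rule (\cref{def:congruence-rule}), whose premises are furnished by the induction hypothesis (the $f_i\equiv g_i$ premises for object arguments) and by \cref{thm:admiss-of-inst} (the plain typing premises); the ``$\act{I}B\equiv\act{J}B$ if $\B=(\Box:B)$'' premise is again the induction hypothesis applied to the type boundary. The equality structural rules (\rref{TT-Ty-Refl} through \rref{TT-Conv-Eq}) are handled by the corresponding structural rules on the instantiated side, using only that $I$ is derivable. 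Finally \rref{TT-Abstr}: from $\Xi;\Delta\types\abstr{x\of A}\JJ$ derived via $\Xi;\Delta,\var{a}\of A\types\JJ[\var{a}/x]$, apply the induction hypothesis over the extended variable context $\Delta,\var{a}\of A$ (still disjoint from $|\Gamma|$ after renaming if necessary), obtaining the equality under the abstraction, then reabstract using \rref{TT-Abstr}; one also invokes \cref{thm:substitution-admissible} / \rref{TT-Conv-Abstr} to reconcile the binder types $\act{I}A$ and $\act{J}A$ via the type equality $\act{I}A\equiv\act{J}A$ delivered by the induction hypothesis, exactly mirroring how \rref{TT-Conv-Abstr} is used elsewhere.

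The main obstacle is organizing the mutual induction so that it is genuinely well-founded and does not secretly presuppose \cref{thm:instEq} in the metavariable case. The honest move is to prove the more general statement about arbitrary judgements $\JJ$ (not just object judgements $\plug{\BB}{e}$), with the induction loading doing the work: when we hit $\symM_k(\vec t)$ we need equality of the substituted heads $e_k[\act{I}\vec t/\vec x]\equiv f_k[\act{J}\vec t/\vec x]$, which we obtain by transitivity through $e_k[\act{I}\vec t/\vec x]\equiv e_k[\act{J}\vec t/\vec x]$ — this is \rref{TT-Subst-EqTm}/\rref{TT-Subst-EqTy} (admissible by \cref{thm:substitution-admissible}, applied to the derivable judgement $\Xi;\Delta,\vec{x}\of\vec{A}\types e_k$ coming from derivability of the instantiation $I$) — and $e_k[\act{J}\vec t/\vec x]\equiv f_k[\act{J}\vec t/\vec x]$, which is \cref{thm:admiss-of-inst} applied to the derivable equality $\Xi\types\plug{(\upact{I}{k}\BB_k)}{e_k\equiv f_k}$ witnessing judgemental equality of $I$ and $J$. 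A secondary nuisance is the bookkeeping of presuppositions — ensuring the boundaries of all the intermediate equality judgements are well-formed so that the structural equality rules apply — but this is exactly what \cref{prop:presuppositivity}-style reasoning, available once \cref{thm:admiss-of-inst} is in hand, takes care of, so I would invoke it freely rather than track it by hand.
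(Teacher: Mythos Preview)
The paper does not actually prove this theorem: it appears in \S2.6 among the meta-theorems ``recalled from~\cite{bauer:_finit}'', with no proof given. So there is nothing to compare against, and your proposal must be judged on its own.

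Your overall strategy---structural induction on the derivation, using the congruence rules at symbol applications and \rref{TT-Meta-Congr}, and handling the instantiated-metavariable case via transitivity and the admissible substitution rules---is the right one, and you correctly flag the $\symM_k\in|\Xi|$ case as the crux. But there are two concrete slips there, both the same confusion of source and target contexts. The arguments $e_k=I(\symM_k)$ and $f_k=J(\symM_k)$ live over $\Theta;\Gamma$, not over $\Xi$: derivability of~$I$ gives $\Theta;\Gamma\types\plug{(\upact{I}{k}\BB_k)}{e_k}$, and the judgemental-equality hypothesis gives $\Theta;\Gamma\types\plug{(\upact{I}{k}\BB_k)}{e_k\equiv f_k}$, both in the \emph{target} context. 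So the abstracted judgement you feed to \rref{TT-Subst-EqTm} is (a weakening to $\Theta;\Gamma,\act{I}\Delta$ of) the former, not ``$\Xi;\Delta,\vec{x}\of\vec{A}\types e_k$''; and the step $e_k[\act{J}\vec t/\vec x]\equiv f_k[\act{J}\vec t/\vec x]$ comes from applying plain \rref{TT-Subst} to the equality $e_k\equiv f_k$ with the terms $\act{J}\vec t$, not from \cref{thm:admiss-of-inst}. These are easily repaired once you keep the contexts straight.

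One further point: the theorem is stated for \emph{raw} type theories, so you cannot invoke \cref{prop:presuppositivity} (which needs a finitary theory). Fortunately you do not need it. The non-economic form of a rule instantiation carries the boundary premise $q=(\Xi;\Delta\types\act{K}\B')$ as an explicit premise, and the boundary rules in \cref{fig:well-formed-boundaries} are the only way to derive it; inverting them exposes, e.g., the subderivation of $\Xi;\Delta\types\isType{B}$ when $\B'=(\Box:B)$, to which the induction hypothesis applies directly to produce the $\act{I}B\equiv\act{J}B$ premise of the congruence rule.
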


To obtain meta-theorems with genuine type-theoretic contents, we need to restrict to finitary type theories.

\begin{thm}[Presuppositivity]%
  \label{prop:presuppositivity}
  If a finitary type theory derives $\Theta \types \isCtx{\Gamma}$, and $\Theta; \Gamma \types \plug{\BB}{e}$ then it derives $\Theta; \Gamma \types \BB$.
\end{thm}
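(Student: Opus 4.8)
The plan is to prove Presuppositivity by induction on the derivation of the judgement $\Theta; \Gamma \types \plug{\BB}{e}$, making crucial use of the fact that we work in a finitary theory, so that the boundary of each specific rule is already known to be derivable. The statement to strengthen for the induction should cover all judgement forms uniformly: for a type judgement $\isType{A}$ the boundary $\isType{\Box}$ is derivable by \rref{TT-Bdry-Ty} with no hypotheses; for a term judgement $t : A$ we must derive $\Theta; \Gamma \types \Box : A$, i.e.\ $\Theta; \Gamma \types \isType{A}$; for the equality judgements $A \equiv B$ and $s \equiv t : A$ we must derive the corresponding object-boundary data ($\isType{A}$, $\isType{B}$, respectively $\isType{A}$ and $s : A$ and $t : A$). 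For abstracted judgements $\abstr{x \of A}\JJ$ we peel off the abstraction using \rref{TT-Bdry-Abstr}, which matches the abstraction structure of \rref{TT-Abstr}, so this case is immediate from the inductive hypothesis applied in the extended context $\Gamma, \var a \of A$.

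First I would dispatch the structural rules. The variable rule \rref{TT-Var} requires $\Theta; \Gamma \types \isType{\Gamma(\var a)}$, which follows from $\Theta \types \isCtx{\Gamma}$ together with \cref{prop:tt-weakening}; the metavariable rule \rref{TT-Meta} produces $\plug{(\B[\vec t/\vec x])}{\symM_k(\vec t)}$ and carries the premise $\Theta; \Gamma \types \B[\vec t/\vec x]$ explicitly, which is exactly the boundary we need (here the non-economic forms of the rules, which keep the boundary premise, are what make the argument go through smoothly). For the equality closure rules of \cref{fig:equality-rules}: reflexivity rules have the needed boundary among their premises; symmetry and transitivity reduce to the inductive hypotheses, possibly after an application of \rref{TT-Ty-Sym}; and the conversion rules \rref{TT-Conv-Tm} and \rref{TT-Conv-Eq} produce a judgement with boundary involving $B$, for which $\isType B$ is a premise, while the term/equation data at type $B$ is obtained from the inductive hypothesis at type $A$ followed by \rref{TT-Conv-Tm} or \rref{TT-Conv-Eq} itself (using $\Theta; \Gamma \types A \equiv B$). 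The boundary-formation rules of \cref{fig:well-formed-boundaries} do not derive object judgements of the relevant shape, so there is nothing to check there.

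The substantive cases are the instantiations of specific rules and of congruence rules. For an instantiation $\act I R$ of a specific rule $R = (\rawRule{\Theta'}{\plug{\B'}{e'}})$: since the theory is finitary, $\Theta'; \emptyCtx \types \B'$ is derivable, and the instantiation $I$ (restricted to $\Theta'$) is derivable by construction of the closure-rule instantiation; hence by \cref{thm:admiss-of-inst} we obtain $\Theta; \Gamma \types \act I \B'$, which is precisely the boundary of the conclusion. For congruence rules (\cref{def:congruence-rule}), the conclusion is $\plug{(\act I \B')}{\act I e' \equiv \act J e'}$, and the object-boundary data we must produce — the two object judgements $\plug{(\act I \B')}{\act I e'}$ and $\plug{(\act I \B')}{\act J e'}$ (and, when $\B' = (\Box : B)$, also $\act I B \equiv \act J B$) — are literally among the premises of the congruence rule. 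So again the needed data is immediately available. The metavariable congruence case is analogous, using the premises of \rref{TT-Meta-Congr}.

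The main obstacle I anticipate is bookkeeping rather than a deep difficulty: one must set up the induction with a strong enough invariant and, in particular, one must either (i) carry along the assumption $\Theta \types \isCtx{\Gamma}$ through the induction — and verify it is preserved when passing under an abstraction via \rref{TT-Bdry-Abstr}/\rref{VCtx-Extend} — so that \rref{TT-Var} can be handled, or (ii) arrange the statement and the structural rules so that context well-formedness is not needed. The cleanest route is to use the non-economic forms of all rules (those that retain the boundary premise), so that every closure rule deriving an object judgement literally records its own boundary or enough pieces to reassemble it; with that observation in hand, the proof is a routine case analysis, and the only place where genuine work (an appeal to \cref{thm:admiss-of-inst} and to finitariness) is required is the specific-rule case.
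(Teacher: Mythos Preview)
The paper does not prove this theorem itself: it appears in the subsection ``A review of meta-theorems'' as one of several results recalled without proof from the companion paper~\cite{bauer:_finit}. So there is no in-paper proof to compare against.

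Your sketch is the standard argument and is essentially correct. Two small imprecisions are worth flagging. First, in the conversion cases \rref{TT-Conv-Tm} and \rref{TT-Conv-Eq}, the judgement $\isType{B}$ is not literally a premise; you obtain it by applying the inductive hypothesis to the premise $\Theta;\Gamma \types A \equiv B$, yielding the boundary $\Theta;\Gamma \types A \equiv B \bye \Box$, and then inverting the syntax-directed rule \rref{TT-Bdry-EqTy}. Second, in the congruence case, the object judgements $\plug{(\act{I}\B')}{\act{I}e'}$ and $\plug{(\act{I}\B')}{\act{J}e'}$ are not literally among the premises of \cref{def:congruence-rule} either: the premises supply the individual arguments $f_i$ and $g_i$, from which you must reapply the underlying object rule (instantiated with~$I$, respectively with~$J$ followed by a conversion along the premise $\act{I}B \equiv \act{J}B$) to assemble the full heads. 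Neither point threatens the overall argument, and your identification of the specific-rule case as the place where finitariness and \cref{thm:admiss-of-inst} are genuinely needed is exactly right.
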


The next two theorems apply to standard type theories. The first one provides an inversion principle, and the second one guarantees that a term has at most one type, up to judgemental equality. Both rely on a candidate type that may be read off directly from the term.

\begin{defi}
  Let $T$ be a standard type theory.
  The \defemph{natural type} $\natty{\Theta; \Gamma}{t}$ of a term expression~$t$ with respect to a context~$\Theta; \Gamma$ is defined by:
  \begin{align*}
    \natty{\Theta; \Gamma}{\var{a}}
    &= \Gamma(a), \\
    \natty{\Theta; \Gamma}{\symM(t_1, \ldots, t_m)}
    &=
      \begin{aligned}[t]
      &A[t_1/x_1, \ldots, t_m/x_m] \\
      &\qquad \text{where $\Theta(\symM) = (\abstr{x_1 \of A_1} \cdots \abstr{x_m \of A_m} \; \Box : A)$}
      \end{aligned}
    \\
    \natty{\Theta; \Gamma}{\symS(e_1, \ldots, e_n)}
    &=
      \begin{aligned}[t]
      &\finmap{\symM_1 \mto e_1, \ldots, \symM_n \mto e_n}_* B\\
      &\qquad
         \begin{aligned}[t]
           &\text{where the symbol rule for $\symS$ is}\\
           &\text{$\rawRule{\symM_1 \of \BB_1, \ldots, \symM_n \of \BB_n}
                            {\symS(\genapp{\symM}_1, \ldots, \genapp{\symM}_n): B}$.}
      \end{aligned}
    \end{aligned}
  \end{align*}
\end{defi}

The following theorem is an inversion principle that recovers the ``stump'' of a derivation of a derivable object judgement.

\begin{thm}[Inversion theorem]
  \label{thm:inversion}
  If a standard finitary type theory derives a term judgement then it does so by a derivation which concludes with precisely one of the following rules:
  \begin{enumerate}
  \item the variable rule \rulename{TT-Var},
  \item the metavariable rule \rulename{TT-Meta},
  \item an instantiation of a symbol rule,
  \item the abstraction rule \rulename{TT-Abstr},
  \item the term conversion rule \rulename{TT-Conv-Tm} of the form
    \begin{equation*}
      \infer
      {\Theta; \Gamma \types t : \natty{\Theta;\Gamma}{t} \\
       \Theta; \Gamma \types \natty{\Theta; \Gamma}{t} \equiv A}
      {\Theta; \Gamma \types t : A}
    \end{equation*}
    where $\natty{\Theta;\Gamma}{t} \neq A$.
  \end{enumerate}
\end{thm}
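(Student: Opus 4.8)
The plan is to argue by induction on the derivation of the term judgement $\Theta; \Gamma \types t : A$, analysing the last rule used. The structural rules fall into several categories, and for most of them the conclusion already has the required shape. If the derivation concludes with \rref{TT-Var}, \rref{TT-Meta}, an instantiation of a symbol rule, or \rref{TT-Abstr}, we are in cases (1)–(4) and there is nothing to do. The rules \rref{TT-Conv-Eq}, being about equations, and the type and equation rules cannot conclude a term judgement, so they do not arise. The genuinely non-trivial case is when the last rule is \rref{TT-Conv-Tm}, and this is where the bulk of the work lies: we must massage an arbitrary chain of conversions into the single canonical conversion against the natural type $\natty{\Theta;\Gamma}{t}$.

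The key auxiliary fact I would isolate first is a \emph{natural-type lemma}: whenever $\Theta; \Gamma \types t : A$ is derivable in a standard finitary type theory, the natural type $\natty{\Theta;\Gamma}{t}$ is itself a derivable type over $\Theta;\Gamma$, and moreover $\Theta; \Gamma \types \natty{\Theta;\Gamma}{t} \equiv A$. This is proved by induction on the derivation of $t : A$, again by cases on the last rule. For \rref{TT-Var}, $\natty{\Theta;\Gamma}{\var a} = \Gamma(\var a)$ literally, so reflexivity \rref{TT-Ty-Refl} applies (after noting $\Theta;\Gamma \types \isType{\Gamma(\var a)}$, which follows from derivability of the context via \cref{prop:presuppositivity}, or rather from well-formedness of $\Gamma$). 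For \rref{TT-Meta} with $\symM_k$ an object (term) metavariable, $\natty{\Theta;\Gamma}{\symM_k(\vec t)} = A_k[\vec t/\vec x]$ is exactly the type appearing in the conclusion, and we invoke \cref{thm:substitution-admissible} to see it is derivable. For an instantiation of the symbol rule of $\symS$, the definition of $\natty{\Theta;\Gamma}{\symS(\vec e)}$ is precisely $I_* B$ where $I$ is the instantiating map and $B$ the conclusion type of the symbol rule; its derivability follows from \cref{thm:admiss-of-inst} applied to the rule-boundary, whose derivability we have since the theory is finitary. Finally, for \rref{TT-Conv-Tm} concluding $t : A$ from $t : A'$ and $A' \equiv A$, the induction hypothesis gives $\natty{\Theta;\Gamma}{t} \equiv A'$, and transitivity \rref{TT-Ty-Tran} with $A' \equiv A$ yields $\natty{\Theta;\Gamma}{t} \equiv A$. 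Note $\natty{\Theta;\Gamma}{t}$ depends only on the syntax of $t$, so it is unchanged along conversions — this is the crucial point that makes the argument go through.

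With the natural-type lemma in hand, the \rref{TT-Conv-Tm} case of the main theorem is immediate. Suppose the derivation of $\Theta; \Gamma \types t : A$ concludes with \rref{TT-Conv-Tm}. If it happens that $\natty{\Theta;\Gamma}{t} = A$, then by the natural-type lemma $t$ was already derivable at its natural type; but then we may inspect how that subderivation of $t : \natty{\Theta;\Gamma}{t}$ concludes — it cannot conclude with \rref{TT-Conv-Tm} against a \emph{different} type (if it did, peel it off and recurse, which terminates since each peeling strictly shrinks the derivation), so it concludes with one of (1)–(4), and we are done. Otherwise $\natty{\Theta;\Gamma}{t} \neq A$, and we build the required derivation directly: the natural-type lemma provides $\Theta; \Gamma \types t : \natty{\Theta;\Gamma}{t}$ and $\Theta; \Gamma \types \natty{\Theta;\Gamma}{t} \equiv A$, and one application of \rref{TT-Conv-Tm} to these two premises concludes $\Theta; \Gamma \types t : A$ in exactly the form demanded by case (5).

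The main obstacle, as suggested above, is organising the induction in the natural-type lemma so that the conversion case is handled cleanly: one must be careful that $\natty{\Theta;\Gamma}{t}$ is genuinely a function of $t$ alone (and of the context, through the types of variables and metavariables) and therefore invariant under \rref{TT-Conv-Tm}, so that the recursive appeal to \rref{TT-Ty-Tran} is legitimate. A secondary subtlety is the appeal to finitarity and to \cref{thm:admiss-of-inst} in the symbol-rule case, where one needs that the conclusion boundary of the symbol rule is derivable and that the instantiation extracted from the conclusion of \rref{TT-Var}-free syntax is itself derivable — this is where standardness (each symbol having exactly one associated symbol rule, so that $\natty{}{}$ is well-defined) is used.
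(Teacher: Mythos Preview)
The paper does not actually prove this theorem; it is one of the meta-theorems recalled without proof from~\cite{bauer:_finit}. So there is no in-paper argument to compare against, and your proposal has to stand on its own.

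Your overall strategy---strip trailing conversions to reach a derivation of $t$ at its natural type, and collect the intermediate type equalities---is the right one. But the way you package it has two related problems.

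First, your natural-type lemma as stated is not provable in the generality of the theorem. In the base cases \rulename{TT-Var}, \rulename{TT-Meta}, and symbol-rule instantiation you have $\natty{\Theta;\Gamma}{t} = A$ syntactically, so to obtain the judgemental equality $\natty{\Theta;\Gamma}{t} \equiv A$ you must invoke \rref{TT-Ty-Refl}, which in turn needs $\Theta;\Gamma \types \isType{A}$. You appeal to ``well-formedness of~$\Gamma$'' for this, but the inversion theorem does \emph{not} assume $\Theta \types \isCtx{\Gamma}$, and without it $\isType{\Gamma(\var a)}$ can genuinely fail (take $\Gamma = [\var c \of \sym{B}, \var a \of \sym{T}(\var c)]$ where $\sym{T}$ expects an argument of a different type~$\sym{A}$). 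The clean fix is to weaken the lemma's conclusion to the disjunction ``either $\natty{\Theta;\Gamma}{t} = A$ syntactically, or $\Theta;\Gamma \types \natty{\Theta;\Gamma}{t} \equiv A$ is derivable''; then the base cases are trivial and the \rref{TT-Conv-Tm} case still goes through by transitivity.

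Second, and more seriously, you twice claim that the natural-type lemma ``provides $\Theta; \Gamma \types t : \natty{\Theta;\Gamma}{t}$'', but your lemma says nothing of the sort---it only speaks about the \emph{type} $\natty{\Theta;\Gamma}{t}$, not about~$t$ inhabiting it. What actually supplies a derivation of $t : \natty{\Theta;\Gamma}{t}$ ending in one of (1)--(3) is precisely your ``peel off and recurse'' remark, which deserves to be the main lemma rather than an afterthought. Concretely: prove by induction on the derivation of $\Theta;\Gamma \types t : A$ that (a)~there is a derivation of $\Theta;\Gamma \types t : \natty{\Theta;\Gamma}{t}$ ending in \rulename{TT-Var}, \rulename{TT-Meta}, or a symbol-rule instance, and (b)~either $A = \natty{\Theta;\Gamma}{t}$ or $\Theta;\Gamma \types \natty{\Theta;\Gamma}{t} \equiv A$. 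The base cases give (a) immediately with $A = \natty{\Theta;\Gamma}{t}$; in the \rref{TT-Conv-Tm} case the induction hypothesis gives (a) unchanged, and (b) follows by transitivity with the conversion premise. The theorem is then immediate: use (a) alone when $A = \natty{\Theta;\Gamma}{t}$, and combine (a), (b) and one application of \rref{TT-Conv-Tm} when $A \neq \natty{\Theta;\Gamma}{t}$.
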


Finally, in a standard type theory a term has at most one type, up to judgemental equality.

\begin{thm}[Uniqueness of typing]%
  \label{thm:uniqueness-of-typing}%
  For a standard finitary type theory:
  \begin{enumerate}
  \item If $\Theta; \Gamma \types t : A$ and $\Theta; \Gamma \types t : B$ then $\Theta; \Gamma \types A \equiv B$.
  \item If $\Theta; \emptyCtx \types \isCtx{\Gamma}$ and $\Theta; \Gamma \types s \equiv t : A$ and $\Theta; \Gamma \types s \equiv t : B$ then $\Theta; \Gamma \types A \equiv B$.
  \end{enumerate}
\end{thm}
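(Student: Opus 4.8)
The plan is to derive both parts from the Inversion theorem (\cref{thm:inversion}) together with the natural type $\natty{\Theta;\Gamma}{t}$, using the key observation that the natural type is syntactically determined by the term and the context, hence is literally the same in any two derivations. For part (1), suppose $\Theta; \Gamma \types t : A$ and $\Theta; \Gamma \types t : B$. Apply \cref{thm:inversion} to each. I would argue that in either derivation, the judgement $\Theta;\Gamma \types t : \natty{\Theta;\Gamma}{t}$ is derivable: this is immediate from clauses (1)--(4) of the Inversion theorem (the variable, metavariable, symbol and abstraction rules each produce exactly the natural type as the ascribed type, by definition of $\natty{\Theta;\Gamma}{\cdot}$), and in the conversion case (5) it is one of the premises. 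Then from $\Theta;\Gamma \types t : A$, inversion gives either $A = \natty{\Theta;\Gamma}{t}$ or $\Theta;\Gamma \types \natty{\Theta;\Gamma}{t} \equiv A$; in the first case we have the equality by \rref{TT-Ty-Refl} after noting $\Theta;\Gamma \types \isType{\natty{\Theta;\Gamma}{t}}$ (which follows by \cref{prop:presuppositivity} applied to $\Theta;\Gamma \types t : \natty{\Theta;\Gamma}{t}$, once we promote via \cref{prop:promotion} to get a well-formed context, or simply invoke presuppositivity with $\Theta \types \isCtx{\Gamma}$ assumed). Symmetrically, $\Theta;\Gamma \types \natty{\Theta;\Gamma}{t} \equiv B$. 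Then \rref{TT-Ty-Sym} and \rref{TT-Ty-Tran} chain these to $\Theta;\Gamma \types A \equiv B$.

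For the abstraction case I should be slightly careful: if $t = \abstr{x\of C} t'$ then $\natty{\Theta;\Gamma}{t}$ as written above only covers term expressions that are variables or symbol/metavariable applications, so I would handle abstracted terms by induction on the abstraction depth, peeling off $\abstr{x\of C}$ on both sides. Concretely, inverting $\Theta;\Gamma\types \abstr{x\of C}t' : A$ via clause (4) forces $A$ to be of the form $\abstr{x\of C}A'$ with $\Theta;\Gamma,\var a\of C \types t'[\var a/x] : A'[\var a/x]$; doing the same for $B$ and applying the induction hypothesis in the extended context gives $\Theta;\Gamma,\var a\of C \types A'[\var a/x] \equiv B'[\var a/x]$, and then \rref{TT-Bdry-Abstr}/\rref{TT-Abstr}-style reasoning (or directly the congruence closure rules for abstracted equalities) re-abstracts to $\Theta;\Gamma \types A \equiv B$. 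I expect the inner (non-abstracted) base case to be exactly the natural-type argument of the previous paragraph.

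Part (2) reduces to part (1). From $\Theta;\Gamma \types s \equiv t : A$, the equality closure rules do not directly give $\Theta;\Gamma \types s : A$, but presuppositivity does: assuming $\Theta \types \isCtx{\Gamma}$ and given the derivable equality judgement $\Theta;\Gamma \types s \equiv t : A \bye \dummy$, \cref{prop:presuppositivity} yields $\Theta;\Gamma \types s \equiv t : A \bye \Box$, and inverting \rref{TT-Bdry-EqTm} gives $\Theta;\Gamma \types s : A$. Likewise $\Theta;\Gamma \types s : B$ from the second hypothesis. Now part (1) applied to the term $s$ gives $\Theta;\Gamma \types A \equiv B$.

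The main obstacle I anticipate is the bookkeeping around well-formedness of contexts needed to invoke \cref{prop:presuppositivity} (which requires $\Theta \types \isCtx{\Gamma}$): in part (1) no such hypothesis is stated, so I need to be sure the appeal to presuppositivity for deriving $\Theta;\Gamma \types \isType{\natty{\Theta;\Gamma}{t}}$ can be avoided or routed through \cref{prop:promotion}. In fact this can be sidestepped entirely: rather than proving $\isType{}$ of the natural type, observe that when $A = \natty{\Theta;\Gamma}{t}$ the conclusion $\Theta;\Gamma \types A \equiv B$ is just $\Theta;\Gamma \types A \equiv A$ in the subcase $A=B$, and otherwise one of the two inversions must land in clause (5), handing us the equality (or its symmetric form) for free without any separate reflexivity step. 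So the real content is entirely the Inversion theorem plus transitivity/symmetry of type equality, and the only genuine care needed is the abstraction induction.
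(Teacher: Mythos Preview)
The paper does not prove this theorem: it is recalled from~\cite{bauer:_finit} in the ``review of meta-theorems'' section, so there is no in-paper proof to compare against. Your overall strategy---invoke the Inversion theorem to pin both $A$ and $B$ to the natural type $\natty{\Theta;\Gamma}{t}$, then chain equalities by symmetry and transitivity, and reduce part~(2) to part~(1) via presuppositivity---is the standard argument and is essentially correct.

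Two remarks. First, your abstraction case is spurious: in the judgement form $\Theta;\Gamma \types t : A$ the head~$t$ is a \emph{term expression} (variable, symbol application, or metavariable application), never an abstracted argument $\abstr{x \of C} t'$; abstractions live in the argument grammar, not the term grammar. So $\natty{\Theta;\Gamma}{t}$ is always defined and clauses~(1)--(3),~(5) of \cref{thm:inversion} already exhaust the possibilities for a non-abstracted term judgement. You can simply drop that paragraph.

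Second, your final paragraph correctly identifies the only delicate subcase, $A = B = \natty{\Theta;\Gamma}{t}$, but does not actually discharge it: to conclude $\Theta;\Gamma \types A \equiv A$ via \rref{TT-Ty-Refl} you need $\Theta;\Gamma \types \isType{A}$. For symbol and metavariable applications this is available as (a sub-premise of) the boundary premise $q$ in the rule instantiation, but for $t = \var a$ the rule \rref{TT-Var} carries no such premise, and presuppositivity requires $\Theta \types \isCtx{\Gamma}$, which part~(1) does not assume. This is a genuine (if minor) wrinkle; the cleanest fix is to add the well-formed-context hypothesis to part~(1) as well, or to note that in every use of the theorem within the paper the context is in fact well formed.
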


\subsection{More meta-theorems}
\label{sec:more-meta-theor}

We state and prove several further meta-theorems.

\begin{prop}
  \label{prop:natty-inst}
  Let $T$ be a standard type theory and $I$ an instantiation of~$\Xi$ over $\Theta; \Gamma$. For a term expression $\symS(\vec{e})$ it holds that $\act{I}(\natty{\Xi ; \Delta}{\symS(\vec{e})}) = \natty{\Theta ; \Gamma, \act{I} \Delta}{\act{I} \symS(\vec{e})}$.
\end{prop}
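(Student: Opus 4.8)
The plan is to unfold the definition of the natural type on both sides of the claimed equation and to recognize the resulting identity as an instance of functoriality of the instantiation action. Since $T$ is standard, $\symS$ is a primitive term symbol with a unique associated symbol rule, which we write as
\[
  \rawRule{\symM_1 \of \BB_1, \ldots, \symM_n \of \BB_n}{\symS(\genapp{\symM}_1, \ldots, \genapp{\symM}_n) : B},
\]
where $B$ is a type expression over the metavariable context $\Xi_0 = [\symM_1 \of \BB_1, \ldots, \symM_n \of \BB_n]$ and the empty variable context; in particular, every metavariable occurring in $B$ is one of $\symM_1, \ldots, \symM_n$, and $B$ has no free variables. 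Writing $\vec e = (e_1, \ldots, e_n)$ and $K = \finmap{\symM_1 \mto e_1, \ldots, \symM_n \mto e_n}$, the definition of the natural type gives $\natty{\Xi;\Delta}{\symS(\vec e)} = \act K B$, while $\act I (\symS(\vec e)) = \symS(\act I e_1, \ldots, \act I e_n)$ is again an application of $\symS$, so the same definition applied over the context $\Theta; \Gamma, \act I \Delta$ yields $\natty{\Theta; \Gamma, \act I \Delta}{\act I (\symS(\vec e))} = \act{K'} B$ with $K' = \finmap{\symM_1 \mto \act I e_1, \ldots, \symM_n \mto \act I e_n}$.

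Next I would observe that $K'$ is exactly the composite instantiation $I \circ K$: here $K$ is an instantiation of $\Xi_0$ over $\Xi; \Delta$ and $I$ is an instantiation of $\Xi$ over $\Theta; \Gamma$, so --- tacitly using $|\Gamma| \cap |\Delta| = \emptyset$, which is needed for $\Gamma, \act I \Delta$ to be a well-formed context --- the composition $I \circ K$ is an instantiation of $\Xi_0$ over $\Theta; \Gamma, \act I \Delta$ with $(I \circ K)(\symM_i) = \act I(K(\symM_i)) = \act I e_i$. Hence the whole claim reduces to the single identity $\act I(\act K B) = \act{(I \circ K)} B$.

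That identity is an instance of functoriality of the instantiation action: for every expression $f$ all of whose metavariables lie among $\symM_1, \ldots, \symM_n$ --- and $B$ is such an expression --- one has $\act I(\act K f) = \act{(I \circ K)} f$. I would establish this by a straightforward structural induction on $f$. The cases of a bound variable, a dummy argument, an abstraction, and a symbol application are immediate from the defining clauses of the action together with the induction hypothesis, while free variables do not occur in $f$. The only case carrying content is a metavariable application $\symM_i(\vec t)$: writing $K(\symM_i) = \abstr{\vec x} g$, so that $(I \circ K)(\symM_i) = \abstr{\vec x} \act I g$, both $\act I(\act K (\symM_i(\vec t)))$ and $\act{(I \circ K)}(\symM_i(\vec t))$ evaluate to $(\act I g)[\act{(I \circ K)}\vec t / \vec x]$, using the induction hypothesis on the components of $\vec t$ and the routine fact that the instantiation action commutes with capture-avoiding substitution.

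The only mildly delicate point, and hence the main obstacle, is the variable-binding bookkeeping in this last case --- the commutation of instantiation with substitution and the attendant freshness conditions --- but this is entirely standard (indeed, this functoriality is precisely what underpins the associativity of instantiation composition recalled in \cref{sec:type-theories}), so no genuine difficulty arises, and everything else is a direct computation from the definitions.
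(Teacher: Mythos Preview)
Your proof is correct and follows essentially the same approach as the paper: unfold the natural type on both sides to reduce the claim to $\act{I}(\act{K} B) = \act{K'} B$ with $K' = \finmap{\symM_i \mto \act{I} e_i}$. The paper treats this last equality as a single computational step, whereas you spell out the underlying functoriality $\act{I} \circ \act{K} = \act{(I \circ K)}$ by structural induction; this extra detail is fine but not needed.
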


\begin{proof}
  Let $\rawRule{\symM_1 \of \BB_1, \ldots, \symM_n \of \BB_n}{\symS(\genapp{\symM}_1, \ldots, \genapp{\symM}_n): B}$ be the symbol rule for $\symS$. By unfolding the definition of the natural type we have
  \begin{align*}
    \act{I}(\natty{\Xi ; \Delta}{\symS(\vec{e})})
    & = \act{I}(\act{\finmap{\symM_1 \mto e_1, \ldots, \symM_n \mto e_n}} B)
    = \act{\finmap{\symM_1 \mto \act{I}e_1, \ldots, \symM_n \mto \act{I} e_n}} B\\
    & = \natty{\Theta; \Gamma, \act{I} \Delta}{\symS(\act{I} \vec{e})}
    = \natty{\Theta; \Gamma, \act{I} \Delta}{\act{I} (\symS(\vec{e}))} \qedhere
  \end{align*}
\end{proof}
Note that $I$ acts purely syntactically and needs not be derivable for the equation to hold. It is also worth pointing out that the equation does not hold for metavariable term expressions.
We now explicate two common usages of \cref{thm:inversion}.

\begin{cor}
  \label{cor:args-derivable}
  In a standard type theory, suppose the rule for~$\symS$ is
  \begin{equation*}
    \rawRule
    {\symM_1 \of \BB_1, \ldots, \symM_n \of \BB_n}
    {\plug{\B'}{\symS(\genapp{\symM}_1, \ldots, \genapp{\symM}_n)}}.
  \end{equation*}
  If the theory derives $\Theta; \Gamma \types \plug{\B}{\symS(\vec{e})}$
  then it also derives $\Theta; \Gamma \plug{(\upact{I}{i} \BB_i)}{e_i}$ for all $i = 1, \ldots, n$,
  where $I = \finmap{\symM_1 \mto e_1, \ldots, \symM_n \mto e_n}$.
\end{cor}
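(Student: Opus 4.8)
The plan is to invoke the Inversion theorem, \cref{thm:inversion}, on the given derivable judgement $\Theta; \Gamma \types \plug{\B}{\symS(\vec e)}$, whose head is $\symS(\vec e)$. The derivation cannot conclude with \rref{TT-Var} or \rref{TT-Meta}, whose conclusions have a free variable, resp.\ a metavariable application, as head; nor with \rref{TT-Abstr}, since $\B$ is a boundary thesis and hence $\plug{\B}{\symS(\vec e)}$ is not abstracted. If $\symS$ is a type symbol, then $\plug{\B}{\symS(\vec e)}$ is a type judgement and, there being no conversion rule for types, its derivation must end with an instantiation of a symbol rule. If $\symS$ is a term symbol, \cref{thm:inversion} applies and, after the exclusions above, leaves either an instantiation of a symbol rule, or an instance of \rref{TT-Conv-Tm} of the form in \cref{thm:inversion}(5). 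In all cases, the symbol rule involved must, by inspecting the head of its conclusion, be the unique rule $R$ for $\symS$ (standardness).

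Suppose first the derivation ends with an instantiation $\act{J} R$ of $R$, for some $J = \finmap{\symM_1 \mto e'_1, \ldots, \symM_n \mto e'_n}$ over $\Theta; \Gamma$. Unfolding the generic applications $\genapp{\symM}_i$ and the action of $J$ on them, one checks that $\act{J}(\symS(\genapp{\symM}_1, \ldots, \genapp{\symM}_n)) = \symS(e'_1, \ldots, e'_n)$: for object arguments this is a renaming of bound variables, and for equality arguments both sides are the relevant dummy. Hence the head of the conclusion of $\act{J} R$ is $\symS(e'_1, \ldots, e'_n)$, and equating it with the head $\symS(\vec e)$ of $\plug{\B}{\symS(\vec e)}$ forces $e'_i = e_i$ for all $i$, that is, $J = I$. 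The premises of the closure rule $\act{I} R$ are precisely $\Theta; \Gamma \types \plug{(\upact{I}{i} \BB_i)}{e_i}$ for $i = 1, \ldots, n$, each derivable as a subtree of the given derivation; this is the claim.

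It remains to handle the conversion case (for a term symbol $\symS$). If the derivation ends with the \rref{TT-Conv-Tm} instance of \cref{thm:inversion}(5), its left premise derives $\Theta; \Gamma \types \symS(\vec e) : \natty{\Theta; \Gamma}{\symS(\vec e)}$. Writing the conclusion boundary of $R$ as $\B' = (\Box : B)$, the definition of the natural type gives $\natty{\Theta; \Gamma}{\symS(\vec e)} = \act{I} B$, so this premise equals $\Theta; \Gamma \types \plug{(\act{I} \B')}{\symS(\vec e)}$. Applying \cref{thm:inversion} to it once more: the variable, metavariable and abstraction rules are excluded as before, and the conversion case is now impossible, because \cref{thm:inversion}(5) requires the target type to differ from the natural type of the subject, whereas here the target type \emph{is} $\natty{\Theta; \Gamma}{\symS(\vec e)}$. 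Hence this sub-derivation ends with an instantiation of $R$, reducing us to the previous paragraph.

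The calculations are routine; the only points needing care are the identity $\act{J}(\symS(\genapp{\symM}_1, \ldots, \genapp{\symM}_n)) = \symS(e'_1, \ldots, e'_n)$, which is where the definition of generic application enters, and the observation that the conversion case of \cref{thm:inversion} cannot recur, which is exactly what the side condition $\natty{\Theta;\Gamma}{t} \neq A$ there ensures. I anticipate no genuine obstacle beyond this bookkeeping.
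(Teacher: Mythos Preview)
Your proof is correct and follows the same approach as the paper, which simply says the judgement is derived by the symbol rule for~$\symS$, possibly followed by a conversion, and that the premises of that rule instance are the desired judgements. Your expansion is in fact more careful than the paper's one-line argument: you explicitly separate the type-symbol case (where \cref{thm:inversion} as stated does not literally apply, since it is phrased for term judgements) from the term-symbol case, and you spell out why a second application of inversion to the conversion premise cannot land in the conversion case again.
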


\begin{proof}
  By \cref{thm:inversion} the judgement is derived by an application of the symbol rule for~$\symS$, possibly followed by a conversion, whose premises are precisely the judgements of interest.
\end{proof}

\begin{cor}
  \label{cor:natty-derivable}%
  If a standard type theory derives $\Theta; \Gamma \types t : A$ then it also derives $\Theta; \Gamma \types t : \natty{\Theta; \Gamma}{t}$.
\end{cor}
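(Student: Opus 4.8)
The plan is to invoke the inversion theorem, \cref{thm:inversion}, on a given derivation of $\Theta; \Gamma \types t : A$ and do a case analysis on its concluding rule. Since $t : A$ is a non-abstracted judgement thesis, the abstraction rule \rref{TT-Abstr} cannot conclude the derivation, so only four cases remain: \rref{TT-Var}, \rref{TT-Meta}, an instantiation of a symbol rule, and the specific form of \rref{TT-Conv-Tm} recorded in \cref{thm:inversion}.

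In the conversion case there is nothing to do: the judgement $\Theta; \Gamma \types t : \natty{\Theta; \Gamma}{t}$ is already one of the premises of that instance of \rref{TT-Conv-Tm}, hence derivable. In each of the other three cases I would check, by merely unfolding the definition of the natural type, that the type $A$ appearing in the conclusion of the rule is syntactically identical to $\natty{\Theta; \Gamma}{t}$, so that the conclusion we want to derive is literally the hypothesis we already have. Concretely: for \rref{TT-Var} we have $t = \var{a}$ and $A = \Gamma(\var{a}) = \natty{\Theta;\Gamma}{\var{a}}$; for \rref{TT-Meta} we have $t = \symM(\vec{t})$ with $\Theta(\symM) = (\abstr{x_1 \of A_1} \cdots \abstr{x_m \of A_m}\; \Box : C)$ and $A = C[\vec{t}/\vec{x}] = \natty{\Theta;\Gamma}{\symM(\vec{t})}$; and for an instantiation $I = \finmap{\symM_1 \mto e_1, \ldots, \symM_n \mto e_n}$ of the symbol rule $\rawRule{\symM_1 \of \BB_1, \ldots, \symM_n \of \BB_n}{\symS(\genapp{\symM}_1, \ldots, \genapp{\symM}_n) : B}$ we have $t = \act{I}(\symS(\genapp{\symM}_1, \ldots, \genapp{\symM}_n)) = \symS(\vec{e})$ and $A = \act{I} B = \natty{\Theta;\Gamma}{\symS(\vec{e})}$.

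I do not expect any real obstacle: the definition of the natural type was engineered exactly so as to reproduce the type handed back by each of the non-conversion rules. The one spot deserving a sentence of justification is the symbol-rule case, where one must note that acting with $I$ on the generic applications $\genapp{\symM}_i = \abstr{\vec{x}} \symM_i(\vec{x})$ returns precisely the arguments $e_i$ (up to renaming of bound variables), so that the instantiated rule genuinely concludes $\Theta; \Gamma \types \symS(\vec{e}) : \act{I} B$ with $\act{I} B$ the natural type of $\symS(\vec{e})$; this is the same computation that underlies \cref{prop:natty-inst}.
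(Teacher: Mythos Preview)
Your proof is correct and follows the same approach as the paper. The paper's proof is simply more terse: it collapses your three non-conversion cases into the single observation that in those cases $A = \natty{\Theta;\Gamma}{t}$, and then handles the conversion case exactly as you do, by noting that $\Theta; \Gamma \types t : \natty{\Theta;\Gamma}{t}$ already occurs as a premise.
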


\begin{proof}
  By \cref{thm:inversion}, either $A = \natty{\Theta, \Gamma}{t}$ and there is nothing to prove, or
  the derivations ends with
  \begin{equation*}
    \infer
    {\Theta; \Gamma \types t : \natty{\Theta;\Gamma}{t} \\
      \Theta; \Gamma \types \natty{\Theta; \Gamma}{t} \equiv A}
    {\Theta; \Gamma \types t : A}
  \end{equation*}
  which contains the desired equality as a subderivation.
\end{proof}

We next prove a statement about instantiations that needs a couple of preparatory lemmas.

\begin{lem}
  \label{thm:presup-nested}%
  If a finitary type theory derives $\Theta \types \isCtx{\Gamma}$ and $\Theta; \Gamma \types \plug{\BB}{e \equiv e'}$ then it derives $\Theta; \Gamma \types \plug{\BB}{e'}$.
\end{lem}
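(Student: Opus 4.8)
The plan is to derive this from presuppositivity (\cref{prop:presuppositivity}), exploiting the fact that the boundary-formation rules of \cref{fig:well-formed-boundaries} are syntax-directed.

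First I would observe that the equality judgement $\plug{\BB}{e \equiv e'}$ fills an equality boundary $\BB^{\equiv}$, obtained from $\plug{\BB}{e \equiv e'}$ by putting $\Box$ back in place of the head: explicitly, $\BB^{\equiv}$ is $(e \equiv e' \bye \Box)$ when $\BB = (\isType\Box)$, it is $(e \equiv e' : A \bye \Box)$ when $\BB = (\Box : A)$, and it is $\abstr{x \of A}\BB_0^{\equiv}$ when $\BB = \abstr{x \of A}\BB_0$ with $e = \abstr{x}e_0$ and $e' = \abstr{x}e_0'$. Since $\Theta \types \isCtx{\Gamma}$ is derivable, \cref{prop:presuppositivity} applied to $\Theta;\Gamma \types \plug{\BB}{e\equiv e'}$ gives $\Theta;\Gamma \types \BB^{\equiv}$.

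It then remains to show, by induction on the object boundary $\BB$, that derivability of $\Theta;\Gamma \types \BB^{\equiv}$ entails derivability of $\Theta;\Gamma \types \plug{\BB}{e'}$. If $\BB$ is not abstracted, then $\BB^{\equiv}$ is a non-abstracted type- or term-equality boundary, so the last rule of its derivation must be \rref{TT-Bdry-EqTy} resp.\ \rref{TT-Bdry-EqTm} — the only rule whose conclusion has that shape — and the judgement $\Theta;\Gamma \types \isType{e'}$ resp.\ $\Theta;\Gamma \types e' : A$, which is precisely $\Theta;\Gamma \types \plug{\BB}{e'}$, occurs among its premises. If $\BB = \abstr{x \of A}\BB_0$, then the derivation of $\Theta;\Gamma \types \abstr{x \of A}\BB_0^{\equiv}$ must end with \rref{TT-Bdry-Abstr}, which yields $\Theta;\Gamma \types \isType{A}$, $\var{a} \notin |\Gamma|$, and $\Theta;\Gamma, \var{a}\of A \types \BB_0^{\equiv}[\var{a}/x]$; applying the induction hypothesis over the context $\Theta;\Gamma, \var{a}\of A$ — after the routine observation that forming $\BB^{\equiv}$ commutes with substitution of $\var a$ for $x$ — gives $\Theta;\Gamma, \var{a}\of A \types (\plug{\BB_0}{e_0'})[\var{a}/x]$, and one application of \rref{TT-Abstr} (using $\Theta;\Gamma \types \isType{A}$ and $\var{a}\notin|\Gamma|$) then produces $\Theta;\Gamma \types \abstr{x\of A}\plug{\BB_0}{e_0'} = \plug{\BB}{e'}$.

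I expect no serious obstacle here; the only points requiring care are the invocation of presuppositivity (which is exactly why the hypothesis $\Theta \types \isCtx{\Gamma}$ is assumed), the fact that inversion on a boundary derivation is immediate because each boundary shape has a unique applicable formation rule, and the bookkeeping that substitution of $\var a$ for $x$ commutes with the filling operation and with forming $\BB^{\equiv}$, so that the induction hypothesis applies to the opened sub-boundary.
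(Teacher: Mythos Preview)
Your proposal is correct and follows essentially the same approach as the paper: induction on the abstraction depth of~$\BB$, with presuppositivity and syntax-directed inversion doing the work. The only organizational difference is that you apply \cref{prop:presuppositivity} once at the outset and then invert \emph{boundary} derivations, whereas the paper inverts the abstracted equality \emph{judgement} at each inductive step (via \rref{TT-Abstr}) and invokes presuppositivity only at the base cases; both routes are equally valid and use the same ingredients.
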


\begin{proof}
  We proceed by induction on the number of abstractions in the object boundary~$\BB$.

  \inCaseText{$\BB = (\isType{\Box})$, $e = A$ and $e' = B$}
  \Cref{prop:presuppositivity} applied to the assumption $\Theta; \Gamma \types A \equiv B \bye \dummy$
  gives $\Theta; \Gamma \types A \equiv B \bye \Box$, from which $\Theta; \Gamma \types \isType{B}$ follows by inversion.

  \inCaseText{$\BB = (\Box : A)$}
  Similar to the previous case.

  \inCaseText{$\BB = (\abstr{x \of A} \; \BB')$}
  Inversion on the assumption $\Theta; \Gamma \types \abstr{x \of A} \; \plug{\BB'}{e \equiv e'}$ gives
  \begin{equation*}
    \Theta; \Gamma \types \isType{A}
    \qquad\text{and}\qquad
    \Theta; \Gamma, \var{a} \of A \types \plug{(\BB'[\var{a}/x])}{e[\var{a}/x] \equiv e'[\var{a}/x]}.
  \end{equation*}
  By induction hypothesis, the second judgement entails
  \begin{equation*}
    \Theta; \Gamma, \var{a} \of A \types \plug{(\BB'[\var{a}/x])}{e'[\var{a}/x]},
  \end{equation*}
  which we may abstract to the desired form.
\end{proof}

\begin{lem}
  \label{thm:bdry-fill-eq-stuff}%
  In a finitary type theory, consider judgementally equal derivable instantiations~$I$ and~$J$ of~$\Xi$ over $\Theta; \Gamma$, and suppose
  $\Xi \types \isCtx{\Delta}$
  and
  $\Xi; \Delta \types \BB$
  such that $|\Delta| \cap |\Gamma| = \emptyset$.
  If $\Theta; \Gamma, \act{I} \Delta \types \plug{(\act{I} \BB)}{e}$ is derivable then so is $\Theta; \Gamma, \act{I} \Delta \types \plug{(\act{J} \BB)}{e}$.
\end{lem}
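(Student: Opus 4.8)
The plan is to induct on the number of abstractions in $\BB$, stated uniformly over all $\Delta$ satisfying the hypotheses, so that in the abstraction step the induction hypothesis may be applied to a $\Delta$ enlarged by one entry (choosing any fresh variable outside $|\Gamma|$). The workhorse is \cref{thm:eq-inst-admit}: since $I$ and $J$ are judgementally equal derivable instantiations, every derivable object judgement $\Xi;\Delta\types\plug{\BB_0}{e_0}$ produces $\Theta;\Gamma,\act{I}\Delta\types\plug{(\act{I}\BB_0)}{\act{I}e_0\equiv\act{J}e_0}$. Applying this to the component judgements lying inside $\Xi;\Delta\types\BB$, extracted by inversion on the boundary rules of \cref{fig:well-formed-boundaries}, yields the equalities between $\act{I}$- and $\act{J}$-instantiated types and terms that let us pass from $\plug{(\act{I}\BB)}{e}$ to $\plug{(\act{J}\BB)}{e}$ by conversion.

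For the base cases: if $\BB = (\isType{\Box})$ then $\act{I}\BB = \act{J}\BB$ verbatim and there is nothing to do. If $\BB = (\Box : A)$, inversion of \rref{TT-Bdry-Tm} gives $\Xi;\Delta\types\isType{A}$, so \cref{thm:eq-inst-admit} supplies $\Theta;\Gamma,\act{I}\Delta\types\act{I}A\equiv\act{J}A$, and one application of \rref{TT-Conv-Tm} turns the hypothesis $\Theta;\Gamma,\act{I}\Delta\types e:\act{I}A$ into $\Theta;\Gamma,\act{I}\Delta\types e:\act{J}A$. When $\BB$ is an equality boundary, $e=\dummy$ and the presuppositions of $\Xi;\Delta\types\BB$ (again by inversion) feed \cref{thm:eq-inst-admit} to give equations between the corresponding $\act{I}$- and $\act{J}$-instantiated sides and ambient type; a few uses of symmetry, transitivity and \rref{TT-Conv-Eq} (or their type-level analogues) rewrite $\plug{(\act{I}\BB)}{\dummy}$ into $\plug{(\act{J}\BB)}{\dummy}$.

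The inductive step $\BB = \abstr{x\of A}\;\BB'$, with $e = \abstr{x}e'$, carries the real content. Inversion of \rref{TT-Abstr} on the hypothesis $\Theta;\Gamma,\act{I}\Delta\types\abstr{x\of\act{I}A}\;\plug{(\act{I}\BB')}{e'}$ gives $\Theta;\Gamma,\act{I}\Delta\types\isType{\act{I}A}$ and $\Theta;\Gamma,\act{I}\Delta,\var{a}\of\act{I}A\types(\plug{(\act{I}\BB')}{e'})[\var{a}/x]$ for a fresh $\var{a}$, while inversion of \rref{TT-Bdry-Abstr} on $\Xi;\Delta\types\BB$ gives $\Xi;\Delta\types\isType{A}$ and $\Xi;\Delta,\var{a}\of A\types\BB'[\var{a}/x]$. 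Thus $\Delta' = \Delta,\var{a}\of A$ satisfies the lemma's hypotheses at one fewer abstraction, with $\act{I}\Delta' = \act{I}\Delta,\var{a}\of\act{I}A$, and the induction hypothesis yields $\Theta;\Gamma,\act{I}\Delta,\var{a}\of\act{I}A\types(\plug{(\act{J}\BB')}{e'})[\var{a}/x]$. Re-abstracting by \rref{TT-Abstr} (using $\Theta;\Gamma,\act{I}\Delta\types\isType{\act{I}A}$) gives $\Theta;\Gamma,\act{I}\Delta\types\abstr{x\of\act{I}A}\;\plug{(\act{J}\BB')}{e'}$; the leading abstraction type is then corrected from $\act{I}A$ to $\act{J}A$ by the admissible rule \rref{TT-Conv-Abstr}, whose side conditions $\Theta;\Gamma,\act{I}\Delta\types\act{I}A\equiv\act{J}A$ (from \cref{thm:eq-inst-admit} applied to $\Xi;\Delta\types\isType{A}$) and $\Theta;\Gamma,\act{I}\Delta\types\isType{\act{J}A}$ (from \cref{prop:presuppositivity}) are at hand, producing $\Theta;\Gamma,\act{I}\Delta\types\abstr{x\of\act{J}A}\;\plug{(\act{J}\BB')}{e'} = \plug{(\act{J}\BB)}{e}$.

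I expect the abstraction case to be the main obstacle, for the precise reason that the conclusion must keep the ambient variable context $\Gamma,\act{I}\Delta$ fixed while the outermost abstraction type inside the head-filled boundary changes from $\act{I}A$ to $\act{J}A$, so one cannot simply "replay the derivation under $J$". The resolution is to run the induction on the head-filled body still under $\act{I}$ and only afterwards swap the outermost abstraction type via \rref{TT-Conv-Abstr}, which is legitimate precisely because \cref{thm:eq-inst-admit} certifies $\act{I}A\equiv\act{J}A$; a minor bookkeeping point is that the induction must be carried through all $\Delta$ at once, with abstraction variables chosen fresh for $\Gamma$, since the step enlarges $\Delta$.
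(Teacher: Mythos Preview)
Your proposal is correct and follows essentially the same route as the paper: the paper inducts on the structure of the derivation of $\Xi;\Delta\types\BB$ (which, since the boundary rules are syntax-directed, amounts to your induction on the number of abstractions), uses \cref{thm:eq-inst-admit} for each base case exactly as you do, and in the abstraction case inverts, applies the induction hypothesis over the extended context, re-abstracts, and finishes with \rref{TT-Conv-Abstr}. The only cosmetic difference is that the paper obtains the sub-boundary data directly from the last rule of the derivation rather than by inversion, and it notes explicitly that the fresh variable from the inverted hypothesis may need to be renamed to coincide with the one from the boundary derivation.
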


\begin{proof}
  We proceed by structural induction on the derivation of $\Xi; \Delta \types \BB$.

  \inCase{TT-Bdry-Ty}
  Trivial because $\act{I} \BB = (\isType{\Box}) = \act{J} \BB$.

  \inCase{TT-Bdry-Tm}
  If the derivation ends with
  \begin{equation*}
    \infer
    {
      \Xi; \Delta \types \isType{A}
    }{
      \Xi; \Delta \types \Box : A
    }
  \end{equation*}
  then $\Theta; \Gamma, \act{I} \Delta \types \act{I} A \equiv \act{J} A$ by \cref{thm:instEq} applied to the premise,
  hence we may convert $\Theta; \Gamma, \act{I} \Delta \types e : \act{I} A$ to
  $\Theta; \Gamma, \act{I} \Delta \types e : \act{J} A$.

  \inCase{TT-Bdry-EqTy}
  If the derivation ends with
  \begin{equation*}
    \infer
    {
      \Xi; \Delta \types \isType{A} \\
      \Xi; \Delta \types \isType{B}
    }{
      \Xi; \Delta \types A \equiv B \bye \Box
    }
  \end{equation*}
  then \cref{thm:instEq} applied to the premises gives us
  \begin{equation*}
    \Theta; \Gamma, \act{I} \Delta \types \act{I} A \equiv \act{J} A
    \qquad\text{and}\qquad
    \Theta; \Gamma, \act{I} \Delta \types \act{I} B \equiv \act{J} B.
  \end{equation*}
  Together with the assumption $\Theta; \Gamma, \act{I} \Delta \types \act{I} A \equiv \act{I} B$, these suffice to derive $\Theta; \Gamma, \act{I} \Delta \types \act{J} A \equiv \act{J} B$.

  \inCase{TT-Bdry-EqTm}
  similar to \rulename{TT-Bdry-EqTy}.

  \inCase{TT-Bdry-Abstr}
  Suppose $e = \abstr{x} e'$ and the derivation ends with
  \begin{equation*}
    \infer
    {
      \Xi; \Delta \types \isType A \\
      \var{a} \not\in |\Delta| \\
      \Xi; \Delta, \var{a} \of A \types \BB'[\var{a}/x]
    }{
      \Xi; \Delta \types \abstr{x \of A} \; \BB'
    }
  \end{equation*}
  where we may assume $\var{a} \not\in |\Gamma|$ without loss of generality.
  \Cref{thm:instEq} applied to the first premise derives
  \begin{equation}
    \label{eq:bdry-fill-eq-stuff-1}
    \Theta; \Gamma, \act{I} \Delta \types \act{I} A \equiv \act{J} A.
  \end{equation}
  By inverting the assumption $\Theta; \Gamma, \act{I} \Delta \types \abstr{x \of \act{I} A} \, \plug{(\act{I}\BB')}{e}$, and possibly renaming a free variable to~$\var{a}$, we obtain
  \begin{equation*}
    \Theta; \Gamma, \act{I} \Delta \types \isType{\act{I} A}
    \quad\text{and}\quad
    \Theta; \Gamma, \act{I} \Delta, \var{a} \of \act{I} A \types (\plug{(\act{I}\BB')}{e})[\var{a}/x].
  \end{equation*}
  Then the induction hypothesis for the second premise yields
  \begin{equation*}
    \Theta; \Gamma, \act{I} \Delta, \var{a} \of \act{I} A \types (\plug{(\act{J}\BB')}{e})[\var{a}/x],
  \end{equation*}
  which we may abstract to
  $
    \Theta; \Gamma, \act{I} \Delta \types \abstr{x \of \act{I} A} \, \plug{(\act{J}\BB')}{e}
  $
  and apply \rref{TT-Conv-Abstr} to convert it to the desired judgement
  $
    \Theta; \Gamma, \act{I} \Delta \types \abstr{x \of \act{J} A} \, \plug{(\act{J}\BB')}{e}
  $.
  The premise $\Theta; \Gamma, \act{I} \Delta \types \isType{\act{J} A}$ is derived
  by \cref{prop:presuppositivity} from~\eqref{eq:bdry-fill-eq-stuff-1}.
\end{proof}

\begin{prop}
  \label{lem:instantiations-equal-derivable}%
  In a finitary type theory, consider instantiations $I$ and $J$ of~$\Xi$ over $\Theta; \Gamma$,
  such that $\types \isExt{\Xi}$ and $\Theta \types \isCtx{\Gamma}$. If $I$ is derivable and~$I$ and~$J$ are judgementally equal then~$J$ is derivable.
\end{prop}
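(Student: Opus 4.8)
The plan is to verify, by strong induction on $k = 1, \ldots, n$, the $k$-th derivability condition for $J$, namely $\Theta; \Gamma \types \plug{(\upact{J}{k} \BB_k)}{f_k}$, where $\Xi = [\symM_1 \of \BB_1, \ldots, \symM_n \of \BB_n]$, $I = \finmap{\symM_1 \mto e_1, \ldots, \symM_n \mto e_n}$ and $J = \finmap{\symM_1 \mto f_1, \ldots, \symM_n \mto f_n}$. The induction hypothesis at index $k$ is exactly that $\upto{J}{k}$ is a derivable instantiation of $\upto{\Xi}{k}$ over $\Theta; \Gamma$; moreover $\upto{I}{k}$ is derivable because $I$ is, and $\upto{I}{k}$, $\upto{J}{k}$ are judgementally equal because their defining equations form a subset of those witnessing that $I$ and $J$ are judgementally equal. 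So throughout the induction we have at our disposal a pair of judgementally equal derivable instantiations of $\upto{\Xi}{k}$.

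For the inductive step I would first establish $\Theta; \Gamma \types \plug{(\upact{I}{k} \BB_k)}{f_k}$, i.e., the $k$-th boundary of $\Xi$ instantiated by the old $\upto{I}{k}$ but already filled with the new head $f_k$. This splits on the shape of $\BB_k$. If $\BB_k$ is an equality boundary then $e_k$ and $f_k$ are both forced to be the dummy argument $\abstr{\vec{x}} \dummy$ of the appropriate arity, hence $e_k = f_k$ and the claim is just the $k$-th derivability condition for $I$. If $\BB_k$ is an object boundary, then judgemental equality of $I$ and $J$ supplies $\Theta; \Gamma \types \plug{(\upact{I}{k} \BB_k)}{e_k \equiv f_k}$, and \cref{thm:presup-nested} — whose hypothesis $\Theta \types \isCtx{\Gamma}$ we are given — converts this into $\Theta; \Gamma \types \plug{(\upact{I}{k} \BB_k)}{f_k}$.

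Next I would replace $\upact{I}{k}$ by $\upact{J}{k}$ using \cref{thm:bdry-fill-eq-stuff}, instantiated with ambient metavariable context $\upto{\Xi}{k}$, empty local context $\Delta = \emptyCtx$, boundary $\BB_k$, head $f_k$, and the instantiations $\upto{I}{k}$ and $\upto{J}{k}$ (the disjointness condition $|\Delta| \cap |\Gamma| = \emptyset$ is vacuous). Its two side conditions, $\upto{\Xi}{k} \types \isCtx{\emptyCtx}$ and $\upto{\Xi}{k}; \emptyCtx \types \BB_k$, follow by inversion on the hypothesis $\types \isExt{\Xi}$: that derivation is a chain of applications of \rref{MCtx-Extend} over \rref{MCtx-Empty}, and so contains both $\types \isExt{\upto{\Xi}{k}}$ and the premise $\upto{\Xi}{k}; \emptyCtx \types \BB_k$ as subderivations, with $\upto{\Xi}{k} \types \isCtx{\emptyCtx}$ then following by \rref{VCtx-Empty}. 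The conclusion of \cref{thm:bdry-fill-eq-stuff} is $\Theta; \Gamma, \upact{I}{k} \emptyCtx \types \plug{(\upact{J}{k} \BB_k)}{f_k}$, which, since $\upact{I}{k} \emptyCtx = \emptyCtx$, is precisely the $k$-th derivability condition for $J$. This closes the induction and shows $J$ is derivable.

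Since \cref{thm:presup-nested} and \cref{thm:bdry-fill-eq-stuff} already do the heavy lifting, the argument is essentially assembly and bookkeeping; the only point that needs genuine attention is the case split in the first part of the inductive step, because both \cref{thm:presup-nested} and the notion of judgemental equality only mention object boundaries, so equality-boundary premises must be dispatched separately (which is painless, since their heads are forced to be equal dummies).
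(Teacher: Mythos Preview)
Your proof is correct and follows essentially the same approach as the paper: induction on the length of~$\Xi$, using \cref{thm:presup-nested} and \cref{thm:bdry-fill-eq-stuff} at each step. The only cosmetic difference is the order in which you invoke the two lemmas: you first extract the new head~$f_k$ via \cref{thm:presup-nested} and then swap $\upact{I}{k}$ for $\upact{J}{k}$ via \cref{thm:bdry-fill-eq-stuff}, whereas the paper first swaps the instantiations (applying \cref{thm:bdry-fill-eq-stuff} to the equation $\plug{(\upact{I}{k}\BB_k)}{e_k \equiv f_k}$) and only then extracts~$f_k$; both orderings work.
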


\begin{proof}
  We prove the claim by induction on the length of~$\Xi$. The base case is trivial.
  For the induction step we assume the statement, and show that is still holds when we extend $\Xi$, $I$ and $J$ by one more entry. Specifically, assume that
  \begin{equation}
    \label{eq:inst-eq-der-1}%
    \Xi; \emptyCtx \types \BB,
    \qquad\text{and}\qquad
    \Theta; \Gamma \types \plug{(\act{I} \BB)}{e},
  \end{equation}
  and if $\BB$ is an object boundary also that
  \begin{equation}
    \label{eq:inst-eq-der-2}%
    \Theta; \Gamma \types \plug{(\act{I} \BB)}{e \equiv e'}.
  \end{equation}
  Then we must demonstrate $\Theta; \Gamma \types \plug{(\act{J} \BB)}{e'}$.

  If~$\BB$ is an equality boundary then
  applying \cref{thm:bdry-fill-eq-stuff} to~\eqref{eq:inst-eq-der-1} gives
  $
    \Theta; \Gamma \types \plug{(\act{J} \BB)}{e}
  $,
  and we are done because $e$ and $e'$ are the same.

  If~$\BB$ is an object boundary then applying
  \cref{thm:bdry-fill-eq-stuff} to $\Xi; \emptyCtx \types \BB$ and~\eqref{eq:inst-eq-der-2} gives
  $
    \Theta; \Gamma \types \plug{(\act{J} \BB)}{e \equiv e'}
  $.
  The derivability of $\Theta; \Gamma \types \plug{(\act{J} \BB)}{e'}$ now follows from~\cref{thm:presup-nested}.
\end{proof}

\section{The equality checking algorithm}
\label{sec:type-direct-equal}

The equality checking algorithm applies inference rules by pattern matching them against (parts of) the target equation. We therefore begin by studying the type-theoretic and syntactic properties of rules by which the soundness of pattern matching can be ensured.

\subsection{Patterns and object-invertible rules}
\label{sec:invertible-rules}

In order to derive $\Theta; \Gamma \types \J'$ with the rule $\rawRule{\Xi}{\J}$, we must find an instantiation~$I$ of $\Xi$ over~$\Theta; \Gamma$ such that $\act{I} \J = \J'$. We shall be primarily interested in rules where such~$I$ is unique, when it exists.

\begin{defi}
  \label{def:deterministic-rule}
  A raw rule $\rawRule{\Xi}{\J}$ is \defemph{deterministic} when for every judgement $\Theta; \Gamma \types \J'$ there exists at most one instantiation~$I$ of~$\Xi$ over~$\Theta; \Gamma$ such that $\act{I} \J = \J'$, called a \defemph{matching instantiation}.
\end{defi}

We refrain from trying to characterize the deterministic rules, and instead observe that, given a deterministic rule
\begin{equation*}
  R = (\rawRule {\symM_1 \of \BB_1, \ldots, \symM_n \of \BB_n} {\J})
\end{equation*}
and a judgement $\Theta; \Gamma \types \J'$ we may algorithmically compute~$I$ such that $\act{I} \J = \J'$, or decide that it does not exist.
First of all, every object metavariable of~$R$ must appear in~$\J$, or else $R$ would match in multiple ways the judgement $\Theta, \Theta'; \emptyCtx \types \J$, where $\Theta'$ is a copy of $\Theta$ in which each~$\symM_i$ is replaced with~$\symM'_i$.
Therefore, for any instantiation
\begin{equation*}
  I = \finmap{\symM_1 \mto e_1, \ldots, \symM_n \mto e_n}
\end{equation*}
where $\arity{e_i} = \arity{\BB_i} = (c_i, n_i)$ and $e_i = \abstr{x_1, \ldots, x_{n_i}} e'_i$, the size of~$\act{I} \J$ equals or exceeds the size of each $e'_i$.
We may therefore look for an instantiation that matches $\Theta; \Gamma \types \J$ by
exhaustively searching through all $e'_i$'s over $\Theta; \Gamma$ whose sizes are bounded by the size of~$\J$, of which there are only finitely many.
Of course, we are not suggesting that anyone should use such an exhaustive search in practice.
Instead, we provide a simple syntactic criterion that makes a rule deterministic and easy to match against.

\begin{defi}
  \label{def:pattern}
  \defemph{Patterns} are expressions in which every metavariable occurs at most once either in an application without arguments~$\symM()$, or in an argument of the form $\abstr{\vec{x}} \symM(\vec{x})$, where $\vec{x}$ are the only bound variables in scope. They are described by the grammar in \cref{fig:syntax-patterns}.
\begin{figure}[h]
  \centering
  \small
  \begin{ruleframe}
  \begin{align*}
  \text{Type pattern}\ P
  \bnfis& \symM()
  \bnfor \symS(q_1, \ldots, q_n)           && \text{if $\mv{q_i} \cap \mv{q_j} = \emptyset$ for $i \neq j$}
  \\
  \text{Term pattern}\ p
  \bnfis& \symS(q_1, \ldots, q_n)           && \text{if $\mv{q_i} \cap \mv{q_j} = \emptyset$ for $i \neq j$}
  \\
  \text{Argument pattern}\ q
  \bnfis& \abstr{\vec{x}} \symM(\vec{x})
  \bnfor  P
  \bnfor  p
  \end{align*}
  \end{ruleframe}
  \caption{The syntax of patterns.}
  \label{fig:syntax-patterns}
\end{figure}
\end{defi}
Note that $\symM()$ can only appear as a type pattern, but not as a term pattern. The reason for this lies in the definitions of computation rules (\cref{def:type-computation-rule,def:term-computation-rule}) which we shall see later on.

As defined, the patterns are \emph{linear} in the sense that a metavariable cannot appear several times, and \emph{first-order} because patterns may not appear under abstractions. Non-linearity is not an essential limitation, as we shall see shortly. The restriction to first-order patterns arises because in general a standard type theory may not satisfy the \emph{strengthening} principle which states that if $\types \abstr{x \of A} \JJ$ is derivable and $x \notin \bv{\JJ}$ then $\types \JJ$ is derivable. The principle allows a higher-order pattern to safely extract an expression from within an abstraction, so long as no bound variables escape their scopes.

\begin{exa}
  The head of the conclusion of a symbol rule
  \begin{equation*}
    \rawRule
    {\symM_1 \of \BB_1, \ldots, \symM_n \of \BB_n}
    {\plug{\B}{\symS(\genapp{M}_1, \ldots, \genapp{M}_n)}}
  \end{equation*}
  is a pattern because $\genapp{\symM}_i$ has required form $\abstr{x} \symM_i(\vec{x})$.
\end{exa}

\begin{exa}
  \label{exa:linearize}
  Consider the $\beta$-rule for the first projection from a binary product:
  \begin{equation*}
    \infer{
      \types \isType{\sym{A}}
      \\
      \types \isType{\sym{B}}
      \\
      \types \sym{s} : \sym{A}
      \\
      \types \sym{t} : \sym{B}
    }{
      \types
      \sym{fst}(\sym{A}, \sym{B}, \sym{pair} (\sym{A}, \sym{B}, \sym{s}, \sym{t}))
      \equiv
      \sym{s}
      : \sym{A}
    }
  \end{equation*}
  The left-hand side of the conclusion is not a pattern because the metavariables~$\sym{A}$ and~$\sym{B}$ occur twice each. We may linearize the pattern at the cost of equational premises:
  \begin{equation}
    \label{eq:fst-computation}
    \infer{
      \types \isType{\sym{A}_1}
      \\
      \types \isType{\sym{A}_2}
      \\
      \types \isType{\sym{B}_1}
      \\
      \types \isType{\sym{B}_2}
      \\\\
      \types \sym{s} : \sym{A}_2
      \\
      \types \sym{t} : \sym{B}_2
      \\
      \types \sym{A}_1 \equiv \sym{A}_2
      \\
      \types \sym{B}_1 \equiv \sym{B}_2
    }{
      \types
      \sym{fst}(\sym{A}_1, \sym{B}_1, \sym{pair} (\sym{A}_2, \sym{B}_2, \sym{s}, \sym{t}))
      \equiv
      \sym{s}
      : \sym{A}_1
    }
  \end{equation}
  The new rule is inter-derivable with the original one.
  It is clear that the technique works generally and that it can be automated.
\end{exa}

\begin{exa}
  Consider the rule stating that the identity function is the neutral element for composition:
  \begin{equation*}
    \infer{
      \types \isType{\sym{A}}
      \\
      \types \isType{\sym{B}}
      \\
      \types \sym{f} : \sym{A} \to \sym{B}
    }{
      \types \sym{compose}(\sym{A}, \sym{B}, \sym{f}, \uplambda(\sym{A}, \sym{A}, \abstr{x} x))
             \equiv
             \sym{f} : \sym{A} \to \sym{B}
    }
  \end{equation*}
  The left-hand side of the conclusion is not a pattern because $\uplambda(\sym{A}, \sym{A}, \abstr{x} x)$ is not a pattern. Once again we can remedy the situation by introducing an additional equational premise:
  \begin{equation*}
    \infer{
      \types \isType{\sym{A}}
      \\
      \types \isType{\sym{B}}
      \\
      \types \sym{f} : \sym{A} \to \sym{B}
      \\\\
      \types \sym{i} : \sym{A} \to \sym{A}
      \\
      \types \sym{i} \equiv \lambda(\sym{A}, \sym{A}, \abstr{x} x) : \sym{A} \to \sym{A}
    }{
      \types \sym{compose}(\sym{A}, \sym{B}, \sym{f}, i)
             \equiv
             \sym{f} : \sym{A} \to \sym{B}
    }
  \end{equation*}
\end{exa}

\begin{prop}
  \label{prop:rule-deterministic}
  If $\rawRule{\Xi}{\plug{\B}{p}}$ is a rule such that~$p$ is a pattern and $\mv{p} = \objmv{\Xi}$ then the rule is deterministic.
\end{prop}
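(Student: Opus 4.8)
Unwinding \cref{def:deterministic-rule}, the plan is to show that any two instantiations $I_1, I_2$ of $\Xi$ over a context $\Theta; \Gamma$ satisfying $\act{I_1}(\plug{\B}{p}) = \act{I_2}(\plug{\B}{p})$ must be equal. Since an instantiation is just a map on the metavariables of $\Xi$, it suffices to compare $I_1$ and $I_2$ metavariable by metavariable. For a metavariable carrying an \emph{equality} boundary, the only argument of the correct arity is a dummy $\abstr{\vec{x}} \dummy$, so $I_1$ and $I_2$ agree there automatically; hence it is enough to prove that $I_1$ and $I_2$ agree on $\objmv{\Xi}$, which by hypothesis equals $\mv{p}$.

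First I would observe that, by inspection of the two possible forms of the conclusion boundary $\B$ (namely $\isType{\Box}$ and $\Box : A$), the head $p$ sits in a position of the judgement $\plug{\B}{p}$ that is preserved by instantiation: concretely $\act{I}(\plug{\B}{p}) = \plug{(\act{I}\B)}{\act{I}p}$, and the head of $\plug{\BB}{e}$ is recoverable from the judgement $\plug{\BB}{e}$ itself. Consequently, $\act{I_1}(\plug{\B}{p}) = \act{I_2}(\plug{\B}{p})$ forces $\act{I_1} p = \act{I_2} p$. (If the target judgement is not of the form $\plug{\B}{h}$ for any head $h$, then no instantiation matches it and the rule is vacuously deterministic on it.) The proposition thus reduces to the following lemma: \emph{for any pattern $e$ and any instantiations $I_1, I_2$ defined on $\mv{e}$, if $\act{I_1} e = \act{I_2} e$ then $I_1(\symM) = I_2(\symM)$ for all $\symM \in \mv{e}$.}

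I would prove the lemma by structural induction on $e$ along the grammar of \cref{fig:syntax-patterns}. In the two rigid forms in which a metavariable may occur, its value is read off directly: $\act{I_j}(\symM()) = I_j(\symM)$, and, since an instantiation fixes bound variables, $\act{I_j}(\abstr{\vec{x}} \symM(\vec{x})) = I_j(\symM)$; so $\act{I_1} e = \act{I_2} e$ immediately gives $I_1(\symM) = I_2(\symM)$. In the case $e = \symS(q_1, \ldots, q_n)$, equality of the two symbol applications forces $\act{I_1} q_i = \act{I_2} q_i$ for each $i$; the induction hypothesis applied to each $q_i$ then gives agreement of $I_1$ and $I_2$ on $\mv{q_i}$, and since $\mv{e} = \bigcup_i \mv{q_i}$ the claim follows. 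I do not anticipate a genuine obstacle: the proposition just says that matching against a first-order pattern is a deterministic operation, and the restriction in \cref{def:pattern} to the rigid occurrences $\symM()$ and $\abstr{\vec{x}} \symM(\vec{x})$ is precisely what trivialises the inductive step — no higher-order matching is ever required. The only things needing a little care are the bookkeeping in the reduction step and the remark that equality metavariables are pinned down by their arities; note in particular that \emph{linearity} of patterns plays no role in determinism (it only serves the matching algorithm), so the argument uses solely the rigidity of metavariable occurrences.
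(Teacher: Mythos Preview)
Your proposal is correct and follows the same approach as the paper: equality metavariables are forced to be dummies, and agreement of $I_1$ and $I_2$ on $p$ forces agreement on every object metavariable because each such metavariable occurs in~$p$ in a rigid position from which its instantiating argument can be read off. The paper's proof compresses your inductive argument into the single phrase ``they agree on object metavariables because they all appear in~$p$'', leaving implicit both the head-extraction step and the structural induction you spell out; your closing observation that linearity is irrelevant to determinism is also correct and worth keeping.
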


\begin{proof}
  Consider a judgement $\Theta; \Gamma \types \plug{\B'}{e}$, and instantiations $J$ and $K$ of $\Xi$ over $\Theta; \Gamma$ such that $\act{J} p = \act{K} p = e$.
  Then $J$ and $K$ agree on object metavariables because they all appear in~$p$, and on equational metavariables because they must.
\end{proof}

We shall use patterns to find matching instantiations, when they exist. For this purpose we define the following notation.

\begin{defi}
  \label{def:matches}%
  Given $\Xi$, a pattern~$p$ over~$\Xi$ such that $\mv{p} = \objmv{\Xi}$, and an expression $e$ over~$\Theta; \Gamma$, we write
  \begin{equation*}
    \matches{\Xi}{p}{t}{I}
    \qquad\text{and}\qquad
    \notmatches{\Xi}{p}{t}
  \end{equation*}
  respectively when $I$ is an instantiation of~$\Xi$ over $\Theta; \Gamma$ such that $\act{I} p = t$, and when there is no such instantiation.
\end{defi}

The reader should convince themselves that there is an obvious algorithm that computes from $\Xi$, $p$ and $t$ the unique~$I$ such that $\matches{\Xi}{p}{t}{I}$, or decides that it does not exist.

Rules are used not only to derive judgements, but also to \emph{invert} derivable judgements to their premises, for the purpose of analyzing them. For example, when a term is normalized, we decide what steps to take by observing its structure, which amounts to applying an inversion principle, such as \cref{thm:inversion}. In general, we may invert a derivable judgement $\Theta; \Gamma \types \J'$ using a rule
\begin{equation*}
  R = (\rawRule{\symM_1 \of \BB_1, \ldots, \symM_n \of \BB_n}{\J})
\end{equation*}
by finding a derivable instantiation~$I = \finmap{\symM_1 \mto e_1, \ldots, \symM_n \mto e_n}$ of its premises over~$\Theta; \Gamma$ such that $\act{I} \J = \J'$. If~$I$ is found, the judgement can be derived using the instantiation $\act{I}R$,
\begin{equation*}
  \infer{
    \Theta; \Gamma \types \plug{(\upact{I}{k} \BB_k)}{e_k}
    \ \text{for $k = 1, \ldots, n$}
  }{
    \Theta; \Gamma \types \act{I} \J
  }
\end{equation*}
Under favorable conditions, it may happen that some of the above premises are known to be derivable ahead of time, so there is no need to rederive them. We are particularly interested in the case where all the object premises are of this kind.

\begin{defi}
  \label{def:equational-residue}
  Let $\Xi = [\symM_1 \of \BB_1, \ldots, \symM_n \of \BB_n]$ be a metavariable context whose equational metavariables are $\symM_{i_1}, \ldots, \symM_{i_m}$. Given an instantiation $I$ of~$\Xi$ over~$\Theta; \emptyCtx$ such that $|\Xi| \cap |\Theta| = \emptyset$, the \defemph{equational residue} $\residue{\Xi}{I}$ is the metavariable context
  \begin{equation*}
    \residue{\Xi}{I} =
      [\Theta,
       \symM_{i_1} \of \upact{I}{i_1} \BB_{i_1}, \ldots, \symM_{i_m} \of \upact{I}{i_m} \BB_{i_m}].
  \end{equation*}
  The \defemph{residual instantiation~$\residueInst{I}$} of $\Xi$ over $\residue{\Xi}{I}$ and $\emptyCtx$ is defined by
  \begin{equation*}
     \residueInst{I}(\symM_i) =
     \begin{cases}
       I(\symM_i)       & \text{if $\symM_i \in \objmv{\Xi}$,}\\
       \genapp{\symM_i} & \text{otherwise.}
     \end{cases}
  \end{equation*}
\end{defi}

\begin{defi}
  \label{def:invertible-rule}
  In a raw type theory, a derivable raw rule
  $R = (\rawRule{\Xi}{\J})$
  is \defemph{object-invertible} when the following holds:
  whenever $I$ instantiates $\Xi$ over~$\Theta; \emptyCtx$,
  with $\types \isExt{\Theta}$ and $|\Xi| \cap |\Theta| = \emptyset$,
  if $\Theta; \emptyCtx \types \act{I}{\J}$ is derivable then so is the residual instantiation~$\residueInst{I}$.
\end{defi}

Let us explain how object-invertible rules shall be used. Suppose $\rawRule{\Xi}{s : A}$ is object-invertible, $\rawRule{\Xi}{s \equiv t : A}$ is derivable, $I$ instantiates $\Xi$ over $\Theta; \Gamma$, and $\Theta; \Gamma \types \act{I} s : \act{I} A$ is given. We would like to derive $\Theta; \Gamma \types \act{I} s \equiv \act{I} t : \act{I} A$ so that we may rewrite $\act{I} s$ to $\act{I} t$. Thus we must verify that~$I$ is derivable.
By object-invertibility its object premises are derivable, so we only need to check its equational ones.
The following proposition ensures that such a procedure is valid.

\begin{prop}
  \label{prop:object-invertible-promote}
  Consider an object-invertible rule $\rawRule{\Xi}{\J}$ and an instantiation $I$ over $\Theta; \Gamma$,
  such that $\Theta; \Gamma \types \act{I} \J$ is derivable. Then $I$ is derivable if, for every equational boundary
  $\BB = \abstr{\vec{x} \of \vec{A}} \; \B$ in $\Xi$, the judgement
  $
    \Theta; \Gamma \types \plug{(\act{I}\BB)}{\abstr{\vec{x}}\dummy}
  $
  is derivable.
\end{prop}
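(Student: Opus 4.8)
The plan is to reduce to the empty variable context, feed the hypothesis $\Theta;\Gamma\types\act{I}\J$ into object-invertibility to obtain the residual instantiation $\residueInst{I}$, and then instantiate away the auxiliary equality metavariables of the residue using the assumed derivability of the dummy-filled equational boundaries. Concretely, by \cref{prop:promotion} I would first assume without loss of generality that $\Gamma = \emptyCtx$, promoting $I$ and every judgement in sight, and rename $\Xi$ so that $|\Xi|\cap|\Theta| = \emptyset$; throughout I take the standing well-formedness assumption $\types\isExt{\Theta}$ (which before promotion is $\types\isExt{(\Theta,\Gamma)}$, i.e.\ $\types\isExt{\Theta}$ together with $\Theta\types\isCtx{\Gamma}$). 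Writing $\Xi = [\symM_1\of\BB_1,\ldots,\symM_n\of\BB_n]$ and $I = \finmap{\symM_1\mto e_1,\ldots,\symM_n\mto e_n}$, with equality metavariables $\symM_{i_1},\ldots,\symM_{i_m}$, the goal is to derive $\Theta;\emptyCtx\types\plug{(\upact{I}{k}\BB_k)}{e_k}$ for every $k$.

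Since $\Theta;\emptyCtx\types\act{I}\J$ is derivable and $\types\isExt{\Theta}$, \cref{def:invertible-rule} tells us that $\residueInst{I}$ is derivable over $\residue{\Xi}{I};\emptyCtx$, where $\residue{\Xi}{I} = [\Theta,\symM_{i_1}\of\upact{I}{i_1}\BB_{i_1},\ldots,\symM_{i_m}\of\upact{I}{i_m}\BB_{i_m}]$. The crucial observation is that $\residueInst{I}$ and $I$ coincide as syntactic maps: on an object metavariable this holds by definition of $\residueInst{I}$, and on an equality metavariable $\symM_{i_j}$ both are $\abstr{\vec{x}}\dummy$ — the residual value because $\genapp{\symM_{i_j}} = \abstr{\vec{x}}\dummy$, and $I(\symM_{i_j})$ because an argument of equality arity can only be a (possibly abstracted) dummy. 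Hence for every $k$ we already have $\residue{\Xi}{I};\emptyCtx\types\plug{(\upact{I}{k}\BB_k)}{e_k}$, and it only remains to pull this back from the enlarged metavariable context $\residue{\Xi}{I}$ to $\Theta$.

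For that I would introduce the instantiation $K$ of $\residue{\Xi}{I}$ over $\Theta;\emptyCtx$ that is the generic (identity) instantiation on the $\Theta$-part and sends each $\symM_{i_j}$ to $\abstr{\vec{x}}\dummy$, and verify that $K$ is derivable. The obligations coming from the $\Theta$-part say exactly that the generic instantiation of a well-formed metavariable context is derivable, which is a routine induction using \rref{TT-Meta}, \rref{TT-Abstr} and weakening; the obligation coming from $\symM_{i_j}$, once one notes that boundaries never mention equality metavariables, so that $K$ acts trivially on $\upact{I}{i_j}\BB_{i_j} = \act{I}\BB_{i_j}$, is precisely the hypothesis $\Theta;\emptyCtx\types\plug{(\act{I}\BB_{i_j})}{\abstr{\vec{x}}\dummy}$. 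Applying \cref{thm:admiss-of-inst} to $K$ and the judgements of the previous paragraph yields $\Theta;\emptyCtx\types\act{K}\bigl(\plug{(\upact{I}{k}\BB_k)}{e_k}\bigr)$; since this judgement mentions only metavariables of $\Theta$, on which $K$ is the identity, its $K$-action is itself, giving $\Theta;\emptyCtx\types\plug{(\upact{I}{k}\BB_k)}{e_k}$, i.e.\ $I$ is derivable. Undoing the promotion with \cref{prop:promotion} finishes the argument.

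I expect the main obstacle to be careful bookkeeping rather than a real difficulty. One must check that $\upact{I}{k}\BB_k$ and $e_k$ involve no metavariables beyond those of $\Theta$ — which relies on object boundaries carrying equality data only through dummy arguments, never through equality-metavariable applications — so that $K$ genuinely fixes the judgements being transported, and that the various restrictions $\upto{K}{-}$ used when verifying derivability of $K$ collapse to the generic instantiation on the relevant prefixes of $\residue{\Xi}{I}$. A secondary point is simply to be explicit about which well-formedness hypotheses ($\types\isExt{\Theta}$, $\Theta\types\isCtx{\Gamma}$) are in force, so that \cref{def:invertible-rule} and the derivability of the generic instantiation can be legitimately invoked.
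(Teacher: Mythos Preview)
Your proposal is correct and follows essentially the same route as the paper. The paper packages the argument as the factorization $J = K \circ \residueInst{J}$ together with the fact that composition of derivable instantiations is derivable, whereas you unfold this: you observe directly that $\residueInst{I}$ and $I$ coincide as maps (since equality-arity arguments are necessarily abstracted dummies), and then transport the resulting judgements from $\residue{\Xi}{I}$ down to $\Theta$ by applying \cref{thm:admiss-of-inst} with the same instantiation~$K$.
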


\begin{proof}
  Let $J$ be the promotion of~$I$ to $(\Theta, \Gamma)$ and the empty context.
  Because the rule is object-invertible, $\residueInst{J}$ is derivable.
  Next, we promote each judgement from the statement to
  \begin{equation}
    \label{eq:object-invertible-promote-premises}
    (\Theta, \Gamma) ; \emptyCtx \types \plug{(\act{J}\BB)}{\abstr{\vec{x}} \dummy}.
  \end{equation}
  and observe that $J = K \circ \residueInst{J}$, where $K$ is the instantiation of
  $\residue{\Xi}{J}$ over $(\Theta, \Gamma)$ and $\emptyCtx$ defined by
  \begin{equation*}
    K(\sym{M}) =
    \begin{cases}
      \genapp{\symM}         & \text{if $\symM \in |(\Theta, \Gamma)|$,} \\
      \abstr{\vec{x}} \dummy & \text{otherwise.}
    \end{cases}
  \end{equation*}
  Because $\residueInst{J}$ is derivable, and $K$ is derivable thanks to derivability of judgements~\eqref{eq:object-invertible-promote-premises}, it follows that~$J$ is derivable.
  Therefore, $I$ is derivable too.
\end{proof}

\begin{exa}
  Let us demonstrate how equational residues are going to be used in rewriting.
  Suppose we have derived
  \begin{equation}
    \label{eq:example-residue}
    \Theta; \emptyCtx \types
    \sym{fst}(U_1, V_1, \sym{pair} (U_2, V_2, u, v))
    : U_1
  \end{equation}
  and would like to apply the $\beta$-rule~\eqref{eq:fst-computation} to it, i.e., we
  would like to establish
  \begin{equation}
    \label{eq:example-residue-eq}
    \Theta; \emptyCtx \types
    \sym{fst}(U_1, V_1, \sym{pair} (U_2, V_2, u, v)) \equiv u
    : U_1
  \end{equation}
  First, using \Cref{prop:presuppositivity}, we extract from~\eqref{eq:fst-computation}
  the derivability of its left-hand side
  \begin{equation}
    \label{eq:example-residue-lhs}
    \infer{
      \types \isType{\sym{A}_1}
      \\
      \types \isType{\sym{A}_2}
      \\
      \types \isType{\sym{B}_1}
      \\
      \types \isType{\sym{B}_2}
      \\\\
      \types \sym{s} : \sym{A}_2
      \\
      \types \sym{t} : \sym{B}_2
      \\
      \types \sym{A}_1 \equiv \sym{A}_2 \bye \zeta
      \\
      \types \sym{B}_1 \equiv \sym{B}_2 \bye \xi
    }{
      \types
      \sym{fst}(\sym{A}_1, \sym{B}_1, \sym{pair} (\sym{A}_2, \sym{B}_2, \sym{s}, \sym{t}))
      : \sym{A}_1
    }
  \end{equation}
  where we labeled the equational premises with metavariables~$\upzeta$ and~$\upxi$.
  We may compare~\eqref{eq:example-residue-lhs} with~\eqref{eq:example-residue} to get a matching instantiation
  \begin{equation*}
    I =
       \finmap{
        \sym{A}_1 \mto U_1,
        \sym{A}_2 \mto U_2,
        \sym{B}_1 \mto V_1,
        \sym{B}_2 \mto V_2,
        \sym{s} \mto u,
        \sym{t} \mto v,
        \upzeta \mto \dummy,
        \upxi \mto \dummy
       }
  \end{equation*}
  of its premises over $\Theta; \emptyCtx$.
  Now it would be a mistake to simply instantiate~\eqref{eq:fst-computation} with~$I$ because the equational premises $\upzeta$ and $\upxi$ may not be derivable (the object premises are derivable by \cref{prop:presuppositivity}).
  However, because~\eqref{eq:example-residue-lhs} is object-invertible by \cref{cor:rule-invertible}, proved below, the residual instantiation
  \begin{equation*}
    \residueInst{I} =
       \finmap{
        \sym{A}_1 \mto U_1,
        \sym{A}_2 \mto U_2,
        \sym{B}_1 \mto V_1,
        \sym{B}_2 \mto V_2,
        \sym{s} \mto u,
        \sym{t} \mto v,
        \upzeta \mto \upzeta,
        \upxi \mto \upxi
       },
  \end{equation*}
  is derivable. Hence, we may instantiate~\eqref{eq:fst-computation} with $\residueInst{I}$ to derive
  \begin{align*}
    &\Theta, \zeta \of (U_1 \equiv U_2 \bye \Box), \xi \of (V_1 \equiv V_2 \bye \Box); \emptyCtx \types \\
    &\qquad\qquad\qquad \sym{fst}(U_1, V_1, \sym{pair} (U_2, V_2, u, v)) \equiv u : U_1.
  \end{align*}
  Thus we must still verify
  $
    \Theta; \emptyCtx \types U_1 \equiv U_2
  $
  and
  $
    \Theta; \emptyCtx \types V_1 \equiv V_2
  $,
  in order to conclude~\eqref{eq:example-residue-eq}, precisely as expected.
\end{exa}



Whether a rule is object-invertible depends not just on the rule itself, but on the ambient type theory too, for it may happen that $\Theta; \Gamma \types \act{I}{\J}$ is not derivable by the rule under consideration, but by another one that instantiates to the same conclusion.

\begin{exa}
  Consider the standard type theory whose specific rules are
  \begin{mathpar}
    \infer{ }{\types \isType{\sym{0}}}

    \infer{ }{\types \isType{\sym{1}}}

    \infer{ }{\types \sym{u} : \sym{1}}

    \infer{
      \types \sym{v} : \sym{1}
    }{
      \types \isType{\sym{T}(\sym{v})}
    }

    \infer{
      \types \sym{e} : \sym{0}
    }{
      \types \sym{0} \equiv \sym{1}
    }
  \end{mathpar}
  The derivable object rule
  %
  %
  \begin{equation*}
    \infer{
       \types \sym{e} : \sym{0}
    }{
      \types \isType{\sym{T}(\sym{e})}
    }
  \end{equation*}
  is not object-invertible, because the instantiation $I = \finmap{\sym{e} \mto \sym{u}}$ yields the derivable judgement $\emptyExt; \emptyCtx \types \isType{\sym{T}(\sym{u})}$, but $\emptyExt; \emptyCtx  \types \sym{u} : \sym{0}$ is not derivable.
\end{exa}

In the previous example the culprit is the application of term conversion to a metavariable. As it turns out, such conversions of variables are the principal obstruction to invertibility, so we define a syntactic property of judgements which prevents them.

\begin{defi}
  \label{def:natural-for-variables}%
  An object judgement $\Theta; \Gamma \types \JJ$ is \defemph{natural for variables} when the relation $\Theta; \Gamma \natur \JJ$ can be deduced using the rules in \cref{fig:natural-for-variables}.
  \begin{figure}[htpb]
  \centering
  \small
  \begin{ruleframe}
    \begin{mathpar}
      \infer
      {
        \Gamma(\var{a}) = A
      }{
        \Theta; \Gamma \natur \var{a} : A
      }

      \infer
      {
        \var{a} \not\in |\Gamma| \\
        \Theta; \Gamma, \var{a} \of A \natur \JJ[\var{a}/x]
      }{
        \Theta; \Gamma \natur \abstr{x \of A} \JJ
      }

      \infer
      {
        \Theta(\symM) = (\abstr{\vec{x} \of \vec{A}} \; \isType{\Box}) \\\\
        \Theta; \Gamma \natur t_i : A[\upto{\vec{t}}{i}/\upto{\vec{x}}{i}] \quad \text{for $i = 1, \ldots, n$}
      }{
        \Theta; \Gamma \natur \isType{\symM(t_1, \ldots, t_n)}
      }

      \infer
      {
        \Theta(\symM) = (\abstr{\vec{x} \of \vec{A}} \; \Box : B) \\\\
        \Theta; \Gamma \natur t_i : A[\upto{\vec{t}}{i}/\upto{\vec{x}}{i}] \quad \text{for $i = 1, \ldots, n$}
      }{
        \Theta; \Gamma \natur \symM(t_1, \ldots, t_n) : B[\vec{t}/\vec{x}]
      }

      \infer
      {
        \text{Rule for $\symS$ is $\rawRule{\symM_1 \of \BB_1, \ldots, \symM_n \of \BB_n}{\plug{\B}{\symS(\genapp{M}_1, \ldots, \genapp{M}_n)}}$} \\\\
        I = \finmap{\symM_1 \mto e_1, \ldots, \symM_n \mto e_n} \\\\
        \Theta; \Gamma \natur \plug{(\upact{I}{i} \BB_i)}{e_i}  \quad \text{if $\BB_i$ is an object boundary}
      }{
        \Theta; \Gamma \natur \plug{\B'}{\symS(e_1, \ldots, e_n)}
      }
    \end{mathpar}
  \end{ruleframe}
  \caption{Object judgements that are natural for variables}
  \label{fig:natural-for-variables}
\end{figure}
\end{defi}

The point of this definition is that a derivable judgement which is natural for variables has a derivation in which
any application of \rref{TT-Meta} and \rref{TT-Var} is \emph{not} immediately followed by a conversion, unless it appears in a subderivation of an equality judgement. The claim is established by a straightforward induction on the derivation of~$\Theta; \Gamma \natur \JJ$ with the help of \cref{thm:inversion}.

The obvious pattern-matching algorithm scans a pattern and compares it to a term. It instantiates metavariables one by one and possibly out of order, which results in a chain of instantiations, each of which instantiates just one metavariable. Let us study such instantiations.

\begin{defi}
  Let $\Xi = [\symM_1 \of \BB_1, \ldots, \symM_n \of \BB_n]$ be a metavariable context, and $e$ an argument over $\upto{\Xi}{k}$ and the empty context with $\arity{e} = \arity{\BB_k}$. The \defemph{basic instantiation} $\basic{\Xi}{\symM_k}{e}$ is defined by
  \begin{equation}
    \basic{\Xi}{\symM_k}{e}(\symM_i) =
    \begin{cases}
      \genapp{\symM}_i & \text{ if $\symM_k \neq \symM_i$,} \\
      e & \text{ if $\symM_k = \symM_i$.}
    \end{cases}
  \end{equation}
  It is an instantiation of $\Xi$ over the metavariable context
  \begin{equation*}
    \basicCod{\Xi}{\symM_k}{e} =
    [\symM_1 \of \BB'_1, \ldots,\symM_{k-1} \of \BB'_{k-1}, \symM_{k+1} \of \BB'_{k+1}, \ldots, \symM_n \of \BB'_n]
  \end{equation*}
  and the empty context, where
  $\BB'_j = \upact{\basic{\Xi}{\symM_k}{e}}{j} \BB_j$.
\end{defi}

\begin{lem}
  \label{lm:basic-inst-derivable}
  A basic instantiation $\basic{\Xi}{\symM_k}{e}$ is derivable if $\types \isExt{\Xi}$ and $\upto{\Xi}{k} \types \plug{\BB_k}{e}$, in which case $\types \isExt{\basicCod{\Xi}{\symM_k}{e}}$ also holds.
\end{lem}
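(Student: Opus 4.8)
The plan is to argue by induction on the number $n - k$ of metavariables of $\Xi$ that follow $\symM_k$, establishing the two assertions \emph{simultaneously} (the well-formedness claim feeds into the derivability claim). Write $B = \basic{\Xi}{\symM_k}{e}$. First I would record two bookkeeping observations. The restriction $\upto{B}{j}$ sends every metavariable other than $\symM_k$ to its generic application, and acting by a tuple of generic applications is the identity; hence $\upact{B}{j}\BB_j = \BB_j$ whenever $j \le k$, while for $j > k$ the boundary $\BB'_j := \upact{B}{j}\BB_j$ is just $\BB_j$ with $\symM_k$ instantiated by $e$. Second, inversion on \rref{MCtx-Empty} and \rref{MCtx-Extend} turns $\types\isExt{\Xi}$ into $\types\isExt{\upto{\Xi}{j}}$ and $\upto{\Xi}{j};\emptyCtx\types\BB_j$ for every $j \le n$. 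I will also use the routine fact $(\ast)$ that when $\types\isExt{\Theta}$ and $\symM\of\BB$ occurs in $\Theta$ one has $\Theta;\emptyCtx\types\plug{\BB}{\genapp{\symM}}$, proved by stripping the abstractions of $\BB$ with \rref{TT-Abstr} and closing with \rref{TT-Meta} applied to the freshly introduced variables — in effect, that the generic instantiation of a well-formed metavariable context is derivable.

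In the base case $k = n$ there is nothing to substitute after $\symM_k$, so $\basicCod{\Xi}{\symM_k}{e} = \upto{\Xi}{k}$ and $\types\isExt{\basicCod{\Xi}{\symM_k}{e}}$ is immediate. Derivability of $B$ then reduces to the hypothesis $\upto{\Xi}{k};\emptyCtx\types\plug{\BB_k}{e}$ for the $\symM_k$-entry, and to $(\ast)$, applied in $\upto{\Xi}{k}$, for each earlier entry $\symM_j$ with $j < k$.

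For the step $k < n$ I would peel off the last metavariable. Applying the induction hypothesis to $\Xi' = \upto{\Xi}{n}$ (which has the same prefix $\upto{\Xi}{k}$, so the hypothesis carries over) gives that $\upto{B}{n} = \basic{\Xi'}{\symM_k}{e}$ is a derivable instantiation of $\Xi'$ over $\Psi;\emptyCtx$, where $\Psi := \basicCod{\Xi'}{\symM_k}{e}$, and that $\types\isExt{\Psi}$; note $\basicCod{\Xi}{\symM_k}{e} = [\Psi,\symM_n\of\BB'_n]$. Then $\types\isExt{\Xi}$ supplies $\Xi';\emptyCtx\types\BB_n$, and admissibility of instantiations (\cref{thm:admiss-of-inst}) along the derivable instantiation $\upto{B}{n}$ converts this to $\Psi;\emptyCtx\types\act{(\upto{B}{n})}\BB_n = \BB'_n$; since $\symM_n\notin|\Psi|$, \rref{MCtx-Extend} yields $\types\isExt{\basicCod{\Xi}{\symM_k}{e}}$. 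Derivability of $B$ then follows: for $j < n$ the $j$-th condition is literally the $j$-th derivability condition of $\basic{\Xi'}{\symM_k}{e}$ over $\Psi$ (restricting to $\Xi'$ leaves $\upto{B}{j}$ and $B(\symM_j)$ untouched), so the induction hypothesis together with weakening (\cref{prop:tt-weakening}) by $\symM_n\of\BB'_n$ settles it; for $j = n$ we have $B(\symM_n) = \genapp{\symM}_n$ (as $n \ne k$), and the required $\basicCod{\Xi}{\symM_k}{e};\emptyCtx\types\plug{\BB'_n}{\genapp{\symM}_n}$ is $(\ast)$ in the context just shown well-formed.

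The genuinely delicate points are the two bookkeeping observations — keeping straight which restrictions $\upto{B}{j}$ act trivially, and identifying the truncation of $\upto{B}{n}$ before $\symM_j$ with $\upto{B}{j}$ for $j < n$ — and having $(\ast)$ available. Everything structural is then just weakening, inversion on \rref{MCtx-Extend}, and admissibility of instantiations. The hard part, I expect, is not any single step but ensuring $(\ast)$ is in place (it is standard but not stated in the excerpt) and not slipping in the identification of the various instantiation restrictions.
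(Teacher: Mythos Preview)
Your proof is correct and uses the same ingredients as the paper's: the observation that $\upact{B}{j}\BB_j = \BB_j$ for $j \le k$, the fact $(\ast)$ that generic applications are well-typed (the paper invokes this inline as ``abstraction and the metavariable rule''), weakening, and admissibility of instantiations. The only difference is organizational: you run an explicit induction on $n-k$ peeling off the last metavariable, whereas the paper argues directly by a case split on $i < k$, $i = k$, $i > k$ with an implicit left-to-right sweep (the $i > k$ case needs $\upto{B}{i}$ derivable, which is supplied by the earlier cases); neither approach buys anything the other lacks.
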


\begin{proof}
  For $i < k$, the judgement
  $
  \basicCod{\Xi}{\symM_k}{e} \types \plug{(\upact{\basic{\Xi}{\symM_k}{e}}{i} \BB_i)}{\genapp{\symM}_i}
  $
  holds by abstraction and the metavariable rule, where we invert $\types \isExt{\Xi}$ to validate the abstractions.

  The judgement
  $\basicCod{\Xi}{\symM_k}{e} \types \plug{(\upact{\basic{\Xi}{\symM_k}{e}}{k} \BB_k)}{e}$
  follows by weakening from $\upto{\Xi}{k} \types \plug{\BB_k}{e}$ because $\upto{\basicCod{\Xi}{\symM_k}{e}}{k} = \upto{\Xi}{k}$.

  For $i > k$, we again use abstraction and the metavariable rule, where abstractions are now validated by inversion of $\types \isExt{\Xi}$ and \cref{thm:admiss-of-inst} applied to~$\upto{\basic{\Xi}{\symM_k}{e}}{i}$.

  The derivation of $\types \isExt{\basicCod{\Xi}{\symM_k}{e}}$ has two parts. First,
  $\upto{\basicCod{\Xi}{\symM_k}{e}}{k}$ coincides with $\upto{\Xi}{k}$
  and so we just reuse $\types \isExt{\upto{\Xi}{k}}$.
  For $i > k$, we derive $\upto{\basicCod{\Xi}{\symM_k}{e}}{\symM_i} \types \BB'_i$
  as the instantiation of $\upto{\Xi}{i} \types \BB_i$ by
  \begin{equation*}
    \Inst
    {\upto{\basic{\Xi}{\symM_k}{e}}{i}}
    {\upto{\Xi}{i}}
    {\upto{\basicCod{\Xi}{\symM_k}{e}}{\symM_i}}
    {\emptyCtx},
  \end{equation*}
  which is observed to be derivable.
\end{proof}

We define particular compositions of chains of basic instantiations, as follows.
Given a metavariable context $\Xi = [\symM_1 \of \BB_1, \ldots, \symM_n \of \BB_n]$ and an instantiation
\begin{equation*}
  I = \finmap{\symM_1 \mto e_1, \ldots, \symM_n \mto e_n}
\end{equation*}
of~$\Xi$ over~$\Theta; \emptyCtx$, define the instantiation
\begin{equation*}
 \Inst
   {\basicComp[\Xi,\Theta,I]{\symM_{i_1}, \ldots, \symM_{i_m}}}
   {\Xi}
   {\basicCompCod[\Xi,\Theta,I]{\symM_{i_1}, \ldots, \symM_{i_m}}}{\emptyExt}
\end{equation*}
and the metavariable context $\basicCompCod[\Xi,\Theta,I]{\symM_{i_1}, \ldots, \symM_{i_m}}$ by
\begin{align*}
  \basicCompCod[\Xi,\Theta,I]{} &= \finmap{\Theta, \Xi} \\
  \basicComp[\Xi,\Theta,I]{} &= \finmap{\symM_1 \mto \genapp{\symM}_1, \ldots, \symM_n \mto \genapp{\symM}_n} \\
  \basicCompCod[\Xi,\Theta,I]{\symM_{i_1}, \ldots, \symM_{i_{m+1}}} &= \basicCod{\basicCompCod[\Xi,\Theta,I]{\symM_{i_1}, \ldots, \symM_{i_m}}}{\symM_{i_{m+1}}}{e_{i_{m+1}}} \\
  \basicComp[\Xi,\Theta,I]{\symM_{i_1}, \ldots, \symM_{i_{m+1}}}
  &=
    \begin{aligned}[t]
    \basic{\basicCompCod[\Xi,\Theta,I]{\symM_{i_1}, \ldots, \symM_{i_m}}}{\symM_{i_{m+1}}}{e_{i_{m+1}}} \circ {} & \\ \basicComp[\Xi,\Theta,I]{\symM_{i_1}, \ldots, \symM_{i_m}} &
    \end{aligned}
\end{align*}
In the above definition we require $|\Xi| \cap |\Theta| = \emptyset$ and that $\symM_{i_1}, \ldots, \symM_{i_m}$ are all distinct.
We elide the subscripts and write $\basicComp{\symM_{i_1}, \ldots, \symM_{i_m}}$ and $\basicCompCod{\symM_{i_1}, \ldots, \symM_{i_m}}$ when no confusion can arise.
A straightforward induction shows that
\begin{equation*}
  \basicCompCod[\Xi,\Theta,I]{\symM_{i_1}, \ldots, \symM_{i_m}}(\symM_j) =
  \act{\basicComp[\Xi,\Theta,I]{\symM_{i_1}, \ldots, \symM_{i_m}}} \BB_j
\end{equation*}
for any $\symM_j \in |\Xi| \setminus \set{\symM_{i_1}, \ldots, \symM_{i_m}}$.
The instantiation $\basicComp[\Xi,\Theta,I]{\symM_{i_1}, \ldots, \symM_{i_m}}$ plays a role in proving object-invertibility, because $\set{\symM_{i_1}, \ldots, \symM_{i_m}} = \objmv{\Xi}$ implies
\begin{equation*}
  \basicComp[\Xi,\Theta,I]{\symM_{i_1}, \ldots, \symM_{i_m}} = \residueInst{I}
  \quad \text{and} \quad
  \basicCompCod[\Xi,\Theta,I]{\symM_{i_1}, \ldots, \symM_{i_m}} = \residue{\Xi}{I}.
\end{equation*}

We are now in possession of all the ingridients necessary to relate pattern matching and object-invertibility.
Recall that, given a rule $\rawRule{\Xi}{\plug{\B}{p}}$ whose head is a pattern~$p$ and an instantiation $I$ of~$\Xi$ over $\Theta; \emptyCtx$, we would like to show that $\residueInst{I}$ is derivable. The following lemma, whose purpose we explain shortly, is a stepping stone to accomplishing the goal.

\begin{lem}
  \label{lem:derivable-extension}
  In a standard type theory, let $\rawRule{\Xi}{\plug{\B}{p}}$ be a derivable object rule which is natural for variables, $p$ a pattern, and~$I$ an instantiation of~$\Xi = [\symM_1 \of \BB_1, \ldots, \symM_n \of \BB_n]$ over~$\Theta; \emptyCtx$ such that $|\Xi| \cap |\Theta| = \emptyset$, and $\Theta; \emptyCtx \types \act{I} (\plug{\B}{p})$ is derivable.

  Suppose $\vec{N} = (N_1, \ldots, N_m)$ is a sequence of distinct metavariables,
  $\set{N_1, \ldots, N_m} \subseteq |\Xi|$,
  $\mv{\B} \subseteq \set{N_1, \ldots, N_m} \cup \mv{p}$,
  and both
  $\types \isExt{\basicCompCod[\Theta,\Xi, I]{\vec{N}}}$ and
  $\basicComp[\Theta,\Xi, I]{\vec{N}}$ are derivable.
  Then $\vec{N}$ can be extended to a sequence of distinct metavariables
  $\vec{N}' = (N_1, \ldots, N_\ell)$ such that
  $\set{N_1, \ldots, N_\ell}  =  \set{N_1, \ldots, N_m} \cup \mv{p}$, and both
  $\types \isExt{\basicCompCod{\vec{N}'}}$ and
  $\basicComp{\vec{N}'}$ are derivable.
\end{lem}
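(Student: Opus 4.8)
The plan is to argue by structural induction on the pattern~$p$, the induction hypothesis being the statement of the lemma for subpatterns of~$p$, with~$\B$ replaced by whatever boundary is accumulated while descending into~$p$. The two workhorses are \cref{cor:args-derivable}, which converts derivability of a symbol application $\Theta; \Gamma \types \plug{\B}{\symS(\vec{e})}$ into derivability of its arguments~$e_i$ against the boundaries read off from the symbol rule of~$\symS$, and \cref{lm:basic-inst-derivable}, which reduces the task of appending one metavariable~$\symM$ to the chain $\basicComp{\vec{N}}$ to checking a single boundary-filling judgement over the truncation of $\basicCompCod{\vec{N}}$ up to~$\symM$. I read off the boundary that $\basicCompCod{\vec{N}}$ assigns to a not-yet-processed metavariable via the displayed identity $\basicCompCod{\vec{N}}(\symM_j) = \act{\basicComp{\vec{N}}}\BB_j$, and I visit the metavariables of~$p$ in depth-first, left-to-right order, so that whenever~$\symM$ is about to be inserted, every metavariable occurring in~$\BB_\symM$ (the boundary~$\Xi$ assigns to~$\symM$) has already been inserted.

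In the base case $p$ is a bare type metavariable $\symM()$, or (reached during the descent) an argument $\abstr{\vec{x}} \symM(\vec{x})$; either way $\mv{p} = \set{\symM}$ with~$\symM$ an object metavariable of~$\Xi$, and $\act{I} p$ equals $I(\symM)$ up to renaming of bound variables. If~$\symM$ is already in~$\vec{N}$ there is nothing to do; otherwise take $\vec{N}' = (\vec{N}, \symM)$ and invoke \cref{lm:basic-inst-derivable} inside the metavariable context $\basicCompCod{\vec{N}}$ (well formed by assumption). What remains is to derive $\upto{(\basicCompCod{\vec{N}})}{\symM}; \emptyCtx \types \plug{(\act{\basicComp{\vec{N}}}\BB_\symM)}{I(\symM)}$. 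Unfolding the hypothesis $\Theta; \emptyCtx \types \act{I}(\plug{\B}{p})$ through the enclosing applications of \cref{cor:args-derivable} gives a derivation of $\Theta; \emptyCtx \types \plug{(\act{I}\mathcal{C})}{I(\symM)}$, where~$\mathcal{C}$ is the boundary over~$\Xi$ accumulated at the occurrence of~$\symM$. Because the rule is natural for variables, that occurrence of~$\symM$ is not followed by a conversion, so peeling off the deduction of $\Xi; \emptyCtx \natur \plug{\B}{p}$ (\cref{fig:natural-for-variables}) in step with the descent forces $\mathcal{C} = \BB_\symM$; moreover~$\BB_\symM$ then mentions only metavariables occurring strictly earlier in the traversal, and these are all already in~$\vec{N}$, so $\act{I}$ and $\act{\basicComp{\vec{N}}}$ realize them identically and hence agree on~$\BB_\symM$. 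Since~$\Theta$ is an initial segment of $\upto{(\basicCompCod{\vec{N}})}{\symM}$, \cref{prop:tt-weakening} transports the judgement into the required context, and \cref{lm:basic-inst-derivable} then yields both $\types \isExt{\basicCompCod{\vec{N}'}}$ and derivability of $\basicComp{\vec{N}'}$.

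In the inductive case $p = \symS(q_1, \ldots, q_k)$. Applying \cref{cor:args-derivable} to the hypothesis and the symbol rule of~$\symS$ gives, for each~$i$, derivability over $\Theta; \emptyCtx$ of~$\act{I} q_i$ against $\act{I}\mathcal{C}_i$, where~$\mathcal{C}_i$ is the $i$-th premise boundary of that symbol rule with its earlier metavariables instantiated by $q_1, \ldots, q_{i-1}$ (an object boundary, since patterns admit no dummy arguments); applying the same corollary to the rule $\rawRule{\Xi}{\plug{\B}{p}}$ itself shows that each $\rawRule{\Xi}{\plug{\mathcal{C}_i}{q_i}}$ is a derivable object rule, natural for variables by the last rule of \cref{fig:natural-for-variables}. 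I then process $q_1, \ldots, q_k$ from left to right: when~$q_i$ has the form $\abstr{\vec{x}} \symM(\vec{x})$ I use the base case, and when~$q_i$ is a type or term pattern I use the induction hypothesis, each time with~$q_i$, $\mathcal{C}_i$, and the sequence $\vec{N}^{(i-1)}$ produced so far (with $\vec{N}^{(0)} = \vec{N}$); the side condition $\mv{\mathcal{C}_i} \subseteq \set{\vec{N}^{(i-1)}} \cup \mv{q_i}$ holds because~$\mathcal{C}_i$ only mentions metavariables of $q_1, \ldots, q_{i-1}$, already inserted. Pattern linearity ($\mv{q_i} \cap \mv{q_j} = \emptyset$ for $i \neq j$) keeps the accumulated list a list of distinct metavariables, and $\vec{N}' := \vec{N}^{(k)}$ satisfies $\set{\vec{N}'} = \set{\vec{N}} \cup \mv{p}$ with $\types \isExt{\basicCompCod{\vec{N}'}}$ and $\basicComp{\vec{N}'}$ derivable.

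The main obstacle is the bookkeeping of the families $\basicComp{-}$ and $\basicCompCod{-}$ along the traversal, and in particular the verification, at each insertion of a metavariable~$\symM$, that the boundary against which $I(\symM)$ was shown derivable by \cref{cor:args-derivable} coincides syntactically with the boundary $\act{\basicComp{\vec{N}}}\BB_\symM$ that $\basicCompCod{\vec{N}}$ assigns to~$\symM$. This is exactly what \emph{natural for variables} was introduced to guarantee: it forbids a conversion immediately after a metavariable occurrence, so the boundary expected there is~$\BB_\symM$ itself, instantiated by surrounding, already-processed data; no residual equational obligation arises, and \cref{lm:basic-inst-derivable} applies directly. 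The remaining details — that $\act{I}$ factors through $\basicComp{\vec{N}}$ on the already-processed metavariables, and that the symbol-rule boundaries occurring during the descent only involve metavariables that precede~$\symM$ in the traversal — are routine once the left-to-right order is fixed.
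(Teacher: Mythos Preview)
Your proof is correct and follows essentially the same approach as the paper: structural induction on the pattern~$p$, using \cref{cor:args-derivable} to descend into symbol applications and \cref{lm:basic-inst-derivable} to extend~$\vec{N}$ by one metavariable at a time, with naturality-for-variables guaranteeing that the boundary accumulated at an occurrence of~$\symM$ coincides syntactically with~$\BB_\symM$. The only organizational difference is that you treat both the bare type metavariable $\symM()$ and the abstracted argument $\abstr{\vec{x}}\symM(\vec{x})$ together as a ``base case'', whereas the paper makes only the former a base case and handles the latter inline within the symbol-application case; the underlying computations are identical.
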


\noindent
Let us explain how the lemma shall be used. As noted above, $\residueInst{I}$ coincides with $\basicComp{\vec{N}}$ when $\vec{N}$ lists all of $\objmv{\Xi}$.
Therefore, we may establish derivability of~$\residueInst{I}$ by starting with $\vec{N} = ()$ and extending it with metavariables until it encompasses~$\objmv{\Xi}$.
The lemma guarantees that one such extension step can be done with the aid of patterns in a way that preserves derivability of~$\basicComp{\vec{N}}$.

\begin{proof}[Proof of \cref{lem:derivable-extension}]
  Let $I = \finmap{\symM_1 \mto e_1, \ldots, \symM_n \mto e_n}$.
  We proceed by induction on the structure of~$p$, and elide the subscripts to keep the notation shorter.

  \inCaseText{$p = \symM_k$, $\BB_k = (\isType{\Box})$, and $\B = (\isType{\Box})$}
  If $\symM_k$ appears in $\vec{N}$ we let $\ell = m$ and we are done. Otherwise we set $\ell = m+1$ and $N_{m+1} = \symM_k$.
  Because composition of derivable instantiations is derivable, we only need to show that $\basic{\basicCompCod{\vec{M}}}{\symM_k}{e_k}$ is derivable,
  which by \cref{lm:basic-inst-derivable} reduces to
  \begin{equation*}
    \upto{\basicCompCod{\vec{N}}}{\symM_k} ; \emptyCtx \types
     \plug
     {(\act{\basicComp{\vec{N}}} \BB_{k})}
     {e_{k}},
  \end{equation*}
  which equals
  $$\upto{\basicCompCod{\vec{N}}}{\symM_k} ; \emptyCtx \types \isType{e_{k}}.$$
  It is derivable by weakening from the assumption $\Theta ; \emptyCtx \types \isType{e_k}$.

  \inCaseText{$p = \symS(q_1, \ldots, q_m)$}
  Suppose the symbol rule for~$\symS$ is
  \begin{equation*}
    \rawRule
    {\symM'_1 \of \BB'_1, \ldots, \symM'_j \of \BB'_j}
    {\plug{\B'}{\symS(\genapp{\symM'}_1, \ldots, \genapp{\symM'}_j)}}.
  \end{equation*}
  By applying \cref{cor:args-derivable} to $\Xi; \emptyCtx \types \plug{\B}{\symS(\vec{q})}$ and letting
  $K = [\symM'_1 \mto q_1, \ldots, \symM'_j \mto q_j]$, we obtain for $i = 1, \ldots, j$ derivations of
  \begin{equation}
    \label{eq:derinst-1}
    \Xi; \emptyCtx \types \plug{(\upact{K}{i} \BB'_i)}{q_i}.
  \end{equation}
  Similarly, from derivability of $\Theta; \emptyCtx \types \plug{(\act{I} \B)}{\symS(\act{I} \vec{q})}$ we obtain derivability of
  \begin{equation}
    \label{eq:derinst-3}
    \Theta; \emptyCtx \types \plug{(\upact{(\act{I} K)}{i} \BB'_i)}{\act{I} q_i},
  \end{equation}
  which is equal to
  \begin{equation}
    \label{eq:derinst-2}
    \Theta; \emptyCtx \types \act{I} (\plug{(\upact{K}{i} \BB'_i)}{q_i}).
  \end{equation}
  We define $\vec{L}_0, \ldots, \vec{L}_j$ such that
  $\vec{L}_0 = \vec{N}$, and
  for $i = 1, \ldots, j$, the sequence $\vec{L}_i$ extends $\vec{L}_{i-1}$ by $\mv{q_i}$, and both $\types \isExt{\basicCompCod{\vec{L}_i}}$ and $\basicComp{\vec{L}_i}$ are derivable.
  We may then finish the proof by taking $\vec{N}' = \vec{L}_j$.
  Assuming $\vec{L}_{i-1}$ has been constructed, we consider two cases.

  First, if $q_i$ is a non-abstracted object pattern then we obtain $\vec{L}_i$ by applying the induction hypothesis to~\eqref{eq:derinst-1},~\eqref{eq:derinst-2} and~$\vec{L}_{i-1}$. We may do so because $\mv{\upact{K}{i} \BB'_i} \subseteq \mv{q_1} \cup \cdots \cup \mv{q_{i-1}}$, which is contained in~$\vec{L}_{i-1}$.

  Second, if $q_i = \abstr{\vec{x}} \symM_k(\vec{x})$ we proceed as follows. If $\symM_k$ appears in $\vec{L}_{i-1}$, we take $\vec{L}_i = \vec{L}_{i-1}$ and we are done.
  Otherwise, we take $\vec{L}_i = (\vec{L}_{i-1}, \symM_k)$.
  We need to show derivability of $\types \isExt{\basicCompCod{\vec{L}_i}}$ and $\basicComp{\vec{L}_i}$. Because
  $
  \basicComp{\vec{L}_i} =
    \basic{\basicCompCod{\vec{L}_{i-1}}}{\symM_k}{e_k}
    \circ
    \basicComp{\vec{L}_{i-1}}
  $
  and $\basicComp{\vec{L}_{i-1}}$ is derivable it suffices to show
  that $\basic{\basicCompCod{\vec{L}_{i-1}}}{\symM_k}{e_k}$ is derivable,
  and therefore by \cref{lm:basic-inst-derivable} that
  \begin{equation}
    \label{eq:derivable-2}%
    \upto{\basicCompCod{\vec{L}_{i-1}}}{\symM_k} ; \emptyCtx
    \types
    \plug
    {(\act{\basicComp{\vec{L}_{i-1}}} \BB_k)}
    {e_k}.
  \end{equation}
  We claim that~\eqref{eq:derivable-2} is just a weakening of~\eqref{eq:derinst-3}.
  Obviously, $\upto{\basicCompCod{\vec{L}_{i-1}}}{\symM_k}$ extends $\Theta$ and $\act{I} q_i = e_k$.
  It remains to be seen that $\act{\basicComp{\vec{L}_{i-1}}} \BB_k$ and
  $\upact{(\act{I} K)}{i} \BB'_i$ are the same.
  The judgement~\eqref{eq:derinst-1} equals $\Xi; \emptyCtx \types \plug{(\upact{K}{i} \BB'_i)}{\abstr{\vec{x}} \symM_k(\vec{x})}$. By the naturality-for-variables assumption it is derivable without conversions, which is only possible if $\upact{K}{i} \BB'_i$ is $\BB_k$.
  Therefore,
  \begin{equation*}
    \upact{(\act{I} K)}{i} \BB'_i =
    \act{I} (\upact{K}{i} \BB'_i) =
    \act{\basicComp{\vec{L}_{i-1}}} (\upact{K}{i} \BB'_i) =
    \act{\basicComp{\vec{L}_{i-1}}} \BB_k,
  \end{equation*}
  where the second step is valid because $\mv{\upact{K}{i} \BB'_i} \subseteq \mv{q_1} \cup \cdots \cup \mv{q_{i-1}}$, which is contained in~$\vec{L}_{i-1}$.
\end{proof}

\begin{cor}
  \label{cor:rule-invertible}%
  In a standard type theory, consider a derivable finitary object rule $\rawRule{\Xi}{\plug{\B}{p}}$ which is natural for variables. If $p$ is a pattern and $\mv{p} = \objmv{\Xi}$ then the rule is object-invertible.
\end{cor}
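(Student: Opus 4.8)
The plan is to unfold the definition of object-invertibility and reduce the whole statement to a single application of \cref{lem:derivable-extension} with the empty sequence of metavariables. So I would fix an instantiation $I = \finmap{\symM_1 \mto e_1, \ldots, \symM_n \mto e_n}$ of $\Xi = [\symM_1 \of \BB_1, \ldots, \symM_n \of \BB_n]$ over $\Theta; \emptyCtx$ with $\types \isExt{\Theta}$ and $|\Xi| \cap |\Theta| = \emptyset$, and assume $\Theta; \emptyCtx \types \act{I}(\plug{\B}{p})$ is derivable; the goal is to derive the residual instantiation $\residueInst{I}$. Here I would invoke the identity recalled just before \cref{lem:derivable-extension}: if $\vec{N}' = (N_1, \ldots, N_\ell)$ is any sequence of distinct metavariables with $\set{N_1, \ldots, N_\ell} = \objmv{\Xi}$, then $\basicComp[\Xi,\Theta,I]{\vec{N}'} = \residueInst{I}$. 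Thus it suffices to produce one such sequence for which $\basicComp[\Xi,\Theta,I]{\vec{N}'}$ is derivable.

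Such a sequence I would obtain by feeding $\vec{N} = ()$ into \cref{lem:derivable-extension}, which amounts to checking its three starting hypotheses. First, $\mv{\B} \subseteq \set{} \cup \mv{p}$: since $\B$ is an object boundary it mentions only object metavariables, so $\mv{\B} \subseteq \objmv{\Xi}$, and by assumption $\objmv{\Xi} = \mv{p}$. Second, $\types \isExt{\basicCompCod[\Xi,\Theta,I]{}}$, i.e. $\types \isExt{\finmap{\Theta, \Xi}}$: because the rule is finitary we have $\types \isExt{\Xi}$, and combining this with $\types \isExt{\Theta}$, the freshness $|\Xi| \cap |\Theta| = \emptyset$, \cref{prop:tt-weakening}, and repeated use of \rref{MCtx-Extend} yields $\types \isExt{\finmap{\Theta, \Xi}}$. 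Third, $\basicComp[\Xi,\Theta,I]{} = \finmap{\symM_1 \mto \genapp{\symM}_1, \ldots, \symM_n \mto \genapp{\symM}_n}$ must be derivable; this is the generic instantiation of $\Xi$ over $\finmap{\Theta, \Xi}$, and its derivability follows exactly as in the proof of \cref{lm:basic-inst-derivable}: for each $k$ one derives $\plug{\BB_k}{\genapp{\symM}_k}$ by \rref{TT-Abstr} to introduce the abstracted variables of $\BB_k$ and then \rref{TT-Meta} for the metavariable $\symM_k$ of $\finmap{\Theta, \Xi}$, with the abstractions validated by inverting $\types \isExt{\finmap{\Theta, \Xi}}$.

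With these hypotheses in hand, \cref{lem:derivable-extension} produces a sequence of distinct metavariables $\vec{N}'$ with $\set{N_1, \ldots, N_\ell} = \set{} \cup \mv{p} = \objmv{\Xi}$ such that both $\types \isExt{\basicCompCod[\Xi,\Theta,I]{\vec{N}'}}$ and $\basicComp[\Xi,\Theta,I]{\vec{N}'}$ are derivable. By the identity recalled in the first paragraph, $\basicComp[\Xi,\Theta,I]{\vec{N}'} = \residueInst{I}$, so $\residueInst{I}$ is derivable, which is precisely what object-invertibility demands. The substantive work is already packaged inside \cref{lem:derivable-extension}, so the corollary is essentially bookkeeping; the only step that needs genuine care is the base case — verifying the initial hypotheses of the lemma, in particular the inclusion $\mv{\B} \subseteq \mv{p}$ (which is exactly where $\mv{p} = \objmv{\Xi}$ is used) and the derivability of the trivial generic instantiation together with well-formedness of $\finmap{\Theta, \Xi}$, where finitariness of the rule enters.
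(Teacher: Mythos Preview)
Your proposal is correct and follows essentially the same approach as the paper: apply \cref{lem:derivable-extension} with the empty sequence $\vec{N} = ()$, verify its base hypotheses using finitariness and the assumption $\mv{p} = \objmv{\Xi}$, and then identify the resulting $\basicComp{\vec{N}'}$ with $\residueInst{I}$. You are somewhat more explicit than the paper in justifying $\mv{\B} \subseteq \mv{p}$ and the derivability of the generic instantiation $\basicComp{}$, but these are precisely the points the paper leaves to the reader.
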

\begin{proof}
  Consider an instantiation~$I$ of~$\Xi$ over~$\Theta; \emptyCtx$, such that
  $\types \isExt{\Theta}$ and $\Theta; \emptyCtx \types \plug{(\act{I} \B)}{\act{I} p}$ are derivable. Without loss of generality we may assume $|\Xi| \cap |\Theta| = \emptyset$.

  We apply \cref{lem:derivable-extension} with the empty sequence $\vec{N} = ()$, noting that
  $\mv{\B} \subseteq \mv{p}$, that $\basicCompCod{} = \finmap{\Theta, \Xi}$ and that $\types \isExt{\finmap{\Theta, \Xi}}$ is derivable because the rule is finitary and we assumed $\types \isExt{\Theta}$.
  This way we obtain a sequence $\vec{N}' = (N'_1, \ldots, N'_\ell)$ such that $\mv{p} = \set{N'_1, \ldots, N'_\ell}$ and $\basicComp{\vec{N}'}$ is derivable.
  Because $\mv{p} = \objmv{\Xi}$, it follows that $\basicComp{\vec{N}'}$ coincides with $\residueInst{I}$, hence it is derivable too.
\end{proof}

\subsection{Computation and extensionality rules}
\label{sec:extens-comp-rules}

The equality checking algorithm uses two kinds of equational rules, which we describe here and prove that they have the desired properties. First, we have the rules that govern normalization.

\begin{defi}
  \label{def:type-computation-rule}%
  A derivable finitary rule $\rawRule{\Theta}{A \equiv B}$ is a \defemph{type computation rule}
  if $\rawRule{\Theta}{\isType{A}}$ is deterministic and object-invertible.
\end{defi}

\begin{defi}
  \label{def:term-computation-rule}%
  A derivable finitary rule $\rawRule{\Theta}{u \equiv v : A}$ is a \defemph{term computation rule}
  if $u$ is a term symbol application and $\rawRule{\Theta}{u : \natty{\Theta; \emptyCtx}{u}}$ is deterministic and object-invertible.
\end{defi}

The reason behind the first condition in the definition of a term computation rule is that for term symbol applications~\cref{prop:natty-inst} holds, which is needed in the proof of soundness (\cref{thm:normalization-sound}). We exhibit in~\cref{ex:ext-rules-no-equations} what can go wrong if we allow for a metavariable as the lefthand-side of the equation.
One might hope that the second condition in~\cref{def:term-computation-rule} could be relaxed to $\rawRule{\Theta}{u : A}$. However, the additional flexibility is only apparent, for if a term has a type then it has the natural type as well. In any case, in the proofs of soundness (\cref{thm:normalization-sound,thm:checking-sound}) we rely on having the natural type.

A computation rule may be recognized using the following criterion.

\begin{prop}
  \label{prop:comp-rule-suff-cond}%
  In a standard type theory:
  \begin{enumerate}
  \item
    A derivable finitary rule $\rawRule{\Xi}{P \equiv B}$ is a type computation rule if $P$ is a type pattern, $\mv{P} = \objmv{\Xi}$, and $\rawRule{\Xi}{\isType{P}}$ is natural for variables.

  \item
    A derivable finitary rule $\rawRule{\Xi}{p \equiv v : A}$ is a term computation rule if~$p$ is a term pattern, $\mv{p} = \objmv{\Xi}$, and $\rawRule{\Xi}{p : \natty{\Xi; \emptyCtx}{p}}$ is natural for variables.
  \end{enumerate}

\end{prop}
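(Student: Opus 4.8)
The plan is to reduce each part to the two results already in hand, \cref{prop:rule-deterministic} (giving determinism) and \cref{cor:rule-invertible} (giving object-invertibility), so that \cref{def:type-computation-rule,def:term-computation-rule} apply verbatim. The only genuine work is to peel the auxiliary \emph{object} rule off the given \emph{equality} rule and to check that this object rule is still derivable and finitary.

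For part~(1): since $\rawRule{\Xi}{P \equiv B}$ is finitary, the judgement $\Xi; \emptyCtx \types (P \equiv B \bye \Box)$ is derivable; the only closure rule that can conclude a non-abstracted type-equation boundary is \rref{TT-Bdry-EqTy}, so inverting it yields $\Xi; \emptyCtx \types \isType{P}$. Hence $\rawRule{\Xi}{\isType{P}}$ is a derivable object rule, and it is finitary because $\types \isExt{\Xi}$ already holds and $\Xi; \emptyCtx \types \isType{\Box}$ holds by \rref{TT-Bdry-Ty}. As $P$ is a (type) pattern with $\mv{P} = \objmv{\Xi}$, \cref{prop:rule-deterministic} makes $\rawRule{\Xi}{\isType{P}}$ deterministic; being also derivable, finitary, and natural for variables, it is object-invertible by \cref{cor:rule-invertible}. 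By \cref{def:type-computation-rule}, $\rawRule{\Xi}{P \equiv B}$ is then a type computation rule.

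For part~(2) the steps are parallel, with one extra observation. First, $p$ is automatically a term symbol application, because term patterns have the form $\symS(\vec{q})$. Finitariness of $\rawRule{\Xi}{p \equiv v : A}$ gives $\Xi; \emptyCtx \types (p \equiv v : A \bye \Box)$, and inverting \rref{TT-Bdry-EqTm} gives $\Xi; \emptyCtx \types p : A$, whence $\Xi; \emptyCtx \types p : \natty{\Xi; \emptyCtx}{p}$ by \cref{cor:natty-derivable}. Thus $\rawRule{\Xi}{p : \natty{\Xi; \emptyCtx}{p}}$ is a derivable object rule, and it is finitary because $\types \isExt{\Xi}$ holds and $\Xi; \emptyCtx \types \isType{\natty{\Xi; \emptyCtx}{p}}$ follows from \cref{prop:presuppositivity} applied to the judgement just derived. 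Determinism then comes from \cref{prop:rule-deterministic} and object-invertibility from \cref{cor:rule-invertible}; here I would also note that $\mv{\natty{\Xi; \emptyCtx}{p}} \subseteq \mv{p}$, since the type appearing in the symbol rule for the head symbol of $p$ has all its metavariables among the premises, so the boundary of the conclusion introduces no metavariable outside $\mv{p}$. By \cref{def:term-computation-rule}, $\rawRule{\Xi}{p \equiv v : A}$ is a term computation rule.

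I do not expect a serious obstacle: every nontrivial ingredient is a citation. The points that need care are the two bookkeeping steps — that $\rawRule{\Xi}{\isType{P}}$ and $\rawRule{\Xi}{p : \natty{\Xi; \emptyCtx}{p}}$ are derivable and finitary — and verifying that the hypotheses of \cref{cor:rule-invertible} hold exactly as stated, in particular naturality for variables (assumed outright) and, in the term case, the absence of fresh metavariables in the boundary of the conclusion.
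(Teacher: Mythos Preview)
Your proposal is correct and follows essentially the same route as the paper: extract the auxiliary object rule ($\rawRule{\Xi}{\isType{P}}$, resp.\ $\rawRule{\Xi}{p : \natty{\Xi;\emptyCtx}{p}}$), observe it is derivable, then apply \cref{prop:rule-deterministic} and \cref{cor:rule-invertible}. The paper obtains derivability of the auxiliary rule via \cref{prop:presuppositivity} on the derivable equality rule, whereas you use finitariness to get the conclusion boundary and then invert \rref{TT-Bdry-EqTy}/\rref{TT-Bdry-EqTm}; both routes land at the same judgement. Your extra bookkeeping (noting that term patterns are symbol applications, verifying finitariness of the auxiliary rule, and the containment $\mv{\natty{\Xi;\emptyCtx}{p}} \subseteq \mv{p}$) is sound but not strictly needed at this level, since \cref{cor:rule-invertible} already absorbs the boundary condition internally.
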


\begin{proof}
  The rule $\rawRule{\Xi}{\isType{P}}$ is derivable by \cref{prop:presuppositivity}, and $\rawRule{\Xi}{p : \natty{\Xi; \emptyCtx}{p}}$ by
  \cref{prop:presuppositivity} and \cref{cor:natty-derivable}. Observe that $\mv{P} = \objmv{\Xi}$ and $\mv{p} = \objmv{\Xi}$,
  and apply \cref{prop:rule-deterministic} and \cref{cor:rule-invertible} respectively to $\rawRule{\Xi}{\isType{P}}$ and to $\rawRule{\Xi}{p : \natty{\Xi; \emptyCtx}{p}}$.
\end{proof}

\begin{exa}
  \label{ex:computation-rule-prod}%
  Typical $\beta$-rules satisfy the conditions of \cref{prop:comp-rule-suff-cond}, after their left-hand sides have been linearized, as in \cref{exa:linearize}.
  Another example is the $\beta$-rule for application
  \begin{equation*}
    \infer{
      \types \isType{\sym{A}}
      \\
      \types \abstr{x \of \sym{A}} \; \isType{\sym{B}}
      \\
      \types \abstr{x \of \sym{A}} \; \sym{s} : \sym{B}(x)
      \\
      \types \sym{t} : \sym{A}
    }{
      \types
      \sym{apply}(
         \sym{A},
         \abstr{x}\sym{B}(x),
         \uplambda (\sym{A}, \abstr{x}\sym{B}(x), \abstr{x} \sym{s}(x)),
         \sym{t}
      )
      \equiv
      \sym{s}(\sym{t})
      : \sym{B}(\sym{t})
    }
  \end{equation*}
  whose linearized form is
  \begin{equation*}
    \infer{
      \types \isType{\sym{A}_1}
      \\
      \types \abstr{x \of \sym{A}_1} \; \isType{\sym{B}_1}
      \\\\
      \types \isType{\sym{A}_2}
      \\
      \types \abstr{x \of \sym{A}_2} \; \isType{\sym{B}_2}
      \\\\
      \types \abstr{x \of \sym{A}_2} \; \sym{s} : \sym{B_2}(x)
      \\
      \types \sym{t} : \sym{A}_1
      \\\\
      \types \sym{A}_1 \equiv \sym{A}_2 \\
      \types \abstr{x \of \sym{A}_1} \sym{B}_1(x) \equiv \sym{B}_2(x)
    }{
      \types
      \sym{apply}(
         \sym{A}_1,
         \abstr{x}\sym{B}_1(x),
         \uplambda (\sym{A}_2, \abstr{x}\sym{B}_2(x), \abstr{x} \sym{s}(x)),
         \sym{t}
      )
      \equiv
      \sym{s}(\sym{t})
      : \sym{B}_1(\sym{t})
    }
  \end{equation*}
  which satisfies \cref{prop:comp-rule-suff-cond}.

  We also allow the somewhat unusual rule
  \begin{equation*}
    \infer{ }{
      \types \isType{\sym{U}}
    }
    \qquad \qquad
    \infer{
      \types \isType{\sym{A}}
    }{
      \types \sym{A} \equiv \sym{U}
    }
  \end{equation*}
  because it allows us to dispense with all questions about equality of types in case we want to work with an uni-typed theory (some would call it untyped).
\end{exa}

The second kind of rules is used by the algorithm to reduce an equation to subordinate equations by matching its type.

\begin{defi}
  \label{def:extensionality-rule}
  An \defemph{extensionality rule} is a derivable finitary rule of the form
  \begin{equation*}
    \rawRule{\Theta, \sym{s} \of (\Box : A), \sym{t} \of (\Box : A), \Phi}{\sym{s} \equiv \sym{t} : A}
  \end{equation*}
  such that
  $\Phi$ contains only equational premises, and
  $\rawRule{\Theta}{\isType{A}}$ is deterministic and object-invertible.
\end{defi}

An extensional rule may be recognized with the following criterion.

\begin{prop}
  \label{prop:extensionality-rule-criterion}%
  In a standard type theory, a derivable finitary rule of the form
  \begin{equation*}
    \rawRule{\Xi, \sym{s} \of (\Box : P), \sym{t} \of (\Box : P), \Phi}{\sym{s} \equiv \sym{t} : P}
  \end{equation*}
  is an extensionality rule if $\Phi$ contains only equational premises, $P$ is a type pattern, $\mv{P} = \objmv{\Xi}$, and $\rawRule{\Xi}{\isType{P}}$ is natural for variables.
\end{prop}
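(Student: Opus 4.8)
The plan is to verify the three requirements of \cref{def:extensionality-rule} with $\Theta := \Xi$ and $A := P$. Two of them are immediate from the hypotheses: the rule has the prescribed shape with $\Phi$ consisting only of equational premises, and it is a derivable finitary rule. It therefore remains to show that the auxiliary object rule $\rawRule{\Xi}{\isType{P}}$ is both deterministic and object-invertible.

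First I would record that $\rawRule{\Xi}{\isType{P}}$ is itself a derivable finitary object rule, since \cref{cor:rule-invertible} and the notion of object-invertibility are only meaningful for such rules. Because the given rule is finitary, $\types \isExt{\Xi, \sym{s} \of (\Box : P), \sym{t} \of (\Box : P), \Phi}$ is derivable; repeatedly inverting \rref{MCtx-Extend} to strip off $\Phi$ and then $\sym{t}$ yields $\types \isExt{\Xi, \sym{s} \of (\Box : P)}$, one more inversion gives $\types \isExt{\Xi}$ together with $\Xi; \emptyCtx \types \Box : P$, and a final inversion by \rref{TT-Bdry-Tm} gives $\Xi; \emptyCtx \types \isType{P}$. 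Thus $\rawRule{\Xi}{\isType{P}}$ is derivable, and it is finitary because $\types \isExt{\Xi}$ holds and $\Xi; \emptyCtx \types \isType{\Box}$ is an instance of \rref{TT-Bdry-Ty}.

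With that in hand, determinism follows from \cref{prop:rule-deterministic}: a type pattern is in particular a pattern in the sense of \cref{def:pattern}, $\plug{(\isType{\Box})}{P} = \isType{P}$, and $\mv{P} = \objmv{\Xi}$ by hypothesis. Object-invertibility then follows from \cref{cor:rule-invertible}, whose hypotheses are all met: the theory is standard, $\rawRule{\Xi}{\isType{P}}$ is a derivable finitary object rule that is natural for variables by assumption, $P$ is a pattern, and $\mv{P} = \objmv{\Xi}$. Assembling these facts discharges the last requirement of \cref{def:extensionality-rule}, so the rule is an extensionality rule.

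I do not expect a real obstacle; the argument is essentially bookkeeping against the definitions plus two citations. The only place that calls for a little care is the passage from well-formedness of the full metavariable context to derivability and finitariness of $\rawRule{\Xi}{\isType{P}}$, which is needed precisely so that \cref{cor:rule-invertible} is applicable.
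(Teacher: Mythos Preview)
Your proposal is correct and follows the same approach as the paper: apply \cref{prop:rule-deterministic} and \cref{cor:rule-invertible} to $\rawRule{\Xi}{\isType{P}}$. You are more explicit than the paper's one-line proof in establishing that $\rawRule{\Xi}{\isType{P}}$ is itself a derivable finitary object rule (via inversion on the well-formedness of the metavariable context rather than a direct appeal to presuppositivity), but this extra care is appropriate and the substance is identical.
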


\begin{proof}
  Apply \cref{prop:rule-deterministic} and \cref{cor:rule-invertible} to $\rawRule{\Xi}{\isType{P}}$.
\end{proof}

Extensionality rules that one finds in practice typically satisfy the above syntactic condition, even without linearization. Here are a few.

\begin{exa}
  \label{ex:extesionality-product}%
  \label{ex:dep-prod-ext}%
  Extensionality rules typically state that elements of a type are equal when their parts are equal. For example, extensionality for simple products states that pairs are equal if their components are equal:
  \begin{equation}
    \label{eq:ext-simple-prod}
    \infer{
        \types \isType{\sym{A}}
        \\
        \types \isType{\sym{B}}
        \\
        \types \sym{s} : \sym{A} \times \sym{B}
        \\
        \types \sym{t} : \sym{A} \times \sym{B}
        \\\\
        \types \mathsf{fst}(\sym{A}, \sym{B}, \sym{s}) \equiv \mathsf{fst}(\sym{A}, \sym{B}, \sym{t}) : \sym{A}
        \\
        \types \mathsf{snd}(\sym{A}, \sym{B}, \sym{s}) \equiv \mathsf{snd}(\sym{A}, \sym{B}, \sym{t}) : \sym{B}
    }{
      \types \sym{s} \equiv \sym{t} : \sym{A} \times \sym{B}
    }
  \end{equation}
  Similarly, the extensionality rule for dependent functions states that they are equal if their generic applications are equal:
  \begin{equation*}
    \infer{
        \types \isType{\sym{A}}
        \\
        \types \abstr{x \of \sym{A}} \; \isType{\sym{B}}
        \\
        \types \sym{s} : \Uppi(\sym{A}, \abstr{x} \sym{B}(x))
        \\
        \types \sym{t} : \Uppi(\sym{A}, \abstr{x} \sym{B}(x))
        \\\\
        \types
           \abstr{x \of \sym{A}} \;
           \sym{apply}(\sym{A}, \abstr{x} \sym{B}(x), \sym{s}, \sym{x})
           \equiv
           \sym{apply}(\sym{A}, \abstr{x} \sym{B}(x), \sym{t}, \sym{x})
           : \sym{B}(x)
    }{
      \types \sym{s} \equiv \sym{t} : \Uppi(\sym{A}, \abstr{x} \sym{B}(x))
    }
  \end{equation*}
  The above is not to be confused with \emph{propositional} function extensionality, which is a certain term that maps point-wise propositional equality of functions to their propositional equality.
\end{exa}

\begin{exa}
  \label{ex:ext-rules-no-equations}
  Some extensionality rules have no equational premises.
  The first one that comes to mind is the rule stating that all elements of the unit type are equal:
  \begin{equation*}
    \infer{
      \types \sym{s} : \sym{unit} \\
      \types \sym{t} : \sym{unit}
    }{
      \types \sym{s} \equiv \sym{t} : \sym{unit}
    }
  \end{equation*}
  The corresponding $\eta$-rule ($\star$ is the canonical inhabitant of $\sym{unit}$)
  \begin{equation*}
    \infer{
      \types \sym{t} : \sym{unit}
    }{
      \types \sym{t} \equiv \star
    }
  \end{equation*}
  cannot be incorporated as a computation rule naively because the bare metavariable on the left-hand side matches any term, even if its type is not (judgementally equal to) unit. Since our normalization procedure in~\cref{sec:principal-and-normalization} does not check for equality of types separately, such rules do not behave well as computation rules.
  Another rule of this kind is the judgemental variant of Uniqueness of Identity Proofs (UIP) which equates any two proofs of a propositional identity:
  \begin{equation*}
    \infer{
      \types \isType{\sym{A}} \\
      \types \sym{a} : \sym{A} \\
      \types \sym{b} : \sym{A} \\
      \types \sym{p} : \sym{Id}(\sym{A}, \sym{a}, \sym{b}) \\
      \types \sym{q} : \sym{Id}(\sym{A}, \sym{a}, \sym{b})
    }{
      \types \sym{p} \equiv \sym{q} : \sym{Id}(\sym{A}, \sym{a}, \sym{b})
    }
  \end{equation*}
  The corresponding $\eta$-rule is as troublesome as the one for $\sym{unit}$:
  \begin{equation*}
    \infer{
      \types \isType{\sym{A}} \\
      \types \sym{a} : \sym{A} \\
      \types \sym{p} : \sym{Id}(\sym{A}, \sym{a}, \sym{a})
    }{
      \types \sym{p} \equiv \sym{refl}(\sym{A}, \sym{a}) : \sym{Id}(\sym{A}, \sym{a}, \sym{a})
    }
  \end{equation*}
  The principle has been used, for example, in the cubical type theory XTT for Bishop sets~\cite{sterling20:_cubic_languag_bishop_sets}.

  Here is one last example:
  \begin{mathpar}
    \infer{
      \types \isType{\sym{A}}
    }{
      \types \isType{\|\sym{A}\|}
    }

    \infer{
      \types \isType{\sym{A}} \\
      \types \sym{t} : \sym{A}
    }{
      \types |\sym{t}| : \|\sym{A}\|
    }

    \infer{
      \types \isType{\sym{A}} \\
      \types \sym{u} : \|\sym{A}\| \\
      \types \sym{v} : \|\sym{A}\|
    }{
      \types \sym{u} \equiv \sym{v} : \|\sym{A}\|
    }
  \end{mathpar}
  The above rules describe a kind of ``judgemental truncation'', which is like the propositional truncation from homotopy type theory, except that it equates all terms of $\|\sym{A}\|$ judgementally. It is unclear what elimination rule of judgemental truncation would be, but one is reminded of the \emph{proof-irrelevant propositions}~\cite{gilbert19:_defin_k}.
\end{exa}


\subsection{Principal arguments and normalization}
\label{sec:principal-and-normalization}

Normalization rewrites an expression $\symS(e_1, \ldots, e_n)$ by normalizing some of the arguments $e_1, \ldots, e_n$, applying a computation rule, and repeating the process.
We say that an argument~$e_i$ (or more precisely, its position~$i$) is \emph{principal} for~$\symS$ if it is so normalized.
By varying the selection of principal arguments we may control the algorithm to compute various kinds of normal form. For example, in $\lambda$-calculus the weak-head normal form is obtained when the only principal argument is the head of an application, while taking all arguments to be principal leads to the strong normal form.
Our algorithm is flexible in this regard, as it takes the information about principality of arguments as input. In \cref{sec:choose-determ-princ} we discuss how appropriate principal arguments can be chosen.

In specific cases normal forms are characterized by their syntactic structure, for example a normal form in the $\lambda$-calculus is an expression without $\beta$-redeces. One then proves that the normalization procedure always leads to a normal form. We are faced with a general situation in which no such syntactic characterization is available. Luckily, the algorithm never needs to recognize normal forms, although we do have to keep track of which expressions have already been subjected to the normalization procedure, so that we avoid normalizing them again.

\defemph{Normalization} is parametrized by the following data:
\begin{enumerate}
\item a standard type theory~$T$,
\item a family $\mathcal{C}$ of computation rules for~$T$ (\cref{def:type-computation-rule,def:term-computation-rule}),
\item for each symbol~$\symS$ taking~$k$ arguments, a set $\principal{\symS} \subseteq \set{1, \ldots, k}$ of its \defemph{principal arguments},
\end{enumerate}
It has three interdependent variations:
\begin{align*}
  \Theta; \Gamma &\types \plug{\BB}{e \compute e'}
  &&\text{normalize argument $e$ to $e'$,}
  \\
  \Theta; \Gamma &\types \plug{\B}{\symS(\vec{e}) \computeP \symS(\vec{e}')}
  &&\text{normalize the principal arguments of~$\symS$,}
  \\
  \Theta; \Gamma &\types \plug{\B}{e \computeC e'}
  &&\text{use a computation rule to rewrite~$e$ to~$e'$.}
\end{align*}
Specifically,
\begin{equation*}
  \Theta; \Gamma \types \isType{(A \compute A')}
  \qquad\text{and}\qquad
  \Theta; \Gamma \types t \compute t' : B
\end{equation*}
respectively express the facts that the type~$A$ normalizes to~$A'$ and the term~$t$ to~$t'$.
\Cref{fig:normalization} specifies the normalization procedure. Note that normalization is mutually recursive with equality checking, because the rule for $\computeC$ resolves equational premises using equational checking from \cref{fig:equality-checking}.
We omitted the clauses for metavariable applications, as they are analogous to symbol applications. That is, for the purposes of normalization and equality checking, an object metavariable~$\symM$ with boundary~$\BB$ and arity $\arity{\BB} = (c, n)$ is construed as a primitive symbol of syntactic class~$c$ taking~$n$ term arguments.

Normalization of arguments is syntax-directed and deterministic, and so is normalization of principal arguments. However, the applications of computation rules need not terminate, and the computation rules may be a source of non-determinism when several apply to the same expression. We discuss strategies for dealing with these issues in \cref{sec:deter-term-complet}.

\begin{figure}[htbp]
  \begin{center}
  \small
  \begin{ruleframe}
    \begin{mathpar}
      \infer{
        (\rawRule{\Xi}{\plug{\B'}{p \equiv v}}) \in \mathcal{C}
        \\
        \matches{\Xi}{p}{s}{I}
        \\\\
        \Theta; \Gamma \types \act{I} (\plug{\BB}{e \cmp e'})
        \quad\text{for $(\symM \of \plug{\BB}{e \equiv e' \bye \Box}) \in \Xi$}
      }{
        \Theta; \Gamma \types \plug{\B}{s \computeC \act{I} v}
      }

      \infer{
        {\begin{aligned}
        &\text{Rule for $\symS$ is $\rawRule{\symM_1 \of \BB_1, \ldots, \symM_n \of \BB_n}{\plug{\B'}{\symS(\genapp{\symM}_1, \ldots, \genapp{\symM}_n)}}$}
        \\
        &{\begin{aligned}
           \Theta; \Gamma \types \plug{(\finmap{\symM_1 \mto e_1, \ldots, \symM_{i-1} \mto e_{i-1}} \BB_i)}{e_i \compute e'_i}
           &\quad\text{if $i \in \principal{\symS}$} \\
           e_i = e'_i
           &\quad\text{if $i \notin \principal{\symS}$}
         \end{aligned}}
        \end{aligned}}
      }{
        \Theta; \Gamma \types \plug{\B}{\symS(\vec{e}) \computeP \symS(\vec{e}')}
      }

      \infer{
        \Theta; \Gamma \types \plug{\B}{\symS(\vec{e}) \computeP \symS(\vec{e}')} \\
        \Theta; \Gamma \types \plug{\B}{\symS(\vec{e}') \computeC e''} \\
        \Theta; \Gamma \types \plug{\B}{e'' \compute e'''}
      }{
        \Theta; \Gamma \types \plug{\B}{\symS(\vec{e}) \compute e'''}
      }

      \infer{
        \Theta; \Gamma \types \plug{\B}{\symS(\vec{e}) \computeP \symS(\vec{e}')}
        \\\\
        \notmatches{\Xi}{p}{\symS(\vec{e}')}
        \quad\text{for $(\rawRule{\Xi}{\plug{\B'}{p \equiv v}}) \in \mathcal{C}$}
      }{
        \Theta; \Gamma \types \plug{\B}{\symS(\vec{e}) \compute \symS(\vec{e}')}
      }

      \infer{
        \var{a} \notin |\Gamma|
        \\
        \Theta; \Gamma, \var{a} : A \types \plug{(\BB[\var{a}/x])}{e[\var{a}/x] \compute e'}
      }{
        \Theta; \Gamma \types \abstr{x \of A} \; \plug{\BB}{\abstr{x} e \compute \abstr{x} e'[x/\var{a}]}
      }
      \\
      \infer{
      }{
        \Theta; \Gamma \types \var{a} \compute \var{a} : A
      }

      \infer{
      }{
        \Theta; \Gamma \types \plug{\B}{\dummy \compute \dummy}
      }
    \end{mathpar}
  \end{ruleframe}
  \end{center}
  \caption{Normalization with computation rules $\mathcal{C}$ and principal arguments~$\wp$.}
  \label{fig:normalization}
\end{figure}


\subsection{Type-directed and normalization phases}
\label{sec:type-directed-equality-checking}

We are finally ready to describe equality checking, which is performed by several mutually recursive phases:
\begin{align*}
  \Theta; \Gamma &\types \plug{\BB}{e \cmp e'}
  &&\text{$e$ and $e'$ are equal arguments}
  \\
  \Theta; \Gamma &\types s \cmpE t : A
  &&\text{$s$ and $t$ are extensionally equal}
  \\
  \Theta; \Gamma &\types s \cmpN t : A
  &&\text{normalized terms $s$ and $t$ are equal}
  \\
  \Theta; \Gamma &\types A \cmpN B
  &&\text{normalized types $A$ and $B$ are equal}
\end{align*}
The first one is the general comparison of arguments~$e$ and $e'$ of an object boundary~$\BB$,
the second one the \defemph{type-directed phase} which applies extensionality rules by matching the type,
and the third the \defemph{normalization phase} which compares normalized expressions.
We review the inductive clauses specifying these, shown in \cref{fig:equality-checking}.
They are parametrized by a standard type theory $T$, a family of extensionality rules $\mathcal{E}$ over $T$, a family of computation rules~$\mathcal{C}$ over $T$, and a specification of principal arguments~$\wp$.
We again treat metavariables as primitive symbols.

\begin{figure}[htbp]
  \begin{center}
    \small
    \begin{ruleframe}
    \begin{mathpar}
      \infer{
        \Theta; \Gamma \types \isType{(A \compute A')} \\
        \Theta; \Gamma \types u \cmpE v : A'
      }{
        \Theta; \Gamma \types u \cmp v : A
      }

      \infer{
        \Theta; \Gamma \types \isType{(A \compute A')} \\
        \Theta; \Gamma \types \isType{(B \compute B')} \\
        \Theta; \Gamma \types \isType{(A' \cmpN B')}
      }{
        \Theta; \Gamma \types \isType{(A \cmp B)}
      }

      \infer{
        \var{a} \notin |\Gamma| \\
        \Theta; \Gamma, \var{a} \of A \types \plug{(\BB[\var{a}/x])}{e[\var{a}/x] \sim e'[\var{a}/x]}
      }{
        \Theta; \Gamma \types \abstr{x \of A} \; \plug{\BB}{\abstr{x}e \sim \abstr{x}e'}
      }

      \infer{
        (\rawRule{\Xi, \sym{s} \of (\Box : P), \sym{t} \of (\Box : P), \Phi}{\sym{s} \equiv \sym{t} : P})
        \in \mathcal{E}
        \\
        \matches{\Xi}{P}{A}{I}
        \\\\
        {
          \begin{aligned}
            \Theta; \Gamma &\types \act{I} (\plug{\BB}{e \cmp e'})
            &&\text{for $(\symM \of \plug{\BB}{e \equiv e' \bye \Box}) \in \Xi$}
            \\
            \Theta; \Gamma &\types \act{\finmap{I, \sym{s} \mto u, \sym{t} \mto v}} (\plug{\BB}{e \cmp e'})
            &&\text{for $(\symM \of \plug{\BB}{e \equiv e' \bye \Box}) \in \Phi$}
          \end{aligned}
        }
      }{
        \Theta; \Gamma \types u \cmpE v : A
      }

      \infer{
        \notmatches{\Xi}{P}{A}
        \quad\text{for $(\rawRule{\Xi, \sym{s} \of (\Box : P), \sym{t} \of (\Box : P), \Phi}{\sym{s} \equiv \sym{t} : P}) \in \mathcal{E}$}
        \\\\
        \Theta; \Gamma \types u \compute u' : A \\
        \Theta; \Gamma \types v \compute v' : A \\
        \Theta; \Gamma \types u' \cmpN v' : A \\
      }{
        \Theta; \Gamma \types u \cmpE v : A
      }

      \infer{
      }{
        \Theta; \Gamma \types \var a \cmpN \var a : A
      }

      \infer{
        {\begin{aligned}
        &\text{Rule for $\symS$ is $\rawRule{\symM_1 \of \BB_1, \ldots, \symM_n \of \BB_n}{\plug{\B'}{\symS(\genapp{\symM}_1, \ldots, \genapp{\symM}_n)}}$}
        \\
        &{\begin{aligned}
           \Theta; \Gamma &\types \plug{(\act{\finmap{\symM_1 \mto e_1, \ldots, \symM_{i-1} \mto e_{i-1}}} \BB_i)}{e_i \cmpN e'_i}
           &&\quad\text{if $i \in \principal{\symS}$} \\
           \Theta; \Gamma &\types \plug{(\act{\finmap{\symM_1 \mto e_1, \ldots, \symM_{i-1} \mto e_{i-1}}} \BB_i)}{e_i \cmp e'_i}
           &&\quad\text{if $i \notin \principal{\symS}$}
         \end{aligned}}
        \end{aligned}}
      }{
        \Theta; \Gamma \types \plug{\B}{\symS(\vec{e}) \cmpN \symS(\vec{e'})}
      }
    \end{mathpar}
    \end{ruleframe}
  \end{center}
  \caption{Equality checking with extensionality rules~$\mathcal{E}$ and principal arguments~$\wp$.}
  \label{fig:equality-checking}
\end{figure}

General checking $\Theta; \Gamma \types \plug{\BB}{e \cmp e'}$ descends under abstractions. It compares types by normalizing them, as there are no extensionality rules for types. Terms are compared by the type-directed phase, where the type is first normalized so that it can be matched against extensionality rules.

The type-directed phase checks $\Theta; \Gamma \types u \cmpE v : A$ by looking for an extensionality rule that matches~$A$, and applying the rule to reduce the task to verification of the equational premises of the rule. The  clause uses the notation $\plug{\BB}{e \equiv e' \bye \Box}$, which turns an object boundary into an equation boundary, as follows:
\begin{align*}
  \plug{(\Box : A)}{s \equiv t \bye \Box}
  &= (s \equiv t : A \bye \Box),
  \\
  \plug{(\isType{\Box})}{A \equiv B \bye \Box}
  &= (A \equiv B \bye \Box),
  \\
  \plug{(\abstr{x \of A} \; \BB)}{\abstr{x} e \equiv \abstr{x} e' \bye \Box}
  &= \abstr{x \of A} (\plug{\BB}{e \equiv e' \bye \Box}).
\end{align*}
If no extensionality rule applies, the terms~$u$ and~$v$ are normalized and compared by the normalization phase.

The normalization phase compares normalized expressions~$\symS(\vec{e})$ and~$\symS(\vec{e}')$ by comparing their arguments, where the principal arguments are compared by the normalization phase because they have already been normalized, while the non-principal ones are subjected to general equality comparison.

The clauses in~\cref{fig:equality-checking} are readily turned into an equality-checking algorithm, because they are directed by the syntax of their goals. Application of extensionality rules is a possible source of non-determinism, as a type may match several extensionality rules, and also a source of non-termination, as there is no guarantee that eventually no extensionality rules will be applicable. We discuss strategies for dealing with these issues in~\cref{sec:deter-term-complet}.


\subsection{Soundness of equality checking}
\label{sec:sound-algor}

In this section we prove that the normalization and equality checking algorithms are sound.
Because normalization and equality checking are intertwined, we prove
\cref{thm:normalization-sound} and \cref{thm:checking-sound} by mutual structural induction.

\begin{thm}[Soundness of normalization]
  \label{thm:normalization-sound}
  In a standard type theory, given a family $\mathcal{C}$ of computation rules, and a specification of principal arguments~$\wp$, the following hold, where $\BB$ and $\B$ are object boundaries:
  \begin{enumerate}[wide]
  \item
    \label{it:sound-compute}%
    If $\Theta; \Gamma \,{\types}\, \plug{\BB}{e}$ and $\Theta; \Gamma \,{\types}\, \plug{\BB}{e \compute e'}$
    then $\Theta; \Gamma \,{\types}\, \plug{\BB}{e \equiv e'}$ and $\Theta; \Gamma \,{\types}\, \plug{\BB}{e'}$.

  \item
    \label{it:sound-computeP}%
    If $\Theta; \Gamma \types \plug{\B}{e}$ and $\Theta; \Gamma \types \plug{\B}{e \computeP e'}$
    then $\Theta; \Gamma \types \plug{\B}{e \equiv e'}$ and $\Theta; \Gamma \types \plug{\B}{e'}$.
  \item
    \label{it:sound-computeC}%
    If $\Theta; \Gamma \types \plug{\B}{e}$ and $\Theta; \Gamma \types \plug{\B}{e \computeC e'}$
    then $\Theta; \Gamma \types \plug{\B}{e \equiv e'}$ and $\Theta; \Gamma \types \plug{\B}{e'}$.
  \end{enumerate}
\end{thm}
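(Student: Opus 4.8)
The plan is to prove \cref{thm:normalization-sound} together with \cref{thm:checking-sound} by a single mutual induction on the derivation of the normalization (resp.\ equality-checking) judgement that occurs as the second hypothesis, with a case analysis on its concluding rule from \cref{fig:normalization} (resp.\ \cref{fig:equality-checking}). The first hypothesis, well-typedness of the input, is carried along the recursion: each time a sub-derivation is handed to the induction hypothesis, the matching well-typedness premise will already have been produced. The two leaf cases are immediate --- $\Theta; \Gamma \types \var{a} \compute \var{a} : A$ by \rref{TT-Tm-Refl} applied to the given $\Theta; \Gamma \types \var{a} : A$, and $\Theta; \Gamma \types \plug{\B}{\dummy \compute \dummy}$ vacuously, since $\dummy$ is proof-irrelevant. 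The abstraction rule is dispatched by inverting \rref{TT-Abstr} on $\Theta; \Gamma \types \abstr{x \of A}\, \plug{\BB}{e}$ to obtain $\Theta; \Gamma \types \isType{A}$ and $\Theta; \Gamma, \var{a} \of A \types \plug{(\BB[\var{a}/x])}{e[\var{a}/x]}$, applying the induction hypothesis (first clause) to the premise, and re-abstracting the two resulting judgements with \rref{TT-Abstr}.

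The rule for $\computeP$ is a congruence argument. From $\Theta; \Gamma \types \plug{\B}{\symS(\vec{e})}$ and \cref{cor:args-derivable} one gets $\Theta; \Gamma \types \plug{(\upact{I}{i}\BB_i)}{e_i}$ for all $i$, where $I = \finmap{\symM_1 \mto e_1, \ldots, \symM_n \mto e_n}$; for principal $i$ the induction hypothesis (first clause) on the corresponding premise yields $\Theta; \Gamma \types \plug{(\upact{I}{i}\BB_i)}{e_i \equiv e_i'}$, and for non-principal $i$ this holds by \rref{TT-Ty-Refl}/\rref{TT-Tm-Refl} since $e_i = e_i'$. Feeding these together with the instantiated equational premises (again \cref{cor:args-derivable}) into the economic form of the congruence rule of \cref{def:congruence-rule} for $\symS$ gives $\Theta; \Gamma \types \plug{(\act{I}\B')}{\symS(\vec{e}) \equiv \symS(\vec{e}')}$, where $\B'$ is the conclusion boundary in the rule for $\symS$. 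For a term former, $\act{I}\B' = (\Box : \natty{\Theta; \Gamma}{\symS(\vec{e})})$ by the definition of the natural type, so we convert along $\natty{\Theta; \Gamma}{\symS(\vec{e})} \equiv C$ --- obtained from \cref{cor:natty-derivable,thm:uniqueness-of-typing} --- with \rref{TT-Conv-Eq} to land in the given boundary $\B = (\Box : C)$; for a type former no conversion is needed. Well-typedness of $\symS(\vec{e}')$ then follows by \cref{thm:presup-nested}. The compound rule concluding $\symS(\vec{e}) \compute e'''$ is treated by three successive applications of the induction hypothesis (the second, then third, then first clause), each supplied with the well-typedness produced by the previous step, chaining the three equalities with \rref{TT-Ty-Tran}/\rref{TT-Tm-Tran}; the rule concluding $\symS(\vec{e}) \compute \symS(\vec{e}')$ when no computation rule matches is just the second clause, the $\notmatches{}{}{}$ side-condition being irrelevant to soundness.

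The substantial case is the rule for $\computeC$: we are given $\Theta; \Gamma \types \plug{\B}{s}$, a computation rule $(\rawRule{\Xi}{\plug{\B'}{p \equiv v}}) \in \mathcal{C}$, a matching instantiation $I$ with $\act{I}p = s$, and $\cmp$-derivations of the equational premises $\act{I}(\plug{\BB}{e \cmp e'})$. By \cref{def:type-computation-rule,def:term-computation-rule} the associated object rule $\rawRule{\Xi}{\plug{\B'}{p}}$ is object-invertible, and its conclusion instantiated by $I$ is derivable: for a type rule this is exactly $\Theta; \Gamma \types \plug{\B}{s}$, and for a term rule it is $\Theta; \Gamma \types s : \act{I}(\natty{\Xi; \emptyCtx}{p})$, which by \cref{prop:natty-inst} (applicable precisely because $p$ is a term symbol application, not a metavariable) equals $\Theta; \Gamma \types s : \natty{\Theta; \Gamma}{s}$ and holds by \cref{cor:natty-derivable}. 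We then show that $I$ is a \emph{derivable} instantiation of $\Xi$ via \cref{prop:object-invertible-promote}, whose hypotheses demand that every equational premise of the rule, filled by $I$, be derivable as a genuine judgemental equality; these are supplied by the mutual induction hypothesis \cref{thm:checking-sound} applied to the given $\cmp$-derivations, processing the equational metavariables of $\Xi$ in order, so that the first $j-1$ equalities, once established, instantiate (via the appropriate restriction of $I$) the well-formed boundary of the $j$-th premise --- well-formed because the rule is finitary, so $\types \isExt{\Xi}$ --- giving the well-typedness of its two sides that \cref{thm:checking-sound} requires. Once $I$ is known derivable, derivability of the computation rule qua judgement and \cref{thm:admiss-of-inst} yield $\Theta; \Gamma \types \act{I}(\plug{\B'}{p \equiv v}) = \plug{(\act{I}\B')}{s \equiv \act{I}v}$; realigning $\act{I}\B'$ with $\B$ by the same natural-type conversion as in the $\computeP$ case gives $\Theta; \Gamma \types \plug{\B}{s \equiv \act{I}v}$, and $\Theta; \Gamma \types \plug{\B}{\act{I}v}$ follows by \cref{thm:presup-nested}.

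The main obstacle is exactly the bookkeeping in the $\computeC$ case: the well-typedness side-conditions that \cref{thm:checking-sound} attaches to each equational premise must be discharged before that premise is used, which is what forces the in-order treatment above and what makes the argument depend essentially on the computation rule being finitary and on the left-hand side $p$ being a genuine symbol application (so that \cref{prop:natty-inst} is available); \cref{ex:ext-rules-no-equations} shows what goes wrong without the latter. Everything else reduces to reflexivity, congruence, transitivity, conversion, and inversion of \rref{TT-Abstr} and of the symbol rules (the latter via \cref{thm:inversion,cor:args-derivable,cor:natty-derivable}). Since \cref{thm:checking-sound} is part of the same induction, its cases --- general comparison, the type-directed phase, and the normalization phase --- interleave with the above, and the extensionality-rule clause there is handled exactly as the $\computeC$ case here, again going through object-invertibility of the associated type rule $\rawRule{\Theta}{\isType{A}}$.
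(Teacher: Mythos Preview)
Your proposal is correct and follows essentially the same approach as the paper: mutual structural induction with \cref{thm:checking-sound}, case analysis on the rules of \cref{fig:normalization}, congruence for $\computeP$ via \cref{cor:args-derivable} and \cref{def:congruence-rule}, transitivity chaining for the compound $\compute$ rule, and object-invertibility plus \cref{prop:object-invertible-promote} for $\computeC$. The minor variations---using \cref{thm:presup-nested} rather than re-deriving via \cref{lem:instantiations-equal-derivable}, and phrasing the $\computeC$ conversion as ``the same natural-type conversion'' (strictly speaking it is a direct appeal to \cref{thm:uniqueness-of-typing} on $s : \act{I}B$ and $s : A$, without going through $\natty{}{}$)---do not affect correctness.

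You are in fact more explicit than the paper on one point: the paper simply asserts that \cref{thm:checking-sound} yields the equational premises, without discharging the well-typedness hypotheses that theorem carries. Your in-order processing of the equational metavariables is the right idea for closing this gap, though note that the restriction $\upto{I}{i_j}$ whose derivability you need also contains \emph{object} premises with index below $i_j$; their derivability ultimately comes from the same object-invertibility machinery (via the derivability of $\residueInst{I}$ that underlies \cref{prop:object-invertible-promote}), so the bookkeeping is slightly heavier than your sketch suggests, but the argument goes through.
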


\begin{proof}
  We establish soundness of the rules from \cref{fig:normalization} by mutual structural induction on the derivations.
  Derivability of $\Theta; \Gamma \types \plug{\BB}{e'}$ in (\ref{it:sound-compute}) and of $\Theta; \Gamma \types \plug{\B}{e'}$ in (\ref{it:sound-computeP}) and~(\ref{it:sound-computeC}) follows already from \cref{prop:presuppositivity}, but we include these nonetheless as they will be needed in \cref{thm:checking-sound}.

  \medskip\noindent\emph{Part (\ref{it:sound-compute}):}
  The case of free variables follows by reflexivity and the variable rule.

  If the derivation ends with
  \begin{equation*}
    \infer{
      \var{a} \notin |\Gamma|
      \\
      \Theta; \Gamma, \var{a} \of A \types \plug{(\BB[\var{a}/x])}{e[\var{a}/x] \compute e'}
    }{
      \Theta; \Gamma \types \abstr{x \of A} \; \plug{\BB}{\abstr{x} e \compute \abstr{x} e'[x/\var{a}]}
    }
  \end{equation*}
  then by induction hypothesis
  \begin{align*}
    \Theta; \Gamma, \var{a} \of A &\types \plug{(\BB[\var{a}/x])}{e'}, \\
    \Theta; \Gamma, \var{a} \of A &\types \plug{(\BB[\var{a}/x])}{e[\var{a}/x] \equiv e'}.
  \end{align*}
  We may apply \rref{TT-Abstr} to these, because $\Theta; \Gamma \types \isType{A}$
  holds by inversion on the assumption $\Theta; \Gamma \types \abstr{x \of A} \; \plug{\BB}{\abstr{x}e}$.

  If the derivation ends with
  \begin{equation*}
    \infer{
      \Theta; \Gamma \types \plug{\B}{\symS(\vec{e}) \computeP \symS(\vec{e}')}
      \\\\
      \notmatches{\Xi}{p}{\symS(\vec{e}')}
      \quad\text{for $(\rawRule{\Xi}{\plug{\B'}{p \equiv v}}) \in \mathcal{C}$}
    }{
      \Theta; \Gamma \types \plug{\B}{\symS(\vec{e}) \compute \symS(\vec{e}')}
    }
  \end{equation*}
  then the claim follows by the induction hypothesis (\ref{it:sound-computeP}) for the first premise.

  The remaining case is
  \begin{equation*}
    \infer{
      \Theta; \Gamma \types \plug{\B}{\symS(\vec{e}) \computeP \symS(\vec{e}')} \\
      \Theta; \Gamma \types \plug{\B}{\symS(\vec{e'}) \computeC e''} \\
      \Theta; \Gamma \types \plug{\B}{e'' \compute e'''}
    }{
      \Theta; \Gamma \types \plug{\B}{\symS(\vec{e}) \compute e'''}
    }
  \end{equation*}
  The induction hypothesis for the last premise secures $\Theta; \Gamma \types \plug{\B}{e'''}$,
  while the induction hypotheses for all three premises yield
  \begin{equation*}
    \Theta; \Gamma \types \plug{\B}{\symS(\vec{e}) \equiv \symS(\vec{e}')}, \qquad
    \Theta; \Gamma \types \plug{\B}{\symS(\vec{e'}) \equiv e''}, \qquad
    \Theta; \Gamma \types \plug{\B}{e'' \equiv e'''}.
  \end{equation*}
  We may string these together using transitivity to derive $\Theta; \Gamma \types \plug{\B}{\symS(\vec{e}) \equiv e'''}$.

  \medskip\noindent\emph{Part (\ref{it:sound-computeP}):}
  Suppose the rule for $\symS$ is
  \begin{equation*}
    \rawRule
      {\symM_1 \of \BB_1, \ldots, \symM_n \of \BB_n}{\plug{\B'}
      {\symS(\genapp{\symM}_1, \ldots, \genapp{\symM}_n)}},
  \end{equation*}
  and consider normalization of principal arguments
  \begin{equation*}
    \infer{
      {\begin{aligned}
         \Theta; \Gamma \types \plug{(\upact{I}{i} \BB_i)}{e_i \compute e'_i}
         &\quad\text{if $i \in \principal{\symS}$} \\
         e_i = e'_i
         &\quad\text{if $i \notin \principal{\symS}$}
       \end{aligned}}
    }{
      \Theta; \Gamma \types \plug{\B}{\symS(\vec{e}) \computeP \symS(\vec{e}')}
    }
  \end{equation*}
  where $I = \finmap{\symM_1 \mto e_1, \ldots, \symM_n \mto e_n}$.
  For $i = 1, \ldots, n$, we have
  \begin{equation*}
    \Theta; \Gamma \types \plug{(\upact{I}{i} \BB_i)}{e_i \equiv e'_i}
    \qquad\text{and}\qquad
    \Theta; \Gamma \types \plug{(\upact{I}{i} \BB_i)}{e_i}.
  \end{equation*}
  Indeed, for $i \in \principal{\symS}$ the above are just the induction hypotheses of a premise, while for
  $i \notin \principal{\symS}$ they respectively hold by reflexivity and an application of
  \cref{cor:args-derivable} to $\Theta; \Gamma \types \plug{\B}{\symS(\vec{e})}$.
  Therefore, the instantiation $J = \finmap{\symM_1 \mto e_1', \ldots, \symM_n \mto e_n'}$ is judgementally equal to~$I$, and because~$I$ is derivable, $J$ is derivable by \cref{lem:instantiations-equal-derivable}. From these facts
  we conclude
  \begin{align*}
    \Theta; \Gamma &\types \plug{(\act{I} \B')}{\symS(\vec{e}) \equiv \symS(\vec{e}')}
    &&\text{by the congruence rule for~$S$},\\
    \Theta; \Gamma &\types \plug{(\act{J} \B')}{\symS(\vec{e}')}
    &&\text{by the rule for~$S$}.
  \end{align*}
  If $\B' = (\isType{\Box})$, we are done. If $\B' = (\Box : A)$ and $\B = (\Box : B)$ then we derive $\Theta; \Gamma \types \act{I} A \equiv \act{J} A$ by \cref{thm:eq-inst-admit} and $\Theta ; \Gamma \types \act{I} A \equiv B$ by \cref{thm:uniqueness-of-typing} on $\Theta; \Gamma \types \plug{\B}{\symS(\vec{e})}$ and convert the judgements along them.

  \medskip\noindent\emph{Part (\ref{it:sound-computeC}):}
  Consider an application of a type computation rule
  \begin{equation*}
    \infer{
      (\rawRule{\Xi}{P \equiv B}) \in \mathcal{C}
      \\
      \matches{\Xi}{P}{A}{I} \\\\
      \Theta; \Gamma \types \act{I} (\plug{\BB}{e \cmp e'})
      \quad\text{for $(\symM \of \plug{\BB}{e \equiv e' \bye \Box}) \in \Xi$}
    }{
      \Theta; \Gamma \types A \computeC \act{I} B
    }
  \end{equation*}
  \Cref{thm:checking-sound} ensures $\Theta; \Gamma \types \act{I} (\plug{\BB}{e \equiv e'})$ for every $(\symM \of \plug{\BB}{e \equiv e' \bye \Box}) \in \Xi$.
  Therefore, since $\rawRule{\Xi}{\isType{P}}$ is object-invertible and $\Theta; \Gamma \types \isType{\act{I} P}$ has been assumed (note that $\act{I} P = A$), it follows by~\cref{prop:object-invertible-promote} that $I$ is derivable. We now instantiate the computation rule $\rawRule{\Xi}{P \equiv B}$ by $I$ to get
  $\Theta; \Gamma \types A \equiv \act{I} B$ and appeal to \cref{prop:presuppositivity} for $\Theta; \Gamma \types \isType{\act{I} B}$.

  It remains to establish the soundness of a derivation ending with a term computation rule
  \begin{equation*}
    \infer{
      (\rawRule{\Xi}{p \equiv v : B}) \in \mathcal{C}
      \\
      \matches{\Xi}{p}{s}{I}
      \\\\
      \Theta; \Gamma \types \act{I} (\plug{\BB}{e \cmp e'})
      \quad\text{for $(\symM \of \plug{\BB}{e \equiv e' \bye \Box}) \in \Xi$}
    }{
      \Theta; \Gamma \types s \computeC \act{I} v : A
    }
  \end{equation*}
  \Cref{thm:checking-sound} ensures $\Theta; \Gamma \types \act{I} (\plug{\BB}{e \equiv e'})$ for every $(\symM \of \plug{\BB}{e \equiv e' \bye \Box}) \in \Xi$.
  Observe that since by~\cref{def:term-computation-rule} $p$ is a term symbol application, $\mv{\natty{\Xi; \emptyCtx}{p}} \subseteq \mv{p}$ and  $\act{I} p = s$ imply $\act{I} (\natty{\Xi; \emptyCtx}{p}) = \natty{\Theta; \Gamma}{s}$ by~\cref{prop:natty-inst}.
  Because $\Theta; \Gamma \types s : A$ is derivable, so is $\Theta; \Gamma \types s : \natty{\Theta; \Gamma}{s}$ by \cref{cor:natty-derivable}, which equals $\Theta; \Gamma \types \act{I} p : \act{I} (\natty{\Theta; \Gamma}{p})$.
  We may apply~\cref{prop:object-invertible-promote} to the object-invertible rule $\rawRule{\Xi}{p : \natty{\Theta; \Gamma}{p}}$ to establish that $I$ is derivable. By instantiating the computation rule $\rawRule{\Xi}{p \equiv v : B}$  with $I$ we obtain
  \begin{equation*}
    \Theta; \Gamma \types s \equiv \act{I} v : \act{I} B
  \end{equation*}
  and convert it along $\Theta; \Gamma \types \act{I} B \equiv A$ to the desired form, because \cref{prop:presuppositivity} implies $\Theta; \Gamma \types s : \act{I} B$ and \cref{thm:uniqueness-of-typing} that $\Theta; \Gamma \types \act{I} B \equiv A$.
  The last claim follows once again from \cref{prop:presuppositivity}.
\end{proof}

\begin{thm}[Soundness of equality checking]
  \label{thm:checking-sound}%
  In a standard type theory, given families $\mathcal{C}$ and $\mathcal{E}$ of computation and extensionality rules, and a specification of principal arguments~$\wp$, the following hold, where $\BB$ is an object boundary:
  \begin{enumerate}
  \item
    \label{it:sound-cmp}%
    $\Theta; \Gamma \types \plug{\BB}{e \equiv e'}$ holds if
    \begin{equation*}
      \Theta; \Gamma \types \plug{\BB}{e},
      \qquad
      \Theta; \Gamma \types \plug{\BB}{e'},
      \quad\text{and}\quad
      \Theta; \Gamma \types \plug{\BB}{e \cmp e'}.
    \end{equation*}

  \item
    \label{it:sound-cmpE}%
    $\Theta; \Gamma \types u \equiv v : A$ holds if
    \begin{equation*}
    \Theta; \Gamma \types u : A,
    \qquad
    \Theta; \Gamma \types v : A,
    \quad\text{and}\quad
    \Theta; \Gamma \types u \cmpE v : A.
    \end{equation*}

  \item
    \label{it:sound-cmpN-ty}%
    $\Theta; \Gamma \types A \equiv B$ holds if
    \begin{equation*}
      \Theta; \Gamma \types \isType{A},
      \qquad
      \Theta; \Gamma \types \isType{B},
      \quad\text{and}\quad
      \Theta; \Gamma \types A \cmpN B.
    \end{equation*}

  \item
    \label{it:sound-cmpN-tm}%
    $\Theta; \Gamma \types u \equiv v : A$ holds if
    \begin{equation*}
      \Theta; \Gamma \types u : A,
      \qquad
      \Theta; \Gamma \types v : A,
      \quad\text{and}\quad
      \Theta; \Gamma \types u \cmpN v : A.
    \end{equation*}

  \end{enumerate}
\end{thm}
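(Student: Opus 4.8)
The plan is to prove \cref{thm:checking-sound} by mutual structural induction together with \cref{thm:normalization-sound}, proceeding by cases on the last rule used in the derivation of the relevant $\cmp$, $\cmpE$, or $\cmpN$ judgement from \cref{fig:equality-checking}. In every case the presuppositions named in the statement (derivability of the two sides) are exactly what is needed to launch the induction hypotheses, so the recipe is uniform: from the presuppositions of the goal extract the presuppositions of each premise — by inversion, by \cref{cor:args-derivable}, or by \cref{cor:natty-derivable} — then apply the appropriate induction hypothesis to upgrade a $\cmp/\cmpE/\cmpN$ premise to an honest judgemental equality, and recombine with the structural rules. For part~(\ref{it:sound-cmp}) with a term boundary $\Box : A$, the clause first normalizes $A \compute A'$ and then checks $u \cmpE v : A'$: \cref{thm:normalization-sound}(\ref{it:sound-compute}) supplies $\Theta;\Gamma \types A \equiv A'$ and $\Theta;\Gamma \types \isType{A'}$, we convert $u$ and $v$ to $A'$ with \rref{TT-Conv-Tm}, invoke part~(\ref{it:sound-cmpE}) to get $u \equiv v : A'$, and convert back along $A' \equiv A$. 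For a type boundary the clause normalizes both types and checks $A' \cmpN B'$; \cref{thm:normalization-sound} yields $A \equiv A'$, $B \equiv B'$ and well-formedness of $A', B'$, part~(\ref{it:sound-cmpN-ty}) gives $A' \equiv B'$, and \rref{TT-Ty-Tran}/\rref{TT-Ty-Sym} assemble $A \equiv B$. The abstraction clause is dispatched by inverting the two presuppositions to obtain $\isType{A}$ and the un-abstracted presuppositions over $\Gamma, \var{a} \of A$, applying the induction hypothesis, and re-abstracting with \rref{TT-Abstr}.

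For part~(\ref{it:sound-cmpE}): when no extensionality rule matches, the clause normalizes $u \compute u'$ and $v \compute v'$ and checks $u' \cmpN v' : A$; \cref{thm:normalization-sound}(\ref{it:sound-compute}) gives $u \equiv u' : A$, $v \equiv v' : A$ together with $\Theta;\Gamma \types u' : A$ and $\Theta;\Gamma \types v' : A$, part~(\ref{it:sound-cmpN-tm}) gives $u' \equiv v' : A$, and transitivity assembles $u \equiv v : A$. When the clause applies an extensionality rule $\rawRule{\Xi, \sym{s}\of(\Box : P), \sym{t}\of(\Box : P), \Phi}{\sym{s} \equiv \sym{t} : P} \in \mathcal{E}$ with $\matches{\Xi}{P}{A}{I}$, the goal is to show that the composite instantiation $\hat I = \finmap{I, \sym{s}\mto u, \sym{t}\mto v}$, with the equational metavariables of $\Xi$ and $\Phi$ mapped to dummies, is derivable; then instantiating the rule yields $\Theta;\Gamma \types u \equiv v : \act{I}P = A$. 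To see $\hat I$ is derivable: \cref{prop:presuppositivity} gives $\Theta;\Gamma \types \isType{A} = \isType{\act{I}P}$; since $\rawRule{\Xi}{\isType{P}}$ is object-invertible its residual instantiation is derivable, which makes the object part of $I$ derivable and, via \cref{thm:admiss-of-inst} applied to the boundaries that are well-formed because the rule is finitary, makes the instantiated boundaries of the equational premises of $\Xi$ well-formed; the equational premises themselves, $\Theta;\Gamma \types \act{I}(\plug{\BB}{e \equiv e'})$, then follow from the $\cmp$-premises of the clause by part~(\ref{it:sound-cmp}); and \cref{prop:object-invertible-promote} upgrades all of this to derivability of $I$. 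Derivability of $\sym{s}\mto u$ and $\sym{t}\mto v$ is immediate from $\Theta;\Gamma \types u : A$, $\Theta;\Gamma \types v : A$, and the equational premises of $\Phi$ are discharged in the same manner by part~(\ref{it:sound-cmp}), again using \cref{thm:admiss-of-inst} on the boundaries of $\Phi$.

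Parts~(\ref{it:sound-cmpN-ty}) and~(\ref{it:sound-cmpN-tm}): treating metavariables as primitive symbols, the only applicable clause compares $\symS(\vec{e})$ with $\symS(\vec{e}')$ under a common head. Applying \cref{cor:args-derivable} to the two presuppositions makes $\Theta;\Gamma \types \plug{(\upact{I}{i}\BB_i)}{e_i}$ and $\Theta;\Gamma \types \plug{(\upact{J}{i}\BB_i)}{e_i'}$ derivable for all $i$, where $I = \finmap{\symM_k \mto e_k}$ and $J = \finmap{\symM_k \mto e_k'}$. For non-principal $i$, part~(\ref{it:sound-cmp}) turns the $\cmp$-premise into $\plug{(\upact{I}{i}\BB_i)}{e_i \equiv e_i'}$; for principal $i$, parts~(\ref{it:sound-cmpN-ty})/(\ref{it:sound-cmpN-tm}) do the same (descending under abstractions, as in part~(\ref{it:sound-cmp}), when $\BB_i$ is abstracted). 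Thus $I$ and $J$ are judgementally equal derivable instantiations, so the congruence rule for $\symS$ (\cref{def:congruence-rule}, or \rref{TT-Meta-Congr} for a metavariable head) gives $\Theta;\Gamma \types \plug{(\act{I}\B')}{\symS(\vec{e}) \equiv \symS(\vec{e}')}$: this is already the goal in the type case, and in the term case it sits at $\act{I}\B' = \natty{\Theta;\Gamma}{\symS(\vec{e})}$, which \cref{thm:uniqueness-of-typing} (with \cref{cor:natty-derivable}) shows is judgementally equal to $A$, so a conversion finishes. The variable clause of part~(\ref{it:sound-cmpN-tm}) is just \rref{TT-Tm-Refl}.

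The main obstacle is the extensionality case of part~(\ref{it:sound-cmpE}), together with the computation-rule cases of \cref{thm:normalization-sound}(\ref{it:sound-computeC}) with which it is interleaved: proving that the matching instantiation produced by pattern matching is actually \emph{derivable}. This braids object-invertibility (residual instantiations and \cref{prop:object-invertible-promote}) together with the as-yet-unproved part-(\ref{it:sound-cmp}) induction hypothesis used to discharge the rule's equational premises, and it forces one to be scrupulous about the presuppositions those induction-hypothesis invocations demand — obtained from \cref{thm:admiss-of-inst} applied to the boundaries guaranteed well-formed by finitariness and \cref{prop:presuppositivity}, using derivability of the portion of the instantiation already constructed. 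Verifying that the whole mutual induction is well-founded — structural on derivations in the combined deductive systems of \cref{fig:normalization,fig:equality-checking} — is the remaining point that needs care.
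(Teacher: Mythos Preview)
Your proposal is correct and follows essentially the same route as the paper's proof: mutual structural induction with \cref{thm:normalization-sound}, case analysis on the last clause of \cref{fig:equality-checking}, and the same use of \cref{prop:presuppositivity}, \cref{cor:args-derivable}, \cref{thm:uniqueness-of-typing}, \cref{prop:object-invertible-promote}, and the congruence rules to assemble the conclusions. You are in fact more explicit than the paper about the delicate point in the extensionality case of part~(\ref{it:sound-cmpE})---namely, that invoking the induction hypothesis of part~(\ref{it:sound-cmp}) on the equational premises requires first obtaining their object presuppositions, and that this is supplied by derivability of the residual instantiation (object-invertibility) together with finitariness of the rule and \cref{thm:admiss-of-inst}; the paper simply writes ``induction hypotheses for the premises give'' the required equalities.

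One small point you should make explicit when you write this up, since both you and the paper leave it implicit: in parts~(\ref{it:sound-cmpN-ty}) and~(\ref{it:sound-cmpN-tm}), the presupposition you need for the $i$-th induction hypothesis is $\Theta;\Gamma \types \plug{(\upact{I}{i}\BB_i)}{e_i'}$, whereas \cref{cor:args-derivable} applied to $\symS(\vec{e}')$ gives you $e_i'$ at the boundary $\upact{J}{i}\BB_i$. Bridging these requires processing the arguments $i = 1,\ldots,n$ in order, using the already-established equalities $e_j \equiv e_j'$ for $j < i$ to see that $\upto{I}{i}$ and $\upto{J}{i}$ are judgementally equal, and then appealing to \cref{thm:bdry-fill-eq-stuff} (or \cref{lem:instantiations-equal-derivable}) to transport $e_i'$ to the $I$-instantiated boundary. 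This is routine, but it is the one place where the argument is not a single-shot application of a lemma.
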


\begin{proof}
  We proceed by mutual structural induction on the derivation.

  \medskip\noindent\emph{Part (\ref{it:sound-cmp}):}
  Consider a derivation ending with an abstraction
  \begin{equation*}
    \infer{
      \var{a} \notin |\Gamma| \\
      \Theta; \Gamma, \var{a} \of A \types \plug{(\BB[\var{a}/x])}{e[\var{a}/x] \sim e'[\var{a}/x]}
    }{
      \Theta; \Gamma \types \abstr{x \of A} \; \plug{\BB}{\abstr{x}e \sim \abstr{x}e'}
    }
  \end{equation*}
  By inverting the assumptions we get
  \begin{equation*}
    \Theta; \Gamma, \var{a} \of A \types \plug{\BB[\var{a}/x]}{e[\var{a}/x]}
    \qquad\text{and}\qquad
    \Theta; \Gamma, \var{a} \of A \types \plug{\BB[\var{a}/x]}{e'[\var{a}/x]},
  \end{equation*}
  as well as $\Theta; \Gamma \types \abstr{x \of A} \; \plug{\BB}{\abstr{x}e}$.
  Now the induction hypothesis for the premise yields
  \begin{equation*}
    \Theta; \Gamma, \var{a} \of A \types \plug{(\BB[\var{a}/x])}{e[\var{a}/x] \equiv e'[\var{a}/x]},
  \end{equation*}
  which we may abstract with \rref{TT-Abstr}.

  If the derivation ends with
  \begin{equation*}
    \infer{
      \Theta; \Gamma \types \isType{(A \compute A')} \\
      \Theta; \Gamma \types \isType{(B \compute B')} \\
      \Theta; \Gamma \types \isType{(A' \cmpN B')}
    }{
      \Theta; \Gamma \types \isType{(A \cmp B)}
    }
  \end{equation*}
  then \cref{thm:normalization-sound} applied to the first two premises gives
  \begin{align*}
    \Theta; \Gamma &\types A \equiv A', &
    \Theta; \Gamma &\types \isType{A'}, \\
    \Theta; \Gamma &\types B \equiv B', &
    \Theta; \Gamma &\types \isType{B'},
  \end{align*}
  and then the induction hypothesis for the last premise
  $\Theta; \Gamma \types A' \equiv B'$.
  From these we may derive $\Theta; \Gamma \types A \equiv B$ easily.

  Suppose the derivation ends with
  \begin{equation*}
    \infer{
      \Theta; \Gamma \types \isType{(A \compute A')} \\
      \Theta; \Gamma \types u \cmpE v : A'
    }{
      \Theta; \Gamma \types u \cmp v : A
    }
  \end{equation*}
  By \cref{prop:presuppositivity} applied to the assumption we see that $\Theta; \Gamma \types \isType{A}$,
  hence we may apply \cref{thm:normalization-sound} to the first premise and get
  \begin{equation*}
    \Theta; \Gamma \types A \equiv A'
    \qquad\text{\and}\qquad
    \Theta; \Gamma \types \isType{A'}
  \end{equation*}
  We convert the assumptions along the above equation to
  \begin{equation*}
    \Theta; \Gamma \types u : A'
    \qquad\text{and}\qquad
    \Theta; \Gamma \types v : A'
  \end{equation*}
  so that we may apply the induction hypothesis to the second premise and obtain
  $\Theta; \Gamma \types u \equiv v : A'$.
  One more conversion is then needed to derive $\Theta; \Gamma \types u \equiv v : A$.

  \medskip\noindent\emph{Part (\ref{it:sound-cmpE}):}
  If the derivation ends with
  \begin{equation*}
    \infer{
      \notmatches{\Xi}{A}{P}
      \quad\text{for $(\rawRule{\Xi, \sym{s} \of (\Box : P), \sym{t} \of (\Box : P), \Phi}{\sym{s} \equiv \sym{t} : P}) \in \mathcal{E}$}
      \\\\
      \Theta; \Gamma \types u \compute u' : A \\
      \Theta; \Gamma \types v \compute v' : A \\
      \Theta; \Gamma \types u' \cmpN v' : A \\
    }{
      \Theta; \Gamma \types u \cmpE v : A
    }
  \end{equation*}
  then \cref{thm:normalization-sound} applied to the first two premises establishes
  \begin{align*}
    \Theta; \Gamma &\types u \equiv u' : A &
    \Theta; \Gamma &\types u' : A \\
    \Theta; \Gamma &\types v \equiv v' : A &
    \Theta; \Gamma &\types v' : A
  \end{align*}
  Then the induction hypothesis tells us that
  $\Theta; \Gamma \types u' \equiv v' : A$.
  It is now easy to combine the derived equalities into
  $\Theta; \Gamma \types u \equiv v : A$.

  If the derivation ends with an application of an extensionality rule
  \begin{equation*}
    \infer{
      (\rawRule{\Xi, \sym{s} \of (\Box : P), \sym{t} \of (\Box : P), \Phi}{\sym{s} \equiv \sym{t} : P})
      \in \mathcal{E}
      \\
      \matches{\Xi}{A}{P}{I}
      \\\\
      {
        \begin{aligned}
          \Theta; \Gamma &\types \act{I} (\plug{\BB}{e \cmp e'})
          &&\text{for $(\symM \of \plug{\BB}{e \equiv e' \bye \Box}) \in \Xi$}
          \\
          \Theta; \Gamma &\types \act{\finmap{I, \sym{s} \mto u, \sym{t} \mto v}} (\plug{\BB}{e \cmp e'})
          &&\text{for $(\symM \of \plug{\BB}{e \equiv e' \bye \Box}) \in \Phi$}
        \end{aligned}
      }
    }{
      \Theta; \Gamma \types u \cmpE v : A
    }
  \end{equation*}
  then $\Theta; \Gamma \types \isType{A}$ follows from $\Theta; \Gamma \types u : A$ by \cref{prop:presuppositivity}.
  Induction hypotheses for the premises give
  \begin{align}
    \label{eq:soundness-ext-eqpremises}
    \Theta; \Gamma &\types \act{I} (\plug{\BB}{e \equiv e'})
    &&\text{for $(\symM \of \plug{\BB}{e \equiv e' \bye \Box}) \in \Xi$}
    \\
    \label{eq:soundness-ext-eqpremises-1}
    \Theta; \Gamma &\types \act{\finmap{I, \sym{s} \mto u, \sym{t} \mto v}} (\plug{\BB}{e \equiv e'})
    &&\text{for $(\symM \of \plug{\BB}{e \equiv e' \bye \Box}) \in \Phi$}
  \end{align}
  Because $\rawRule{\Xi}{\isType{P}}$ is object-invertible, and $\act{I} P = A$ and $\Theta ; \Gamma \types \isType{A}$ is derivable, by~\cref{prop:object-invertible-promote} the instantiation $I$ is derivable too.
  We extend $I$ to the instantiation
  \begin{equation*}
    J = \finmap{I, \sym{s} \mto u, \sym{t} \mto v, \Phi \mto \dummy}
  \end{equation*}
  of the premises of the extensionality rule over $\Theta; \Gamma$, where $\Phi \mto \dummy$ signifies that the metavariables of $\Phi$ are instantiated with (suitably abstracted) dummy values.
  We claim that $J$ is derivable:
  we already know that~$I$ is derivable; derivability at $\sym{s}$ and $\sym{t}$ reduces to the assumptions $\Theta; \Gamma \types u : A$ and $\Theta; \Gamma \types u : A$; and derivability at~$\Phi$ holds by the induction hypotheses~\eqref{eq:soundness-ext-eqpremises-1}.
  When we instantiate the extensionality rule with~$J$, we obtain the desired equation.

  \medskip\noindent\emph{Parts (\ref{it:sound-cmpN-ty}) and (\ref{it:sound-cmpN-tm}):}
  The variable case $\Theta; \Gamma \types \var a \cmpN \var a : A$ is trivial.

  Suppose the rule for symbol $\symS$ is
  \begin{equation*}
    \rawRule
      {\symM_1 \of \BB_1, \ldots, \symM_n \of \BB_n}
      {\plug{\B'}{\symS(\genapp{\symM}_1, \ldots, \genapp{\symM}_n)}}
  \end{equation*}
  and the derivation ends with
  \begin{equation*}
    \infer{
      {\begin{aligned}
        \Theta; \Gamma &\types \plug{(\upact{I}{i} \BB_i)}{e_i \cmpN e'_i}
        &&\quad\text{if $i \in \principal{\symS}$} \\
        \Theta; \Gamma &\types \plug{(\upact{I}{i} \BB_i)}{e_i \cmp e'_i}
        &&\quad\text{if $i \notin \principal{\symS}$}
      \end{aligned}}
    }{
      \Theta; \Gamma \types \plug{\B}{\symS(\vec{e}) \cmpN \symS(\vec{e'})}
    }
  \end{equation*}
  where $I = \finmap{\symM_1 \mto e_1, \ldots, \symM_n \mto e_n}$, and define
  $J = \finmap{\symM_1 \mto e'_1, \ldots, \symM_n \mto e'_n}$.
  We first derive
  \begin{equation}
    \label{eq:sound-cmp-1}
    \Theta; \Gamma \types \plug{(\act{I} \B')}{\symS(\vec{e}) \equiv \symS(\vec{e'})}
  \end{equation}
  by the congruence rule associated with~$\symS$, whose premises are derived as follows:
  \begin{enumerate}

  \item
    For each $i = 1, \ldots, n$ the premise $\Theta; \Gamma \types \plug{(\upact{I}{i} \BB_i)}{e_i}$ is derivable by \cref{cor:args-derivable} applied to $\Theta; \Gamma \types \plug{\B}{\symS(\vec{e})}$.
    This also shows that~$I$ is derivable.

  \item
    For each $i = 1, \ldots, n$ such that $\BB_i$ is an object boundary, the premise $\Theta; \Gamma \types \plug{(\upact{I}{i} \BB_i)}{e_i \equiv e'_i}$ is one of the induction hypotheses.
    This also shows that $I$ and $J$ are judgementally equal, therefore $J$ is derivable by \cref{lem:instantiations-equal-derivable}.

  \item For each $i = 1, \ldots, n$ the premise $\Theta; \Gamma \types \plug{(\upact{J}{i} \BB_i)}{e'_i}$ is derivable because~$J$ is derivable.
  \end{enumerate}
  If $\B = \isType{\Box}$, we are done.
  If $\B' = (\Box : A)$ and $\B = (\Box : B)$, we convert~\eqref{eq:sound-cmp-1} along $\Theta; \Gamma \types \act{I} A \equiv B$. The equation holds by \cref{thm:uniqueness-of-typing} applied to $\Theta; \Gamma \types \symS(\vec{e}) : B$ and $\Theta; \Gamma \types \symS(\vec{e}) : \act{I} A$, where the latter is derived by \cref{prop:presuppositivity} and the former by the rule for~$\symS$.
\end{proof}


\section{Discussion}
\label{sec:discussion}

The relations defined by the inductive clauses from \cref{fig:normalization,fig:equality-checking} serve as the basis of an equality checking algorithm. In order to obtain a working and useful implementation, we need to address several issues.

\subsection{Classification of rules and principal arguments}
\label{sec:choose-determ-princ}

An experienced designer of type theories is quite able to recognize computation and extensionality rules, and stitch them together by picking correct principal arguments. There is no need for such manual work, because
\cref{prop:comp-rule-suff-cond,prop:extensionality-rule-criterion} provide easily verifiable syntactic criteria for recognizing computation and extensionality rules.
The principal arguments must be chosen correctly, lest the equality checking procedure fail unnecessarily or enter an infinite loop, as shown by the following example.

\begin{exa}
\label{exa:principal-arguments}%
Consider the computation and extensionality rules for simple products shown in \cref{fig:simple-product-equality-rules}, where we ignore the linearity requirements, as they just obscure the point we wish to make.
%
%
\begin{figure}[htbp]
  \centering
\small
\begin{ruleframe}
\begin{mathpar}
  \infer{
    \types \isType{\sym{A}}
    \\
    \types \isType{\sym{B}}
    \\
    \types \sym{s} : \sym{A}
    \\
    \types \sym{t} : \sym{B}
  }{
    \types
    \sym{fst}(\sym{A}, \sym{B}, \sym{pair} (\sym{A}, \sym{B}, \sym{s}, \sym{t}))
    \equiv
    \sym{s}
    : \sym{A}
  }

  \infer{
    \types \isType{\sym{A}}
    \\
    \types \isType{\sym{B}}
    \\
    \types \sym{s} : \sym{A}
    \\
    \types \sym{t} : \sym{B}
  }{
    \types
    \sym{snd}(\sym{A}, \sym{B}, \sym{pair} (\sym{A}, \sym{B}, \sym{s}, \sym{t}))
    \equiv
    \sym{t}
    : \sym{B}
  }

  \infer{
    \types \isType{\sym{A}}
    \\
    \types \isType{\sym{B}}
    \\
    \types \sym{s} : \sym{A} \times \sym{B}
    \\
    \types \sym{t} : \sym{A} \times \sym{B}
    \\
    \types \mathsf{fst}(\sym{A}, \sym{B}, \sym{s}) \equiv \mathsf{fst}(\sym{A}, \sym{B}, \sym{t}) : \sym{A}
    \\
    \types \mathsf{snd}(\sym{A}, \sym{B}, \sym{s}) \equiv \mathsf{snd}(\sym{A}, \sym{B}, \sym{t}) : \sym{B}
  }{
    \types \sym{s} \equiv \sym{t} : \sym{A} \times \sym{B}
  }
\end{mathpar}
\end{ruleframe}
  \caption{Computation and extensionality rules for simple products}
  \label{fig:simple-product-equality-rules}
\end{figure}
%
%
Without any principal arguments, the algorithm fails to apply the first computation rule to $\sym{fst}(A, B, u)$
in case $u$ normalizes to a pair.
More ominous is the infinite loop that is entered on checking
\begin{equation*}
  \emptyExt; x \of A \times B , y \of A \times B \types x \equiv y : A \times B,
\end{equation*}
where we assume that $A$ and $B$ are already normalized.
The algorithm performs the following steps (where all judgements are placed in the variable context $\emptyExt; x \of A \times B, y \of A \times B$).
First, the extensionality phase reduces the equation to
\begin{equation*}
  \sym{fst}(A, B, x) \equiv \sym{fst}(A, B, y) : A,
  \qquad
  \sym{snd}(A, B, x) \equiv \sym{snd}(A, B, y) : B.
\end{equation*}
after which the normalization verifies the first equation by comparing
\begin{equation*}
  A \equiv A,
  \qquad
  B \equiv B,
  \qquad
  x \equiv y : A \times B.
\end{equation*}
We may short-circuit the first two equalities, but checking the third one leads back to the original one, \emph{unless} the third argument of $\sym{fst}$ is principal, in which case the algorithm persists in the normalization phase and fails immediately, as it should.
\end{exa}

The previous example suggests that we can read off the principal arguments either from extensionality rules, by looking for occurrences of the left and right-hand sides in the subsidiary equalities, or from computation rules, by inspecting the syntactic form of the left-hand side of the rule.
We have analyzed a number of standard computation and extensionality rules and identified the following strategy for automatic determination of principal arguments, which we also implemented:
\begin{quote}
  \emph{
    The $i$-th argument of~$\symS$ is principal if there is a computation rule $\rawRule{\Xi}{\plug{\B}{p \equiv v}}$ such that~$\symS(e_1, \ldots, e_n)$ appears as a sub-pattern of~$p$ and~$e_i$ is neither of the form $\symM()$ nor $\abstr{\vec{x}} \symM(\vec{x})$.
  }
\end{quote}
In many cases, among others the simply-typed $\lambda$-calculus, inductive types, and intensional Martin-Löf type theory, the strategy leads to weak head-normal forms. We postpone the pursuit of deeper understanding of this phenomenon to another time.

It would be interesting to combine principal arguments with another common technique for controlling applications of extensionality rules, namely \emph{neutral forms}. Roughly, the principal arguments would still tell which arguments are normalized, but not whether they are compared structurally. Instead, we always compare them recursively, but skip the type-directed phase in $s \equiv t : A$ when the syntactic forms of~$s$ and~$t$ are neutral, i.e., they indicate that application of extensionality rules cannot lead to further progress. For instance, when checking $x \equiv y : A \times B$ in \cref{exa:principal-arguments}, there is no benefit to applying projections to the variables~$x$ and~$y$. Each specific type theory has its own neutral forms, if any, so the user would have to describe these. In some cases it might even be possible to detect the neutral forms automatically.

\subsection{Determinism, termination and completeness}
\label{sec:deter-term-complet}

The inductive clauses in \cref{fig:normalization,fig:equality-checking} could be implemented either as proof search, or as a streamlined algorithm based on normalization. Proof assistants typically implement the latter strategy, because they work with type theories whose normalization is confluent and terminating, and equality checking requires no backtracking. We use the same strategy, so we ought to address non-determinism and non-termination.

A computation or extensionality rule cannot be the source of non-determinism on its own, because \cref{def:type-computation-rule,def:term-computation-rule,def:extensionality-rule} prescribe determinism. However, in either phase of the algorithm several rules may be applicable at the same time, which leads to non-determinism, and we saw in \cref{exa:principal-arguments} that a poor choice of principal arguments causes non-termination.
This is all quite familiar, and so are techniques for ensuring that all is well, including confluence checking and termination arguments based on well-founded relations.
While these are doubtlessly important issues, we are not addressing them because they are independent of the algorithm itself. Instead, we aim to provide equality checking that favors generality and extensibility, while still providing soundness through \Cref{thm:checking-sound,thm:normalization-sound}. In this regard we are in good company, as recent version of Agda allow potentially unsafe user-defined computation rules, a point further discussed in~\cref{sec:related-work}.

A related question is completeness of equality checking, i.e., does the algorithm succeed in checking every derivable equation? Once again, our position is the same: completeness is important, both theoretically and from a practical point of view, but is not the topic of the present paper. Numerous techniques for establishing completeness of equality checking are known, and these can be applied to any specific instantiation of our algorithm. An interesting direction to pursue would be adaptation of such techniques to our general setting.

\section{Implementation}
\label{sec:implementation}

Having laid out the algorithm, we report on our experience with its implementation in the Andromeda~2 proof assistant~\cite{andromeda-1,andromeda-site,ICMS2020-equality-checking}, in which the user may define any work in any standard type theory.
It is an LCF style proof assistant, i.e., a meta-level programming language with abstract datatypes of judgements, boundaries, and derived rules whose construction and application is controlled by a trusted nucleus (consisting of around 4200 lines of OCaml code).

The nucleus implements \defemph{context-free type theory}, a variant of type theory in which there are no contexts. Instead, each free variable is tagged with its type and each metavariable with its boundary, as explained in~\cite{bauer:_finit}. Since there are no contexts, a mechanism is needed for tracking proof-irrelevant uses of metavariables and variables, which may occur in derivations of equalities. For this purpose, equality judgements take the form
\begin{equation*}
  A \equiv B \bye \alpha
  \qquad\text{and}\qquad
  s \equiv t : A \bye \alpha
\end{equation*}
where~$\alpha$ is an \emph{assumption set} whose elements are those metavariables and variables that are used to derive the equality but do not appear in its boundary. The assumptions sets are also recorded in term conversions. As far as the equality checking algorithm is concerned, this is an annoying but inessential complication, because all conversions must be performed explicitly and carefully accounted for.

The implementation of the equality checking algorithm comprises around 1400 lines of OCaml code which reside outside of the trusted nucleus, so that each reasoning step must be passed to the nucleus for validation. The overhead of such a policy is significant, but worth paying in exchange for keeping the nucleus small and uncorrupted, at least in the initial, experimental phase.

Our rudimentary implementation is quite inefficient and cannot compete with the equality checkers found in mature proof assistants. The interesting question is not whether we could try harder to significantly speed up the algorithm, which presumably we could, but whether the design of the algorithm makes it inherently inefficient. We argue that this is not the case.
First, we may trade safety for efficiency by placing equality checking into the trusted nucleus, as many proof assistants do, so that we need not check every single step of the algorithm.
Second, even though term equality is typed, the normalization procedure is essentially untyped. Indeed, when the rules in \cref{fig:normalization} are used to normalize $\Theta; \Gamma \types t : A$ they never modify~$A$, and only ever inspect~$t$, which allows us to ignore~$A$ while rewriting~$t$. The soundness of the algorithm guarantees that the normalized term will still have type~$A$.

\section{Examples}
\label{sec:examples-code}

\begin{figure}[htbp]
  \centering
\begin{ruleframe}
\begin{lstlisting}
require eq ;;

rule Π (A type) ({x : A} B type) type ;;
rule λ (A type) ({x : A} B type) ({x : A} e : B{x}) : Π A B ;;
rule app (A type) ({x : A} B type) (s : Π A B) (a : A) : B{a} ;;

rule Π_β
  (A type) ({x:A} B type)
  ({x : A} s : B{x}) (t : A) :
  app A B (λ A B s) t ≡ s{t} : B{t} ;;

rule sym_ty (A type) (B type) (A ≡ B) : B ≡ A ;;

rule Π_β_linear
   (A₁ type) ({x:A₁} B₁ type)
   (A₂ type) ({x:A₂} B₂ type)
   ({x:A₂} s : B₂{x}) (t : A₁)
   (A₂ ≡ A₁ by ξ) ({x : A₂} B₂{x} ≡ B₁{convert x ξ} by ζ)
   : app A₁ B₁ (convert (λ A₂ B₂ s)
     (congruence (Π A₂ B₂) (Π A₁ B₁) ξ ζ)) t
     ≡ convert s{convert t (sym_ty A₂ A₁ ξ)}
               ζ{convert t (sym_ty A₂ A₁ ξ)} : B₁{t} ;;

eq.add_rule Π_β_linear ;;

rule Π_ext (A type) ({x : A} B type)
  (f : Π A B) (g : Π A B)
  ({x : A} app A B f x ≡ app A B g x : B{x})
  : f ≡ g : Π A B;;

eq.add_rule Π_ext;;
\end{lstlisting}
\end{ruleframe}
  \caption{Dependent products in Andromeda~2.}
  \label{fig:dependent-products}
\end{figure}

The example in \Cref{fig:dependent-products} shows how dependent products are formalized in Andromeda~2. The rules are direct transcriptions of the usual ones. We linearize the $\beta$-rule as shown in~\cref{exa:linearize} to make it a computation rule. We do so by explicitly converting \lstinline|λ A₂ B₂ s| along the equality \lstinline|Π A₂ B₂ ≡ Π A₁ B₁|, which holds by a congruence rule and the premises \lstinline|ξ| and \lstinline|ζ|.

The calls to \lstinline|eq.add_rule| pass equality rules to the equality checking algorithm, which employs \cref{prop:comp-rule-suff-cond,prop:extensionality-rule-criterion} to automatically classify the inputs as computation or extensionality rules. It also determines which arguments are principal by using the technique from~\cref{sec:choose-determ-princ}. In the example shown, the linearized rule \lstinline|Π_β_linear| is classified by the algorithm as computation rule, \lstinline|Π_ext| as extensionality rule, and the the third argument of \lstinline|app| is declared principal.

Many a newcomer to Martin-Löf type theory is disappointed to learn that only one of equalities $0 + n = n$ and $n + 0 = n$ holds judgementally. In fact, there is strong temptation to pass to extensional type theory just so that a more symmetric notion of equality is recovered, but then one has to give up decidable equality checking. The example in \cref{fig:addition-nat,fig:equality-type} shows how our algorithm combines the best of both worlds and demonstrates further capabilities of the implementation.

First, \Cref{fig:equality-type} shows a formalization of extensional equality types, whose distinguishing feature is the equality reflection principle~\lstinline|equality_reflection|, which states that the equality type \lstinline|Eq| reflects into judgemental equality. Instead of postulating the familiar eliminator~\lstinline|J|, it is more convenient to use an equivalent formulation that uses the judgemental uniqueness of equality proofs \lstinline|uip|, see \cref{ex:ext-rules-no-equations}. Note that \lstinline|uip| is installed as an extensionality rule into the equality checker. It is well known that equality reflection makes equality checking undecidable, so the equality checker will not be able to prove all equalities. Nevertheless, we expect it to be still quite useful and well behaved.

\begin{figure}[htbp]
  \centering
\begin{ruleframe}
\begin{lstlisting}
require eq ;;

rule Eq (A type) (a : A) (b : A) type ;;
rule refl (A type) (a : A) : Eq A a a ;;

rule equality_reflection
  (A type) (a : A) (b : A) (_ : Eq A a b)
  : a ≡ b : A ;;

rule uip (A type) (a : A) (b : A)
     (p : Eq A a b) (q : Eq A a b)
     : p ≡ q : Eq A a b ;;

eq.add_rule uip ;;
\end{lstlisting}
\end{ruleframe}
  \caption{Extensional equality type in Andromeda~2.}
  \label{fig:equality-type}
\end{figure}

\begin{figure}[htbp]
  \centering
\begin{ruleframe}
\begin{lstlisting}
rule ℕ type ;;
rule zero : ℕ ;;
rule succ (n : ℕ) : ℕ ;;

rule ℕ_ind
  ({_ : ℕ} C type) (x : C{zero})
  ({n : ℕ} {u : C{n}} f : C{succ n}) (n : ℕ)
  : C{n} ;;

rule ℕ_β_zero
  ({_ : ℕ} C type) (x : C{zero})
  ({n : ℕ} {u : C{n}} f : C{succ n})
  : ℕ_ind C x f zero ≡ x : C{zero} ;;

eq.add_rule ℕ_β_zero ;;

rule ℕ_β_succ
  ({_ : ℕ} C type) (x : C{zero})
  ({n : ℕ} {u : C{n}} f : C{succ n}) (n : ℕ)
  : ℕ_ind C x f (succ n) ≡ f{n, ℕ_ind C x f n} : C{succ n} ;;

eq.add_rule ℕ_β_succ ;;

rule (+) (_ : ℕ) (_ : ℕ) : ℕ ;;
rule plus_def (m : ℕ) (n : ℕ) :
  (m + n) ≡ ℕ_ind ({_} ℕ) m ({_ : ℕ} {u : ℕ} succ u) n : ℕ ;;

let plus_zero_right = derive (n : ℕ) →
  eq.add_locally plus_def
    (fun () → eq.prove ((n + zero) ≡ n : ℕ by ??)) ;;

eq.add_rule plus_zero_right ;;

let plus_succ = derive (m : ℕ) (n : ℕ) →
  eq.add_locally plus_def
    (fun () →
      eq.prove ((m + succ n) ≡ (succ (m + n)) : ℕ by ??)) ;;

eq.add_rule plus_succ ;;

let plus_zero_left = derive (k : ℕ) →
  let ap_succ = derive (m : ℕ) (n : ℕ) (p : Eq ℕ m n) →
    eq.add_locally (derive → equality_reflection ℕ m n p)
      (fun () → refl ℕ (succ m) : Eq ℕ (succ m) (succ n)) in
  eq.add_locally plus_def
    (fun () →
       equality_reflection ℕ (zero + k) k
         (ℕ_ind ({n} Eq ℕ (zero + n) n) (refl ℕ zero)
                ({n} {ih} ap_succ (zero + n) n ih) k)) ;;

eq.add_rule plus_zero_left ;;
\end{lstlisting}
\end{ruleframe}
  \caption{Addition for natural numbers in Andromeda~2.}
  \label{fig:addition-nat}
\end{figure}

We continue our example in \cref{fig:addition-nat} by postulating the natural numbers \lstinline|ℕ|. Everything up to the definition of addition is standard, where we also install the computation rules for the induction principle \lstinline|ℕ_ind| into the equality checker.
We then define addition by postulating a term symbol \lstinline|+| with the defining equality \lstinline|plus_def| which expresses addition by primitive recursion. We could use \lstinline|plus_def| as a global computation rule, but we choose to use it only \emph{locally}, with the help of the function~\lstinline|eq.add_locally|.

In the remainder of the code we prove \emph{judgemental} equalities
\begin{equation*}
  n + 0 \equiv n, \qquad
  m + \sym{succ}(n) \equiv \sym{succ}(m + n), \quad\text{and}\quad
  0 + n = n.
\end{equation*}
The first one is derived as \lstinline|plus_zero_right| using
\lstinline|plus_def| as a local computation rule together with \lstinline|eq.prove| which takes an equational boundary (where $\Box$ is written as \lstinline|??|) and runs the equality checking algorithm to generate a witness for it.
The second equality is derived as \lstinline|plus_succ| in much the same way.
The derivation of the third equality relies on equality reflection to convert a term
of the equality type \lstinline|Eq ℕ (zero + n) n| to the corresponding judgemental equality \lstinline|zero + n ≡ n : ℕ|.
We install all three equalities as computation rules.

In addition to proving equalities, we can also normalize terms with \lstinline|eq.normalize|, and compute strong normal forms (all arguments are principal) with \lstinline|eq.compute|.
In both cases we obtain not only the result, but also a certifying equality.
For example, when given \lstinline|succ zero + succ zero|, the normalizer outputs the weak head-normal form \lstinline|succ ((succ zero) + zero)|,
together with a certificate for the judgemental equality \lstinline|(succ zero) + (succ zero) ≡ succ ((succ zero) + zero) : ℕ|.
Because we installed both neutrality laws for~$0$ as computation rules, strong normalization reduces
\lstinline|(zero + x) + succ (succ zero + zero)| to \lstinline|succ (succ x) : ℕ|, where~\lstinline|x| is a free variable of type \lstinline|ℕ|.

\section{Related work}
\label{sec:related-work}

Designing a user-extensible equality checking algorithm for type theory is a balancing act between flexibility, safety, and automation. We compare ours to that of several proof assistants that support user-extensible equality checking.

The overall design of our algorithm is similar to the equality checking and simplification phases used in the type-reconstruction algorithm of MMT~\cite{MMT-site,MMT-paper}, a meta-meta-language for description of formal theories.
In MMT inference rules are implemented as trusted low-level executable code, which gives the system an extremely wide scope but also requires care and expertise by the user. In Andromeda~2 the user writes down the desired inference rules directly. The nucleus checks them for compliance with \cref{def:standard-type-theory} of a standard type theory before accepting them, which prevents the user from breaking the meta-theoretic properties that the nucleus relies on.

Dedukti~\cite{dedukti-site} is a type-checker founded on the logical framework~$\lambda\Pi$, extended with user-defined conversion rules. Because equality in Dedukti is based on convertibility of terms, there is no support for user-defined extensionality or $\eta$-rules.
The Dedukti rewriting system supports higher-order patterns and includes a confluence checker. We see no obstacle to adding some form of confluence checking to Andromeda~2 in the future, while support for higher-order patterns would first have to overcome lack of strengthening, see the discussion following \cref{def:pattern}.

Recent versions of the proof assistant Agda support user-definable computation rules~\cite{sprinklesAgda,cockx:TTunchained,cockx-rew}.
Like Dedukti, Agda allows higher-order patterns and provides a confluence checker.
It accepts non-linear patterns, which it linearizes and generates suitable equational premises.
In addition, it applies built-in $\eta$-rules for functions and record types during a type-directed matching phase. It seems to us that the phase could equally well use extensionality rules, which might more easily enable user-defined extensionality principles.
Agda designers point out in~\cite{cockx:TTunchained} that having \emph{local} rewrite rules would improve modularity. For example, one could parameterize code by an abstract type, together with rewrite rules it satisfies. This sort of functionality is already present in Andromeda~2, which treats all judgement forms as first-class values, so we may simply pass judgemental equalities as parameters and use them as local computation and extensionality rules.

In order to make our equality checking algorithm realistically useful, we ought to combine it with other techniques, such as existential variables, unification, and implicit arguments. Whether that can be done in full generality remains to be seen.


\bibliographystyle{alphaurl}
\bibliography{equality-checking}


\end{document}